\DeclareSymbolFont{cyrletters}{OT2}{wncyr}{m}{n}
\DeclareMathSymbol{\Sha}{\mathalpha}{cyrletters}{"58}
\newtheorem{theorem}{Theorem}[section]
\newtheorem{lemma}{Lemma}[section]
\newtheorem{corollary}{Corollary}[section]
\newtheorem{definition}{Definition}[section]
\newtheorem{proposition}{Proposition}[section]
\newtheorem{remark}{Remark}[section]
\newtheorem{examples}{Example}[section]
\begin{document}
	
\bibliographystyle{abbrv}
	
\title{Explicit Representatives and Sizes of Cyclotomic Cosets and their Application to Cyclic Codes over Finite Fields}
	
\author{Li Zhu$^{1}$, Jinle Liu$^{2}$ and Hongfeng Wu$^{2}$\footnote{Corresponding author.}
\setcounter{footnote}{-1}
\footnote{E-Mail addresses:
 lizhumath@pku.edu.cn (L. Zhu), cohomologyliu@163.com (J. Liu), whfmath@gmail.com (H. Wu)}
\\
{1.~School of Mathematical Sciences, Guizhou Normal University, Guiyang, China}
\\
{2.~College of Science, North China University of technology, Beijing, China}}

\date{}
\maketitle

\thispagestyle{plain}
\setcounter{page}{1}	

\begin{abstract}
	Cyclotomic coset is a classical notion in the theory of finite field which has wide applications in various computation problems. Let $q$ be a prime power, and $n$ be a positive integer coprime to $q$. In this paper we determine explicitly the representatives and the sizes of all $q$-cyclotomic cosets modulo $n$ in the general settings. We introduce the definition of $2$-adic cyclotomic system, which is a profinite space consists of certain compatible sequences of cyclotomic cosets. A precise characterization of the structure of the $2$-adic cyclotomic system is given, which reveals the general formula for representatives of cyclotomic cosets. With the representatives and the sizes of $q$-cyclotomic cosets modulo $n$, we improve the formulas for the factorizations of $X^{n}-1$ and of $\Phi_{n}(X)$ over $\mathbb{F}_{q}$ given in \cite{Graner}. As a consequence, we classify the cyclic codes over finite fields via giving their generator polynomials. Moreover, the self-dual cyclic codes are determined and enumerated. \\

{\bf KeyWords.}  Cyclotomic coset, Finite fields, Cyclic codes, Explicit representation\\

{\bf Mathematics Subject Classification (2000)}  11T06, 11B37, 94A60, 12Y05.
\end{abstract}

\section{Introduction}
Let $n$ be a positive integer, and $q$ be a prime power which is coprime to $n$. For each $\gamma \in \mathbb{Z}/n\mathbb{Z}$, the $q$-cyclotomic cosets modulo $n$ containing $\gamma$ is 
$$c_{n/q}(\gamma) = \{\gamma,\gamma q,\cdots,\gamma q^{\tau-1}\} \subseteq \mathbb{Z}/n\mathbb{Z},$$
where $\tau$ is the smallest integer such that $\gamma q^{\tau} \equiv \gamma \pmod{n}$. Any element in $c_{n/q}(\gamma)$ will be called a representative of $c_{n/q}(\gamma)$, the smallest among which, viewed as nonnegative integers less than $n$, will be called the leader of $c_{n/q}(\gamma)$. The order of the set $c_{n/q}(\gamma)$ will be called the size of $c_{n/q}(\gamma)$. We denote by $\mathrm{CR}_{n/q}$ the set of all $q$-cyclotomic cosets modulo $n$.

Cyclotomic coset is a basic notion which widely occurs in computation problems. Due to the fact that the $q$-cyclotomic cosets modulo $n$ completely determine the irreducible factorization of $X^{n}-1$ over $\mathbb{F}_{q}$, cyclotomic coset is a key ingredient in the theory of factorizations of polynomials over finite fields, and thus are needed for various problems in coding theory and crytography. A recent paper shows that cyclotomic cosets can be used to compute certain Gaussian sums, which has application in statistic physics. Such problems require a detailed description of cyclotomic cosets and precise computations of the associated parameters, such as the representatives, leaders, sizes and enumerations of the cosets.

Many results has been achieved under specific conditions, some of which we list as follows. Being restricted by the authors' knowledge, the list is limited. In a series of papers, for instance \cite{Chen}, \cite{Chen 2}, \cite{Sharma}, \cite{Liu}, \cite{Sharma 2}, \cite{Liu 2}, \cite{Wu 2}, etc., the representatives and sizes of the corresponding classes of cyclotomic cosets are determined in order to calculate certain constacyclic codes. In \cite{Chen 3} and \cite{Wang} the $q$-cyclotomic cosets contained in the subset $1+r\mathbb{Z}/nr\mathbb{Z}$ of $\mathbb{Z}/nr\mathbb{Z}$, where $\mathrm{gcd}(q,nr)=1$ and $r \mid q-1$, are characterized and enumerated. Applying the results on cyclotomic cosets, \cite{Chen 3} gives the enumeration of Euclidean self-dual codes, and \cite{Wang} construct several classes of $p^{h}$-LCD MDS codes. And in \cite{Yue} and \cite{Geraci}, concerning with stream cipher $m$-sequences and problems in statistic physics respectively, algorithms to calculate the leader of cyclotomic cosets are given.

The purpose of this paper is to give a characterization of the $q$-cyclotomic cosets modulo $n$, via determining an explicit full set of representatives of these cosets and calculating their sizes, that works in the general settings. The method is briefly sketched as follows. We first deal with the cases where the module $n$ is odd, and then move on to the general case. The case of odd $n$ is easier due to the following facts. Let $p_{1},\cdots,p_{s}$ be distinct odd primes not equaling to $p$, and $e_{1},\cdots,e_{s}$ be positive integers. Then there exists a $s$-tuple $(\eta_{1},\cdots,\eta_{s})$ of integers, called a primitive root system modulo $n = p_{1}^{e_{1}}\cdots p_{s}^{e_{s}}$, which satisfies that for $i = 1,\cdots,s$,
\begin{itemize}
	\item[(1)] $\eta_{i}$ is a primitive root modulo $p_{i}^{d}$ for all $d \geq 1$;
	\item[(2)] for any $j \neq i$, $\eta_{i} \equiv 1 \pmod{p_{j}^{e_{j}}}$.
\end{itemize}
A concrete construction of a primitive root system modulo $n$ is represented in Subsection \ref{sec 5}. A primitive root system $(\eta_{1},\cdots,\eta_{s})$ modulo $n$ gives rise to an explicit full set of representatives of the $q$-cyclotomic cosets modulo $n$, and as shown in the computation in Section \ref{sec 6}, this set of representatives inherits good properties from $(\eta_{1},\cdots,\eta_{s})$ to compute with. The sizes of all the $q$-cyclotomic cosets modulo $n$ are also determined.

Unfortunately, as there is no primitive root modulo $2^{d}$ for any $d \geq 3$, the form of the representatives modulo an odd integer cannot be generalized naively to the case where the module $n$ is even. Assume that $n = 2^{e_{0}}p_{1}^{e_{1}}\cdots p_{s}^{e_{s}}$, where $p_{1},\cdots,p_{s}$ are distinct odd primes different from $p$ and $e_{0},e_{1},\cdots,e_{s}$ are positive integers, and $n^{\prime} = p_{1}^{e_{1}}\cdots p_{s}^{e_{s}}$ is the maximal odd divisor of $n$. We show that the $q$-cyclotomic cosets modulo $n$ can be deduced from that modulo $n^{\prime}$ inductively as follow. For any positive integer $m$ that is coprime to $q$, if a $q$-cyclotomic coset $c_{m/q}(\gamma_{m})$, $\gamma_{m} \in \mathbb{Z}/m\mathbb{Z}$, is given, then the preimage $\pi_{2m/m}^{-1}(c_{m/q}(\gamma_{m}))$ along the projection $\pi_{2m/m}: \mathbb{Z}/2m\mathbb{Z} \rightarrow \mathbb{Z}/m\mathbb{Z}$ fits in one of the following conditions:
\begin{itemize}
	\item[(1)] If $v_{2}(\gamma_{m}) + v_{2}(q^{|c_{m/q}(\gamma_{m})|}-1) \geq v_{2}(m)+1$, then 
	$$\pi_{2m/m}^{-1}(c_{m/q}(\gamma_{m})) = c_{2m/q}(\gamma_{m}) \sqcup c_{2m/q}(\gamma_{m}+m);$$
	\item[(2)] If $v_{2}(\gamma_{m}) + v_{2}(q^{|c_{m/q}(\gamma_{m})|}-1) < v_{2}(m)+1$, then 
	$$\pi_{2m/m}^{-1}(c_{m/q}(\gamma_{m})) = c_{2m/q}(\gamma_{m}) = c_{2m/q}(\gamma_{m}+m).$$
\end{itemize}
This fact, together with the obtained results on cyclotomic cosets modulo an odd integer, gives rise to an inductive method to compute representatives and sizes of cyclotomic cosets in general. We exhibit an examples at the end of Section \ref{sec 1}.

To further depict how the $q$-cyclotomic cosets "grow" as the module is multiplied by $2$ successively, instead of the $q$-cyclotomic cosets modulo a fixed integer, we consider the projective system of the spaces of $q$-cyclotomic cosets
$$\cdots \rightarrow \mathcal{C}_{2^{2}n^{\prime}/q} \rightarrow \mathcal{C}_{2n^{\prime}/q} \rightarrow \mathcal{C}_{n^{\prime}/q}.$$
A key object to characterize is the profinite space $\varprojlim\limits_{N} \widehat{\pi}_{2^{N}n^{\prime}/n^{\prime}}^{-1}(c_{n^{\prime}/q}(\gamma))$ associated to any fixed $q$-cyclotomic coset $c_{n^{\prime}/q}(\gamma)$ modulo $n^{\prime}$, where $\widehat{\pi}_{2^{N}n^{\prime}/n^{\prime}}: \mathcal{C}_{2^{N}n^{\prime}/q} \rightarrow \mathcal{C}_{n^{\prime}/q}$ is induced from the canonical projection $\pi_{2^{N}n^{\prime}/n^{\prime}}: \mathbb{Z}/2^{N}n^{\prime}\mathbb{Z} \rightarrow \mathbb{Z}/n^{\prime}\mathbb{Z}$.
It turns out that this profinite space is equipped with rather neat structure. We prove that the elements in $\varprojlim\limits_{N} \widehat{\pi}_{2^{N}n^{\prime}/n^{\prime}}^{-1}(c_{n^{\prime}/q}(\gamma))$ can be fully determined by a specific family of $2$-adic power series $U_{d}(\gamma
)$, $d \in \mathbb{N} \cup \infty$. These formal power series $U_{d}(\gamma)$'s are called the generating series of the space $\varprojlim\limits_{N} \widehat{\pi}_{2^{N}n^{\prime}/n^{\prime}}^{-1}(c_{n^{\prime}/q}(\gamma))$. As a consequence, an explicit full set of representatives of the $q$-cyclotomic cosets modulo $n$ along with their sizes are given. The above contents constitute Section \ref{sec 1}. 

Section \ref{sec 6} is devoted to representing some applications of the results obtain in Section \ref{sec 1} on cyclic codes over finite fields. In \cite{Graner}, A. Graner gives general formulas for the irreducible factorizations of $X^{n}-1$ and of cyclotomic polynomials over $\mathbb{F}_{q}$. Based on the results in the last section and the multiple equal-difference structure of cyclotomic cosets, we give improved formula to factorize $X^{n}-1$ and $\Phi_{n}(X)$ over $\mathbb{F}_{q}$ in this section. The improvements are made in the following two aspects. With the full set of representatives and the sizes of $q$-cyclotomic cosets modulo $n$, the irreducible factors of $X^{n}-1$ and of $\Phi_{n}(X)$ can be determined precisely. And the characterization of multiple equal-difference representations of cyclotomic cosets leads to the extension fields of $\mathbb{F}_{q}$ with the lowest degrees over which the irreducible factors of $X^{n}-1$ over $\mathbb{F}_{q}$ can be further factorized into irreducible binomials, which simplify the computation to apply the formulas in practice. As an application of the irreducible factorization of $X^{n}-1$ over $\mathbb{F}_{q}$, we determine the cyclic codes over $\mathbb{F}_{q}$ via giving their generator polynomials. To deal with both the simple-rooted and the repeat-rooted cyclic codes, we allow the length $m$ of the codes to be divisible by $p$. Furthermore, we prove a criterion for a cyclic code to be self-dual, and enumerate the self-dual cyclic codes of length $m$ over $\mathbb{F}_{q}$.

\section{Preliminaries}
\subsection{Basic number theory}
Throughout this paper, unless stating otherwise, it is assumed that $p$ is a prime number, $q =p^{e}$ for a positive integer $e$, and $n$ is a positive integer that is coprime to $q$. If $n$ is factorized as $n = p_{1}^{e_{1}}\cdots p_{s}^{e_{s}}$, where $p_{1},\cdots,p_{s}$ are pairwise different prime numbers and $e_{1},\cdots,e_{s}$ are positive integers, the radical of $n$ is defined to be 
$$\mathrm{rad}(n) = p_{1}\cdots p_{s}.$$

If $m$ and $n$ are coprime integers, we denote by $\mathrm{ord}_{n}(m)$ the multiplicative order of $m$ in $(\mathbb{Z}/ n\mathbb{Z})^{\ast}$, i.e., the smallest positive integer such that
$$m^{\mathrm{ord}_{n}(m)} \equiv 1 \pmod{n}.$$

Let $\ell$ be a prime number. Denote by $v_{\ell}(n)$ the $\ell$-adic valuation of $n$, i.e., the maximal nonnegative integer such that $\ell^{v_{\ell}(n)} \mid n$. The following lift-the-exponent lemmas are well-known, which deal with the cases that $\ell$ is odd and that $\ell = 2$ respectively.

\begin{lemma}{\cite{Nezami}}\label{lem 3}
	Let $\ell$ be an odd prime number, and $m$ be an integer such that $\ell \mid m-1$. Then $v_{\ell}(m^{d}-1) = v_{\ell}(m-1) + v_{\ell}(d)$ for any positive integer $d$.
\end{lemma}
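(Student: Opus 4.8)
The plan is to reduce the general exponent $d$ to the two basic cases $\ell \nmid d$ and $d$ a power of $\ell$, and then glue them together multiplicatively. Writing $d = \ell^{b} d'$ with $\ell \nmid d'$ and $b = v_{\ell}(d)$, I would first dispose of the prime-to-$\ell$ factor $d'$, then handle the $\ell$-power factor $\ell^{b}$ by iteration, and finally combine the two estimates.

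First I would establish that $\ell \nmid d$ forces $v_{\ell}(m^{d}-1) = v_{\ell}(m-1)$. The idea is to factor $m^{d}-1 = (m-1)(m^{d-1} + \cdots + m + 1)$. Since $\ell \mid m-1$ we have $m \equiv 1 \pmod{\ell}$, so each power $m^{i} \equiv 1$ and the second factor is congruent to $d$ modulo $\ell$; as $\ell \nmid d$, this factor is an $\ell$-adic unit, whence the valuation of the product equals that of $m-1$.

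The crux, and the step I expect to be the main obstacle, is the case $d = \ell$, where I must show $v_{\ell}(m^{\ell}-1) = v_{\ell}(m-1) + 1$. I would write $m = 1 + k\ell^{a}$ with $a = v_{\ell}(m-1) \geq 1$ and $\ell \nmid k$, and expand by the binomial theorem:
$$m^{\ell} - 1 = k\ell^{a+1} + \sum_{j=2}^{\ell}\binom{\ell}{j}(k\ell^{a})^{j}.$$
The leading term $k\ell^{a+1}$ has valuation exactly $a+1$. The delicate point is to verify that every remaining term has strictly larger valuation, and this is precisely where the hypothesis that $\ell$ is odd is indispensable: for $2 \leq j \leq \ell-1$ the coefficient $\binom{\ell}{j}$ is divisible by $\ell$, contributing valuation at least $1 + ja \geq a+2$, while the final term $j = \ell$ has valuation $a\ell \geq a+2$ because $a(\ell-1) \geq 2$. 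For $\ell = 2$ the term $j = 2$ would instead tie with the leading term, which is exactly why that case needs a separate treatment. Iterating this estimate gives $v_{\ell}(m^{\ell^{b}}-1) = v_{\ell}(m-1) + b$ by induction on $b$, the inductive step applying the $d = \ell$ case to the base $m^{\ell^{b-1}}$, which still satisfies $\ell \mid m^{\ell^{b-1}}-1$.

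Finally I would combine the two ingredients. Setting $M = m^{\ell^{b}}$, the prime-to-$\ell$ case applied to $M^{d'}-1$ yields
$$v_{\ell}(m^{d}-1) = v_{\ell}(M^{d'}-1) = v_{\ell}(M-1) = v_{\ell}(m-1) + b = v_{\ell}(m-1) + v_{\ell}(d),$$
which is the claimed identity.
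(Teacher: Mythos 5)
Your argument is correct and complete. The paper itself gives no proof of this lemma --- it is quoted from the reference \cite{Nezami} as a known ``lift-the-exponent'' fact --- so there is nothing to compare against; your write-up is the standard proof. All the delicate points are handled properly: the cofactor $m^{d-1}+\cdots+1 \equiv d \pmod{\ell}$ argument for the prime-to-$\ell$ part, the binomial estimate $v_{\ell}\bigl(\binom{\ell}{j}(k\ell^{a})^{j}\bigr) \geq 1+ja \geq a+2$ for $2 \leq j \leq \ell-1$, and the boundary term $j=\ell$ with valuation $a\ell \geq a+2$, which is exactly where oddness of $\ell$ enters (and which correctly explains why the paper's Lemma 2.2 treats $\ell=2$ separately). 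The induction on $v_{\ell}(d)$ and the final multiplicative gluing are both sound.
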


\begin{lemma}{\cite{Nezami}}\label{lem 2}
	Let $m$ be an odd integer, and $d$ be a positive integer.
	\begin{itemize}
		\item[(1)] If $m \equiv 1 \pmod{4}$, then
		$$v_{2}(m^{d}-1) = v_{2}(m-1) + v_{2}(d), \  v_{2}(m^{d}+1) = 1.$$
		\item[(2)] If $m \equiv 3 \pmod{4}$ and $d$ is odd, then
		$$v_{2}(m^{d}-1) = 1, \  v_{2}(m^{d}+1) = v_{2}(m+1).$$
		\item[(3)] If $m \equiv 3 \pmod{4}$ and $d$ is even, then
		$$v_{2}(m^{d}-1) = v_{2}(m+1) + v_{2}(d), \  v_{2}(m^{d}+1) = 1.$$
	\end{itemize}
\end{lemma}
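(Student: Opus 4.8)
The plan is to reduce all three cases to a single master identity --- the $p=2$ lift-the-exponent formula in the case $m \equiv 1 \pmod{4}$ --- and to obtain the remaining statements by elementary manipulations. The cornerstone of the whole argument is the \emph{doubling identity}: for any odd integer $m$,
$$v_{2}(m^{2} - 1) = v_{2}(m-1) + v_{2}(m+1).$$
This follows at once from the factorization $m^{2} - 1 = (m-1)(m+1)$ together with the observation that $m-1$ and $m+1$ are two consecutive even numbers, so exactly one of them is divisible by $4$ while the other has $2$-adic valuation precisely $1$; hence the two valuations add.

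First I would establish case (1). Write $d = 2^{k} d'$ with $d'$ odd. For the odd part, the factorization $m^{d'} - 1 = (m-1)(m^{d'-1} + \cdots + m + 1)$ shows that the second factor is a sum of $d'$ odd terms and is therefore odd, so $v_{2}(m^{d'}-1) = v_{2}(m-1)$. For the even part I would induct on $k$ using the doubling identity applied to the tower $m^{2^{i}d'}$: since $m \equiv 1 \pmod{4}$ forces $m^{j} \equiv 1 \pmod{4}$ and hence $v_{2}(m^{j}+1) = 1$ for every $j$, each doubling step contributes exactly $1$, giving $v_{2}(m^{d} - 1) = v_{2}(m-1) + k = v_{2}(m-1) + v_{2}(d)$. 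The companion statement $v_{2}(m^{d}+1) = 1$ is immediate, since $m^{d} \equiv 1 \pmod{4}$ yields $m^{d} + 1 \equiv 2 \pmod{4}$.

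Cases (2) and (3) I would deduce from case (1) via the substitution $m \mapsto -m$, which sends an integer $\equiv 3 \pmod{4}$ to one $\equiv 1 \pmod{4}$. For case (2), with $d$ odd one has $m^{d} \equiv 3 \pmod{4}$, so $v_{2}(m^{d}-1) = 1$ directly; and since $d$ is odd, $m^{d} + 1 = -\bigl((-m)^{d} - 1\bigr)$, whence case (1) applied to $-m$ gives $v_{2}(m^{d}+1) = v_{2}(-m - 1) = v_{2}(m+1)$ (the $v_{2}(d)$ term vanishes as $d$ is odd). For case (3), the congruence $m^{2} \equiv 1 \pmod{8}$ gives $m^{d} \equiv 1 \pmod{4}$ and hence $v_{2}(m^{d}+1) = 1$; for the other equality, write $m^{d} - 1 = (m^{2})^{d/2} - 1$ and apply case (1) to the base $m^{2} \equiv 1 \pmod{4}$, obtaining $v_{2}(m^{d}-1) = v_{2}(m^{2}-1) + v_{2}(d/2)$, which by the doubling identity $v_{2}(m^{2}-1) = 1 + v_{2}(m+1)$ and $v_{2}(d/2) = v_{2}(d) - 1$ equals $v_{2}(m+1) + v_{2}(d)$.

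The only genuinely delicate step is the even-exponent part of case (1): one must verify that the auxiliary factor $m^{j} + 1$ contributes valuation exactly $1$ at every doubling, which is precisely where the hypothesis $m \equiv 1 \pmod{4}$ is essential --- without it the doubling induction would miscount. Everything else is bookkeeping with the two factorizations above and the reflection $m \mapsto -m$.
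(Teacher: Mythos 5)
Your proof is correct and complete. Note, however, that the paper itself offers no proof of this lemma at all: it is stated as a known result with a citation to the reference [Nezami], so there is no in-paper argument to compare yours against. Your write-up is the standard proof of the $p=2$ lift-the-exponent lemma, and every step checks out: the doubling identity $v_{2}(m^{2}-1)=v_{2}(m-1)+v_{2}(m+1)$ is just multiplicativity of the valuation over $m^2-1=(m-1)(m+1)$; the odd-exponent reduction via the geometric-sum factor (a sum of an odd number of odd terms, hence odd) is right; the induction on $v_2(d)$ correctly uses $m\equiv 1\pmod 4\Rightarrow v_2(m^j+1)=1$ so that each squaring contributes exactly one to the valuation; and the reductions of cases (2) and (3) to case (1) via $m\mapsto -m$ and $m\mapsto m^2$ respectively are clean (case (1) as you prove it applies to any odd integer $\equiv 1 \pmod 4$, including negative ones, since $v_2(-x)=v_2(x)$, so invoking it for $-m$ is legitimate). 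You correctly identify the one place where the hypothesis $m\equiv 1\pmod 4$ is load-bearing. This is a perfectly acceptable self-contained replacement for the external citation.
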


If $\ell$ is an odd prime number, the multiplicative group $(\mathbb{Z}/ \ell^{d}\mathbb{Z})^{\ast}$, $d \geq 1$, is a cyclic group. Each generator of $(\mathbb{Z}/ \ell^{d}\mathbb{Z})^{\ast}$ is called a primitive root modulo $\ell^{d}$. In fact, we can find an integer $\eta$ which is a common primitive root module $\ell^{d}$ for all $d \geq 1$. The following lemma gives a criterion for such $\eta$.

\begin{lemma}{\cite{Hardy}}\label{lem 1}
	Let $\eta$ be a primitive root modulo an odd prime $\ell$. If $\eta^{\ell-1}$ is not congruent to $1$ modulo $\ell^{2}$, then $\eta$ is a primitive root modulo $\ell^{d}$ for all $d \geq 1$.
\end{lemma}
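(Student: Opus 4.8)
The plan is to compute the multiplicative order of $\eta$ modulo $\ell^{d}$ directly and show it equals $\phi(\ell^{d}) = (\ell-1)\ell^{d-1}$, which is precisely the condition for $\eta$ to generate the cyclic group $(\mathbb{Z}/\ell^{d}\mathbb{Z})^{\ast}$. First I would record two constraints on $k := \mathrm{ord}_{\ell^{d}}(\eta)$. On one hand $k \mid \phi(\ell^{d}) = (\ell-1)\ell^{d-1}$ by Lagrange. On the other hand, reducing $\eta^{k} \equiv 1 \pmod{\ell^{d}}$ modulo $\ell$ yields $\eta^{k} \equiv 1 \pmod{\ell}$, and since $\eta$ is a primitive root modulo $\ell$ its order there is $\ell-1$, forcing $(\ell-1) \mid k$. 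Writing $k = (\ell-1)u$ and combining with $k \mid (\ell-1)\ell^{d-1}$ gives $u \mid \ell^{d-1}$, so $u = \ell^{a}$ with $0 \le a \le d-1$; thus $k = (\ell-1)\ell^{a}$, and the whole problem reduces to pinning down the exponent $a$.

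The key input is the hypothesis $\eta^{\ell-1} \not\equiv 1 \pmod{\ell^{2}}$. Since $\eta$ is a primitive root modulo $\ell$ we already have $\ell \mid \eta^{\ell-1}-1$, so the hypothesis says exactly that $v_{\ell}(\eta^{\ell-1}-1) = 1$. Setting $m = \eta^{\ell-1}$, so that $\ell \mid m-1$ with $v_{\ell}(m-1) = 1$, I would apply the lift-the-exponent Lemma \ref{lem 3} to obtain $v_{\ell}(m^{t}-1) = v_{\ell}(m-1) + v_{\ell}(t) = 1 + v_{\ell}(t)$ for every positive integer $t$. In particular, taking $t = \ell^{a}$ gives $v_{\ell}(\eta^{(\ell-1)\ell^{a}}-1) = 1 + a$.

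It remains to read off $a$. The order $k = (\ell-1)\ell^{a}$ is characterized as the smallest exponent of this form with $\eta^{k} \equiv 1 \pmod{\ell^{d}}$, i.e. with $v_{\ell}(\eta^{(\ell-1)\ell^{a}}-1) \ge d$. By the valuation computation this is the condition $1 + a \ge d$, whose smallest solution in the admissible range $0 \le a \le d-1$ is $a = d-1$. Hence $\mathrm{ord}_{\ell^{d}}(\eta) = (\ell-1)\ell^{d-1} = \phi(\ell^{d})$, and $\eta$ is a primitive root modulo $\ell^{d}$ for every $d \ge 1$, as claimed.

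I expect no serious obstacle here: once the order is written in the form $(\ell-1)\ell^{a}$ all the arithmetic is controlled, and the single nontrivial ingredient — translating the statement that the $\ell$-part of the order grows maximally with $d$ into a clean valuation identity — is exactly what Lemma \ref{lem 3} delivers. The only point demanding care is verifying the hypothesis of that lemma, namely $\ell \mid m-1$ for $m = \eta^{\ell-1}$, which is immediate from $\eta$ being a primitive root modulo $\ell$.
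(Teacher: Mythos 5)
Your proof is correct. Note that the paper itself offers no proof of this lemma at all: it is quoted verbatim from Hardy and Wright (\cite{Hardy}), so there is no in-paper argument to compare against. Your derivation is a clean, self-contained alternative to chasing the reference, and it has the pleasant feature of reusing the paper's own Lemma \ref{lem 3} (the odd-prime lift-the-exponent lemma) as the only nontrivial input: from $v_{\ell}(\eta^{\ell-1}-1)=1$ you get $v_{\ell}(\eta^{(\ell-1)\ell^{a}}-1)=1+a$, and combining this with the two divisibility constraints $(\ell-1)\mid k$ and $k\mid(\ell-1)\ell^{d-1}$ pins the order down to $(\ell-1)\ell^{d-1}=\phi(\ell^{d})$ exactly. (In fact the final step is even slightly cleaner than you state: since the order itself satisfies the congruence, you get $a\ge d-1$ directly, which together with $a\le d-1$ forces $a=d-1$ without needing to invoke minimality again.) The classical textbook proof runs an induction on $d$ via the binomial theorem to show $\eta^{(\ell-1)\ell^{d-2}}\not\equiv 1\pmod{\ell^{d}}$; your version packages that same induction inside the lift-the-exponent lemma, which is arguably the more economical presentation given that the paper already states that lemma. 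All hypotheses are verified where needed, in particular $\ell\mid\eta^{\ell-1}-1$ from Fermat's little theorem before applying Lemma \ref{lem 3}. No gaps.
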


\subsection{Finite fields and cyclic codes over finite fields}\label{sec 2}
Let $\mathbb{F}_{q}$ be a finite field with $q$ elements, and $\mathbb{F}_{q}^{\ast}$ be the multiplicative group of the nonzero elements in $\mathbb{F}_{q}$. Let $\mathbb{F}_{q^{m}}/ \mathbb{F}_{q}$, $m \in \mathbb{N}^{+}$, be a finite field extension. The Galois group $\mathrm{Gal}(\mathbb{F}_{q^{m}}/ \mathbb{F}_{q})$ is cyclic, with a canonical generator called the Frobenious automorphism of $\mathbb{F}_{q^{m}}$ over $\mathbb{F}_{q}$:
$$\mathrm{Frob}_{q}: \mathbb{F}_{q^{m}} \rightarrow \mathbb{F}_{q^{m}}: \ x \mapsto x^{q}.$$

Let $\lambda \in \mathbb{F}_{q}^{\ast}$. There is a minimal positive integer $r$ such that $\lambda^{r} = 1$, which is called the order of $\lambda$ and is denoted by $\mathrm{ord}(\lambda)=r$. It is obvious that $\mathrm{ord}(\lambda) \mid q-1$. Let $f(X)$ be an irreducible polynomial over $\mathbb{F}_{q}$ that is not equal to $aX$ for any $a \neq 0$, the order $\mathrm{ord}(f)$ of $f(X)$ is defined to be the minimal positive integer $r$ such that $f(X) \mid X^{r}-1$. One can verify that $\mathrm{ord}(f) = \mathrm{ord}(\alpha)$ for any root $\alpha$ of $f(X)$ lying in some finite extension of $\mathbb{F}_{q}$.

For any positive integer $n$ coprime to $q$, there are exactly $n$ roots of $X^{n}-1$ lying in some finite extension of $\mathbb{F}_{q}$. A root $\zeta_{n}$ of $X^{n}-1$ which fails to be a root of $X^{m}-1$ for any $m < n$ is called a primitive $n$-th root of unity. In this paper, we fix a compatible family
$$\{\zeta_{n} | \ \mathrm{gcd}(n,q) = 1\}$$
of primitive roots of unity, where the compatibility means that for any $m$ and $n$, coprime to $q$, such that $m \mid n$, it holds that $\zeta_{n}^{\frac{n}{m}} = \zeta_{m}$.
	
Let $n$ be a positive integer such that $\mathrm{gcd}(n,q)=1$. The $n$-th cyclotomic polynomial $\Phi_{n}(X)$ is defined to be
$$\Phi_{n}(X) = \prod_{\substack{0 \leq i < n\\ \mathrm{gcd}(i,n)=1}}(X-\zeta_{n}^{i}).$$
It is clear form the definition that $X^{n}-1 = \prod\limits_{d \mid n}\Phi_{d}(X)$, and the M\"{o}bius inversion formula gives that
$$\Phi_{n}(X) = \prod_{d \mid n}(X^{d}-1)^{\mu(\frac{n}{d})},$$
where $\mu$ is the M\"{o}bius function.
	
Given a $\gamma \in \mathbb{Z}/ n\mathbb{Z}$, the $q$-cyclotomic coset modulo $n$ containing $\gamma$ is defined as
$$c_{n/q}(\gamma) = \{\gamma,\gamma q,\cdots,\gamma q^{\tau-1}\} \subseteq \mathbb{Z}/ n\mathbb{Z},$$
where $\tau$ is the smallest positive integer such that $\gamma q^{\tau} = \gamma$ in $\mathbb{Z}/n\mathbb{Z}$. Each element in $c_{n/q}(\gamma)$  is called a representative of the coset $c_{n/q}(\gamma)$. Regarding these elements as nonnegative integer less than $n$, the smallest one is called the leader of $c_{n/q}(\gamma)$. The order of $c_{n/q}(\gamma)$, which equals to $\tau$, is called the size of $c_{n/q}(\gamma)$. We denote by $\mathcal{C}_{n/q}$ the space of all $q$-cyclotomic cosets modulo $n$. In the following context we sometimes do not distinguish an element in $\mathbb{Z}/n\mathbb{Z}$ with its preimage in $\mathbb{Z}$ in computations, in which case the result should be understood to be modulo $n$.

The $q$-cyclotomic cosets modulo $n$ can be naturally identified with the orbits of the Frobenius automorphism $\mathrm{Frob}_{q}$ on the set of $n$-th roots of unity, thus they fully determine the irreducible factorizations of $X^{n}-1$ and of $\Phi_{n}(X)$. Concretely, to each $q$-cyclotomic coset $c_{n/q}(\gamma)$ modulo $n$ associating an irreducible polynomial
$$M_{\gamma,q}(X)= (X-\zeta_{n}^{\gamma})(X-\zeta_{n}^{\gamma q})\cdots(X-\zeta_{n}^{\gamma q^{\tau-1}})$$
over $\mathbb{F}_{q}$, then the irreducible factorizations of $X^{n}-1$ and of $\Phi_{n}(X)$ are given by
$$X^{n}-1 = \prod_{\gamma \in \mathcal{CR}_{n/q}}M_{\gamma,q}(X)$$
and
$$\Phi_{n}(X) = \prod_{\substack{\gamma \in \mathcal{CR}_{n/q}\\ \mathrm{gcd}(\gamma,n)=1}}M_{\gamma,q}(X)$$
respectively, where $\mathcal{CR}_{n/q}$ is any full set of representatives of $q$-cyclotomic cosets modulo $n$.

\subsection{A brief reminder on the multiple equal-difference representations of cyclotomic cosets}
In this subsection we recall the definition and some results of the multiple equal-difference representations of cyclotomic cosets. For a complete treatment the readers are referred to \cite{Zhu}.

\begin{definition}
	Let $E$ be a subset of $\mathbb{Z}/n\mathbb{Z}$. We say that $E$ is of equal difference if there is a positive integer $d \mid n$ such that $E$ can be represented as 
	$$E = \{\gamma,\gamma+d,\cdots,\gamma+(\frac{n}{d}-1)d\}.$$
	The integer $d$ is called the common difference of $E$. In particular, an equal-difference $q$-cyclotomic coset modulo $n$ is a $q$-cyclotomic coset modulo $n$ that is of equal difference.
\end{definition}

For any $\gamma \in \mathbb{Z}/n\mathbb{Z}$ we set $n_{\gamma} = \frac{n}{\mathrm{gcd}(\gamma,n)}$. It is trivial to see that for any $\gamma$ and $\gamma^{\prime}$ lying in the same $q$-cyclotomic coset modulo $n$ it holds that $n_{\gamma} = n_{\gamma^{\prime}}$. The criterion for a $q$-cyclotomic coset modulo $n$ being of equal difference is given below.

\begin{proposition}
	The cyclotomic coset $c_{n/q}(\gamma)$ is of equal difference if and only if the following two conditions are satisfied:
	\begin{description}
		\item[(\romannumeral1)] $\mathrm{rad}(n_{\gamma}) \mid q-1$;
		\item[(\romannumeral2)] $q \equiv 1 \pmod{4}$ if $8 \mid n_{\gamma}$.
	\end{description}
\end{proposition}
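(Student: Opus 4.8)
The plan is to first reduce to the case where $\gamma$ is a unit modulo $n$, then translate the equal-difference property into a clean group-theoretic identity, and finally match the resulting arithmetic condition against (i)--(ii) by means of the lift-the-exponent lemmas. For the reduction, write $g=\gcd(\gamma,n)$ and $\gamma=g\gamma'$ with $\gcd(\gamma',n_\gamma)=1$. Multiplication by $g$ is a group isomorphism $\mathbb{Z}/n_\gamma\mathbb{Z}\xrightarrow{\sim}g\mathbb{Z}/n\mathbb{Z}$ carrying $c_{n_\gamma/q}(\gamma')$ onto $c_{n/q}(\gamma)$ and each subgroup $\tfrac{d}{g}\mathbb{Z}/n_\gamma\mathbb{Z}$ onto $d\mathbb{Z}/n\mathbb{Z}$ for $g\mid d\mid n$. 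Since every element of $c_{n/q}(\gamma)$ is a multiple of $g$, any common difference $d$ must satisfy $g\mid d$, so $c_{n/q}(\gamma)$ is of equal difference if and only if $c_{n_\gamma/q}(\gamma')$ is. As (i)--(ii) depend only on $n_\gamma$, I may assume $\gamma$ is a unit and $n=n_\gamma$; then $\tau:=|c_{n/q}(\gamma)|=\mathrm{ord}_n(q)$ and $c_{n/q}(\gamma)=\gamma\langle q\rangle$.

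Next I would pin down the equal-difference property in this reduced setting. An equal-difference set with common difference $d\mid n$ is exactly a coset of $d\mathbb{Z}/n\mathbb{Z}$ and has $n/d$ elements, so matching sizes forces $d=n/\tau$, and in particular $\tau\mid n$. Because $\gamma\in\gamma\langle q\rangle$ and multiplication by the unit $\gamma^{-1}$ maps the subgroup $(n/\tau)\mathbb{Z}/n\mathbb{Z}$ onto itself, the identity $\gamma\langle q\rangle=\gamma+(n/\tau)\mathbb{Z}/n\mathbb{Z}$ is equivalent to $\langle q\rangle=1+(n/\tau)\mathbb{Z}/n\mathbb{Z}$. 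The inclusion $\subseteq$ holds precisely when $q\equiv1\pmod{n/\tau}$, and in that case both sides have exactly $\tau$ elements, so equality is automatic. Hence $c_{n/q}(\gamma)$ is of equal difference if and only if $\tau\mid n$ and $q\equiv1\pmod{n/\tau}$, and it remains to identify this with conditions (i)--(ii).

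For the forward implication I would assume (i)--(ii) and show that each local order $\mathrm{ord}_{\ell^{a_\ell}}(q)$, where $a_\ell=v_\ell(n)$, is a power of $\ell$. For odd $\ell\mid n$ this follows from $\ell\mid q-1$ and Lemma \ref{lem 3}, which give $\mathrm{ord}_{\ell^{a_\ell}}(q)=\ell^{\max(0,\,a_\ell-v_\ell(q-1))}$; for $\ell=2$, Lemma \ref{lem 2} combined with (ii) (which caps $v_2(n)\le 2$ once $q\equiv 3\pmod 4$) again yields a power of $2$. Being powers of distinct primes, these local orders are pairwise coprime, so $\tau$ is their product; thus $\tau\mid n$, and at every prime $v_\ell(n/\tau)=\min\!\big(a_\ell,\,v_\ell(q-1)\big)\le v_\ell(q-1)$, which gives $q\equiv1\pmod{n/\tau}$.

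The backward implication is the main obstacle, because $\tau=\mathrm{ord}_n(q)$ is a least common multiple and so entangles the different primes. For (i), I would take the largest prime $\ell\mid n$ with $\ell\nmid q-1$, which is necessarily odd since $2\mid n$ forces $2\mid q-1$. Then $\ell\nmid n/\tau$, so $\ell^{a_\ell}\mid\tau$, and hence $\ell^{a_\ell}\mid\mathrm{ord}_{\ell'^{a_{\ell'}}}(q)$ for some prime $\ell'\mid n$; since this order divides $\ell'^{a_{\ell'}-1}(\ell'-1)$, the case $\ell'=\ell$ is impossible and we must have $\ell'\equiv1\pmod{\ell^{a_\ell}}$, so $\ell'>\ell$. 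By maximality $\ell'\mid q-1$, whence Lemma \ref{lem 3} makes $\mathrm{ord}_{\ell'^{a_{\ell'}}}(q)$ a power of $\ell'$, which cannot be divisible by $\ell^{a_\ell}$ -- a contradiction, proving (i). For (ii), assuming $8\mid n$ and $q\equiv3\pmod4$, I would use (i) to discard the odd primes' contribution, so that $v_2(\tau)=v_2\!\big(\mathrm{ord}_{2^{v_2(n)}}(q)\big)$, and then evaluate this order explicitly through Lemma \ref{lem 2}; in every case one finds $v_2(n/\tau)\ge 2>1=v_2(q-1)$, contradicting $q\equiv1\pmod{n/\tau}$. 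This establishes the equivalence.
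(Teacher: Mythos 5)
This proposition is stated in the paper's ``brief reminder'' subsection and is quoted from the companion reference [Zhu] (arXiv:2501.03516); the paper itself gives no proof, so there is no internal argument to compare yours against. Judged on its own, your proof is correct and complete. The reduction to a unit $\gamma$ via the isomorphism $\mathbb{Z}/n_\gamma\mathbb{Z}\cong g\mathbb{Z}/n\mathbb{Z}$ is sound (and the observation that any common difference must be divisible by $g=\gcd(\gamma,n)$ is exactly what makes it an equivalence); the intermediate criterion ``$c_{n/q}(\gamma)$ is of equal difference iff $\tau\mid n_\gamma$ and $q\equiv 1\pmod{n_\gamma/\tau}$'' is a genuinely useful reformulation, with the size count forcing $d=n/\tau$ and the multiplicative closedness of $1+(n/\tau)\mathbb{Z}/n\mathbb{Z}$ reducing the set equality to the single congruence on $q$. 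The forward direction via the local orders $\mathrm{ord}_{\ell^{a_\ell}}(q)$ being prime powers (Lemmas \ref{lem 3} and \ref{lem 2}) is routine and correct, and the backward direction is the genuinely delicate part: your choice of the \emph{largest} prime $\ell\mid n_\gamma$ with $\ell\nmid q-1$, combined with the divisibility $\mathrm{ord}_{\ell'^{a_{\ell'}}}(q)\mid \ell'^{a_{\ell'}-1}(\ell'-1)$ to force $\ell'\equiv 1\pmod{\ell^{a_\ell}}$ and hence $\ell'>\ell$, cleanly untangles the lcm and yields the contradiction; the $2$-adic computation for (ii) (giving $v_2(n_\gamma/\tau)=\min\{v_2(n_\gamma)-1,\,v_2(q+1)\}\geq 2>v_2(q-1)$ when $8\mid n_\gamma$ and $q\equiv 3\pmod 4$) also checks out. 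In short, you have supplied a self-contained proof of a result the paper only imports.
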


In general, a cyclotomic coset $c_{n/q}(\gamma)$ is not necessarily of equal difference, however it turns out that $c_{n/q}(\gamma)$ can always be expressed as a disjoint union of equal-difference subsets, which is called an equal-difference decomposition of $c_{n/q}(\gamma)$. If, moreover, the decomposition is in the form
$$c_{n/q}(\gamma) = \bigsqcup_{j=0}^{\mathrm{gcd}(t,\tau)-1}c_{n/q^{t}}(\gamma q^{j})$$
for some $t \in \mathbb{N}^{+}$, then it is called a multiple equal-difference representation of $c_{n/q}(\gamma)$.

Let $c_{n/q}(\gamma)$ be a $q$-cyclotomic coset modulo $n$, with $n_{\gamma} = \frac{n}{\mathrm{gcd}(\gamma,n)}$. Define a positive integer $\omega_{\gamma}$ by
\begin{equation*}
	\omega_{\gamma} = \left\{
	\begin{array}{lcl}
		2\mathrm{ord}_{\mathrm{rad}(n_{\gamma})}(q), \quad \mathrm{if} \ q^{\mathrm{ord}_{\mathrm{rad}(n_{\gamma})}(q)} \equiv 3 \pmod{4} \ \mathrm{and} \ 8 \mid n_{\gamma};\\
		\mathrm{ord}_{\mathrm{rad}(n_{\gamma})}(q), \quad \mathrm{otherwise}.
	\end{array} \right.
\end{equation*}
The following theorem determines all the multiple equal-difference representations of $c_{n/q}(\gamma)$.

\begin{theorem}
	All the distinct multiple equal-difference representations of $c_{n/q}(\gamma)$ are
	$$c_{n/q}(\gamma) = \bigsqcup_{j=0}^{t-1}c_{n/q^{t}}(\gamma q^{j}),$$
	where $t$ ranges over all positive divisors of $\tau$ that can be divided by $\omega_{\gamma}$.
\end{theorem}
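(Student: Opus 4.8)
The plan is to split the argument into a combinatorial reduction that isolates the candidate decompositions and an arithmetic core that decides which of them consist of equal-difference blocks. First I would record the orbit-theoretic fact that for every $t \in \mathbb{N}^{+}$ the set identity
$$c_{n/q}(\gamma) = \bigsqcup_{j=0}^{\gcd(t,\tau)-1} c_{n/q^{t}}(\gamma q^{j})$$
holds, since $c_{n/q}(\gamma)$ is the orbit of $\gamma$ under the cyclic group of order $\tau = \mathrm{ord}_{n_{\gamma}}(q)$ generated by multiplication by $q$, and this orbit splits into $\gcd(t,\tau)$ orbits of the subgroup generated by $q^{t}$, with representatives $\gamma q^{j}$. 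Because $\langle q^{t}\rangle = \langle q^{\gcd(t,\tau)}\rangle$, the resulting partition depends only on $\gcd(t,\tau)$, so each such decomposition coincides with the one attached to the divisor $\gcd(t,\tau)$ of $\tau$, which for a divisor $t \mid \tau$ has exactly $t$ blocks of size $\tau/t$. Hence the distinct decompositions are indexed by the divisors of $\tau$, and I may restrict attention to $t \mid \tau$.

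Next I would determine when such a decomposition is genuinely a multiple equal-difference representation, that is, when every block $c_{n/q^{t}}(\gamma q^{j})$ is of equal difference. Since $\gcd(q,n)=1$ forces $n_{\gamma q^{j}} = n_{\gamma}$ for all $j$, the equal-difference criterion (the Proposition above), applied with $q$ replaced by the prime power $q^{t}$, shows that all blocks are simultaneously of equal difference if and only if two $j$-independent conditions hold: (i) $\mathrm{rad}(n_{\gamma}) \mid q^{t}-1$, and (ii) $q^{t} \equiv 1 \pmod{4}$ whenever $8 \mid n_{\gamma}$. Writing $\nu = \mathrm{ord}_{\mathrm{rad}(n_{\gamma})}(q)$, condition (i) reads $\nu \mid t$, so the theorem reduces to showing that (i) and (ii) together are equivalent to $\omega_{\gamma} \mid t$.

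The core of the proof is this equivalence, which I would establish by the case analysis built into the definition of $\omega_{\gamma}$, using the $2$-adic lift-the-exponent Lemma \ref{lem 2}. If $8 \nmid n_{\gamma}$, or if $8 \mid n_{\gamma}$ but $q \equiv 1 \pmod{4}$, then (ii) is vacuous or automatic and $\omega_{\gamma} = \nu$, so the equivalence collapses to (i). If $8 \mid n_{\gamma}$ and $q \equiv 3 \pmod{4}$, then (ii) forces $t$ to be even, and I would split on the parity of $\nu$: when $\nu$ is even, $\nu \mid t$ already makes $t$ even and $q^{\nu} \equiv 1 \pmod{4}$, so $\omega_{\gamma} = \nu$ and (i) alone suffices; when $\nu$ is odd, $q^{\nu} \equiv 3 \pmod{4}$, so $\omega_{\gamma} = 2\nu$, and since $\gcd(2,\nu)=1$ the conjunction ``$\nu \mid t$ and $t$ even'' is exactly $2\nu \mid t$. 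In every case (i) and (ii) together are equivalent to $\omega_{\gamma} \mid t$.

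Finally I would check that admissible values of $t$ actually occur among the divisors of $\tau$, i.e. that $\omega_{\gamma} \mid \tau$. From $\mathrm{rad}(n_{\gamma}) \mid n_{\gamma}$ we get $\nu \mid \tau$; in the only delicate case ($8 \mid n_{\gamma}$, $q \equiv 3 \pmod{4}$, $\nu$ odd) I would observe that $\tau = \mathrm{ord}_{n_{\gamma}}(q)$ is even, because for odd $d$ Lemma \ref{lem 2} gives $v_{2}(q^{d}-1)=1 < 3 \leq v_{2}(n_{\gamma})$, so $q^{d} \not\equiv 1 \pmod{n_{\gamma}}$; hence $2 \mid \tau$ and $\mathrm{lcm}(2,\nu)=2\nu \mid \tau$. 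Combined with the first paragraph, this shows the multiple equal-difference representations are precisely those indexed by the divisors $t \mid \tau$ with $\omega_{\gamma} \mid t$, and that they are pairwise distinct. I expect the main obstacle to be the branch $q \equiv 3 \pmod{4}$, where the parity bookkeeping that links condition (ii) to the doubling of $\nu$ in $\omega_{\gamma}$, together with the verification $\omega_{\gamma} \mid \tau$, is precisely where Lemma \ref{lem 2} must be used carefully.
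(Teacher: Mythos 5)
Your argument is correct, but note that the paper itself offers no proof of this statement to compare against: the theorem sits in the subsection that merely \emph{recalls} results on multiple equal-difference representations, and the reader is referred to \cite{Zhu} for the proofs. Judged on its own, your proposal is a sound and essentially complete route. The orbit-theoretic reduction (the partition depends only on $\gcd(t,\tau)$, distinct divisors give partitions with different numbers of blocks, so one may assume $t\mid\tau$) is clean; the observation that $n_{\gamma q^{j}}=n_{\gamma}$ makes the equal-difference criterion for the blocks $c_{n/q^{t}}(\gamma q^{j})$ independent of $j$, so the whole question collapses to the two conditions $\mathrm{rad}(n_{\gamma})\mid q^{t}-1$ and $q^{t}\equiv 1\pmod 4$ when $8\mid n_{\gamma}$; and your case analysis correctly matches these to the two branches in the definition of $\omega_{\gamma}$, since $q^{\nu}\equiv 3\pmod 4$ holds exactly when $q\equiv 3\pmod 4$ and $\nu$ is odd. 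The final consistency check $\omega_{\gamma}\mid\tau$, using Lemma \ref{lem 2} to force $2\mid\tau$ in the branch $8\mid n_{\gamma}$, $q\equiv 3\pmod 4$, $\nu$ odd, is exactly the point that must not be skipped, and you handle it correctly. The one thing I would ask you to make explicit is the justification for invoking the equal-difference criterion with $q$ replaced by $q^{t}$: the proposition is stated for a prime power $q$ coprime to $n$, and $q^{t}$ is again such a prime power with the same $n_{\gamma}$, so the substitution is legitimate, but that sentence should appear in the write-up rather than be left implicit.
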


\begin{corollary}\label{coro 2}
	The coarsest multiple equal-difference representation of $c_{n/q}(\gamma)$ is
	$$c_{n/q}(\gamma) = \bigsqcup_{j=0}^{\omega_{\gamma}-1}c_{n/q^{\omega_{\gamma}}}(\gamma q^{j}),$$
	while the finest multiple equal-difference representation of $c_{n/q}(\gamma)$ is is the trivial decomposition
	$$c_{n/q}(\gamma) = \bigsqcup_{j=0}^{\tau-1}\{\gamma q^{j}\}.$$
\end{corollary}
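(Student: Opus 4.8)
The plan is to derive the corollary directly from the preceding theorem, which already enumerates every multiple equal-difference representation of $c_{n/q}(\gamma)$ as the family indexed by those divisors $t$ of $\tau$ with $\omega_{\gamma} \mid t$. First I would make precise the notions of \emph{coarsest} and \emph{finest} by equipping the set of these representations with the refinement partial order: one representation refines another if every block of the former is contained in a block of the latter. The coarsest representation is then the maximal element and the finest the minimal element of this poset, provided they exist and are unique.

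The key step is to translate refinement into divisibility of the parameter $t$. For divisors $t_{1}, t_{2}$ of $\tau$ with $t_{2} \mid t_{1}$, the orbit $c_{n/q^{t_{1}}}(\delta)$ of $\delta$ under multiplication by $q^{t_{1}}$ is contained in the orbit $c_{n/q^{t_{2}}}(\delta)$ under multiplication by $q^{t_{2}}$, since $q^{t_{1}}$ is a power of $q^{t_{2}}$. Hence the representation with parameter $t_{1}$ refines the one with parameter $t_{2}$ exactly when $t_{2} \mid t_{1}$. Because all admissible values of $t$ are multiples of $\omega_{\gamma}$ dividing $\tau$, they are totally ordered by divisibility, so the induced refinement order on the representations is a total order; in particular both extreme elements exist and are unique. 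Consequently the coarsest representation corresponds to the least admissible $t$ and the finest to the greatest.

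Next I would identify these extreme values. The least $t$ with $\omega_{\gamma} \mid t \mid \tau$ is $t = \omega_{\gamma}$ itself (here $\omega_{\gamma} \mid \tau$ is guaranteed by the nonemptiness of the family in the theorem, the trivial decomposition always furnishing the case $t = \tau$), and substituting $t = \omega_{\gamma}$ yields the stated coarsest representation. The greatest admissible $t$ is $t = \tau$; substituting $t = \tau$, each block is $c_{n/q^{\tau}}(\gamma q^{j})$, and since $\gamma q^{\tau} = \gamma$ in $\mathbb{Z}/n\mathbb{Z}$ by the definition of $\tau$, multiplication by $q^{\tau}$ fixes every $\gamma q^{j}$, so $c_{n/q^{\tau}}(\gamma q^{j}) = \{\gamma q^{j}\}$. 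This is precisely the trivial decomposition into singletons.

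The only real subtlety, and hence the main obstacle, is verifying that the refinement order genuinely matches divisibility of $t$ and that distinct admissible values of $t$ yield distinct partitions; both reduce to the orbit-containment observation above together with the distinctness already recorded in the theorem. Once that correspondence is in hand the corollary is immediate, so the bulk of the work is conceptual bookkeeping rather than computation.
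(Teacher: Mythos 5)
Your derivation is essentially the intended one: the paper states this corollary without proof, as an immediate consequence of the preceding theorem recalled from the reference on multiple equal-difference structures, and your identification of the coarsest representation with the smallest admissible parameter $t=\omega_{\gamma}$ and the finest with $t=\tau$ (whose blocks collapse to singletons since $\gamma q^{\tau}=\gamma$) is exactly the right reading. One intermediate claim is false, though it does not damage the argument: the set of admissible $t$ (divisors of $\tau$ that are multiples of $\omega_{\gamma}$) is \emph{not} in general totally ordered by divisibility --- e.g.\ if $\tau/\omega_{\gamma}$ has two distinct prime factors $p_{1},p_{2}$, then $\omega_{\gamma}p_{1}$ and $\omega_{\gamma}p_{2}$ are incomparable. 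What you actually need, and what is true, is only that this divisibility poset has a unique minimum $\omega_{\gamma}$ and a unique maximum $\tau$, since every admissible $t$ satisfies $\omega_{\gamma}\mid t\mid\tau$; combined with your (correct) observation that $t_{2}\mid t_{1}$ implies the $t_{1}$-representation refines the $t_{2}$-representation, this already yields unique coarsest and finest elements. With that claim weakened accordingly, the proof is complete.
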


The multiple equal-difference representations of $q$-cyclotomic cosets modulo $n$ are particularly of interest to us as they encodes the irreducible factorizations of $X^{n}-1$ into binomials over finite extensions of $\mathbb{F}_{q}$. The explicit statements are given below.

\begin{theorem}
	\begin{description}
		\item[(1)] A $q$-cyclotomic coset $c_{n/q}(\gamma)$ modulo $n$ is of equal-difference if and only if the induced polynomial
		$$M_{\gamma,q}(X) = (X-\zeta_{n}^{\gamma})(X-\zeta_{n}^{\gamma q})\cdots(X-\zeta_{n}^{\gamma q^{\tau-1}})$$
		is a binomial.
		\item[(2)] For a positive integer $t$, the following statements are equivalent:
		\begin{description}
			\item[(\romannumeral1)] $\omega_{\gamma} \mid t$;
			\item[(\romannumeral2)] $c_{n/q}(\gamma) = \bigsqcup\limits_{j=0}^{\mathrm{gcd}(t,\tau)-1}c_{n/q^{t}}(\gamma q^{j})$ is a multiple equal-difference representation of $c_{n/q}(\gamma)$;
			\item[(\romannumeral3)] $M_{\gamma,q}(X)$ is factorized into irreducible binomials over $\mathbb{F}_{q^{t}}$.
		\end{description}
		\item[(3)] For a positive integer $t$, the following statements are equivalent:
		\begin{description}
			\item[(\romannumeral1)] $\omega_{1} \mid t$;
			\item[(\romannumeral2)] $c_{n/q}(\gamma) = \bigsqcup\limits_{j=0}^{\mathrm{gcd}(t,\tau)-1}c_{n/q^{t}}(\gamma q^{j})$ is a multiple equal-difference representation of $c_{n/q}(\gamma)$ for every $q$-cyclotomic coset $c_{n/q}(\gamma)$ modulo $n$;
			\item[(\romannumeral3)] $X^{n}-1$ is factorized into irreducible binomials over $\mathbb{F}_{q^{t}}$.
		\end{description}
	\end{description}
\end{theorem}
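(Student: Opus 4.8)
The plan is to treat part (1) as the computational core, and then obtain parts (2) and (3) from it by base change to $\mathbb{F}_{q^{t}}$ together with the structure theorem for multiple equal-difference representations recalled above.

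For part (1) I would translate both conditions into the language of the multiplicative group of roots of unity. Since $\zeta_{n}$ is a primitive $n$-th root of unity, the root set of $M_{\gamma,q}(X)$ is $\{\zeta_{n}^{a}:a\in c_{n/q}(\gamma)\}$, so passing from exponents to roots turns additive cosets in $\mathbb{Z}/n\mathbb{Z}$ into multiplicative cosets inside the group of $n$-th roots of unity. A nonzero binomial $X^{\tau}-c$ has as its root set a coset $\alpha\mu_{\tau}$ of the group $\mu_{\tau}$ of $\tau$-th roots of unity, and conversely every such coset is the root set of a binomial. Thus $M_{\gamma,q}(X)$ is a binomial iff $\{\zeta_{n}^{a}\}$ is a $\mu_{\tau}$-coset lying in $\mu_{n}$; this forces $\mu_{\tau}\subseteq\mu_{n}$, i.e. $\tau\mid n$, and then, using that $\zeta_{n}^{n/\tau}$ generates $\mu_{\tau}$, it is equivalent to $c_{n/q}(\gamma)$ being the additive coset $\gamma+\tfrac{n}{\tau}\mathbb{Z}/n\mathbb{Z}$. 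Since $c_{n/q}(\gamma)$ has size $\tau$, its common difference as an equal-difference set is necessarily $n/\tau$, so this last condition says exactly that $c_{n/q}(\gamma)$ is of equal difference. The only point needing care is the forced divisibility $\tau\mid n$, which I would extract from the observation that the ratios of elements of the coset $\alpha\mu_{\tau}$ already exhaust $\mu_{\tau}$ while lying in $\mu_{n}$.

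For part (2), the equivalence (\romannumeral2)$\Leftrightarrow$(\romannumeral3) follows by applying part (1) over the base field $\mathbb{F}_{q^{t}}$. I would first record the standard fact that $c_{n/q}(\gamma)$ splits into exactly $\gcd(t,\tau)$ cosets under the $q^{t}$-action, with representatives $\gamma q^{j}$ for $0\le j<\gcd(t,\tau)$, so that $M_{\gamma,q}(X)=\prod_{j=0}^{\gcd(t,\tau)-1}M_{\gamma q^{j},q^{t}}(X)$ is the factorization of $M_{\gamma,q}(X)$ into irreducibles over $\mathbb{F}_{q^{t}}$; part (1) over $\mathbb{F}_{q^{t}}$ then says each factor is a binomial iff the corresponding $c_{n/q^{t}}(\gamma q^{j})$ is of equal difference, which is precisely condition (\romannumeral2). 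For (\romannumeral1)$\Leftrightarrow$(\romannumeral2) I would reduce a general $t$ to $t'=\gcd(t,\tau)$: since $q^{t}$ and $q^{t'}$ generate the same subgroup modulo $n_{\gamma}$, one has $c_{n/q^{t}}(\gamma q^{j})=c_{n/q^{t'}}(\gamma q^{j})$, so the displayed decomposition for $t$ coincides with the one for $t'\mid\tau$. The preceding Theorem applied to $t'$ makes it a multiple equal-difference representation iff $\omega_{\gamma}\mid t'$, and since $\omega_{\gamma}\mid\tau$ (Corollary \ref{coro 2}) we get $\omega_{\gamma}\mid\gcd(t,\tau)\Leftrightarrow\omega_{\gamma}\mid t$, which yields (\romannumeral1)$\Leftrightarrow$(\romannumeral2).

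For part (3), the equivalence (\romannumeral2)$\Leftrightarrow$(\romannumeral3) is immediate from $X^{n}-1=\prod_{\gamma}M_{\gamma,q}(X)$ and part (2) applied factorwise, because $X^{n}-1$ splits into irreducible binomials over $\mathbb{F}_{q^{t}}$ iff every $M_{\gamma,q}(X)$ does. The substance is (\romannumeral1)$\Leftrightarrow$(\romannumeral2): by part (2) the stated decomposition is a multiple equal-difference representation for \emph{every} $\gamma$ iff $\omega_{\gamma}\mid t$ for all $\gamma$, so I must show $\{t:\omega_{\gamma}\mid t\ \text{for all}\ \gamma\}=\{t:\omega_{1}\mid t\}$, i.e. $\omega_{1}=\operatorname{lcm}_{\gamma}\omega_{\gamma}$. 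As $\gamma=1$ gives $n_{\gamma}=n$, it suffices to prove $\omega_{\gamma}\mid\omega_{1}$ for every $\gamma$. Writing $a=\mathrm{ord}_{\mathrm{rad}(n_{\gamma})}(q)$ and $A=\mathrm{ord}_{\mathrm{rad}(n)}(q)$, the divisibility $n_{\gamma}\mid n$ gives $a\mid A$, and the claim then follows from a short case analysis on the definition of $\omega$, using $q^{A}=(q^{a})^{A/a}$ to track the residue of $q^{A}$ modulo $4$ and the implication $8\mid n_{\gamma}\Rightarrow 8\mid n$. This comparison $\omega_{\gamma}\mid\omega_{1}$ is the one genuinely arithmetic obstacle; everything else is either the coset/binomial dictionary of part (1) or a formal base change, and I expect the only delicate bookkeeping to lie in the parity of $A/a$ in the case $\omega_{\gamma}=2a$.
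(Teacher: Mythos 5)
The paper itself gives no proof of this theorem: it appears in the preliminaries under ``A brief reminder on the multiple equal-difference representations of cyclotomic cosets,'' and the reader is referred to \cite{Zhu} for the proofs. So there is no in-paper argument to compare yours against; I can only assess your proposal on its own terms, and it holds up. Part (1) via the dictionary between equal-difference subsets of $\mathbb{Z}/n\mathbb{Z}$ and multiplicative cosets of $\mu_{\tau}$ inside $\mu_{n}$ is correct, including the observation that the forced containment $\mu_{\tau}\subseteq\mu_{n}$ gives $\tau\mid n$ (and hence, since $\gcd(n,q)=1$, that $p\nmid\tau$, so $\mu_{\tau}$ really has $\tau$ elements and $X^{\tau}-c$ is separable --- you should say this last bit explicitly, since it is what rules out a degenerate binomial with repeated roots). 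Part (2) correctly combines base change to $\mathbb{F}_{q^{t}}$ with the reduction $t\mapsto\gcd(t,\tau)$ and the classification theorem recalled immediately before the statement; note that this makes your proof of (2)(\romannumeral1)$\Leftrightarrow$(\romannumeral2) dependent on that earlier theorem, which is likewise only quoted from \cite{Zhu}, and that you also need $\omega_{\gamma}\mid\tau$, which you correctly extract from Corollary \ref{coro 2}. Part (3) reduces to $\omega_{\gamma}\mid\omega_{1}$ for all $\gamma$, and your case analysis on the parity of $A/a$ (with $a=\mathrm{ord}_{\mathrm{rad}(n_{\gamma})}(q)$, $A=\mathrm{ord}_{\mathrm{rad}(n)}(q)$) does verify this in all cases: if $\omega_{\gamma}=2a$ then $8\mid n_{\gamma}\mid n$, and either $A/a$ is even, so $2a\mid A=\omega_{1}$, or $A/a$ is odd, so $q^{A}\equiv 3\pmod 4$ and $\omega_{1}=2A$. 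I see no gap beyond these points of explicitness.
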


\begin{corollary}
	The field $\mathbb{F}_{q^{\omega_{1}}}$ is the smallest extension of $\mathbb{F}_{q}$ over which $X^{n}-1$ is factorized into irreducible binomials.
\end{corollary}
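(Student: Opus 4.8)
The plan is to read the corollary off directly from part (3) of the preceding theorem, which I take as given. The only extra ingredient is the standard structure theory of finite fields: the finite extensions of $\mathbb{F}_{q}$ are precisely the fields $\mathbb{F}_{q^{t}}$ for $t \in \mathbb{N}^{+}$, and $\mathbb{F}_{q^{t_{1}}} \subseteq \mathbb{F}_{q^{t_{2}}}$ holds if and only if $t_{1} \mid t_{2}$. Thus comparing extensions by inclusion amounts to comparing the corresponding degrees by divisibility.

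First I would verify that $X^{n}-1$ really does split into irreducible binomials over $\mathbb{F}_{q^{\omega_{1}}}$. This is the implication (\romannumeral1) $\Rightarrow$ (\romannumeral3) of part (3) applied to $t = \omega_{1}$: since trivially $\omega_{1} \mid \omega_{1}$, the factorization into irreducible binomials holds over $\mathbb{F}_{q^{\omega_{1}}}$, so this field is indeed one of the extensions in question. Next I would establish minimality. Suppose $\mathbb{F}_{q^{t}}$ is any extension of $\mathbb{F}_{q}$ over which $X^{n}-1$ factorizes into irreducible binomials. By the implication (\romannumeral3) $\Rightarrow$ (\romannumeral1) of part (3), this forces $\omega_{1} \mid t$, and hence $\mathbb{F}_{q^{\omega_{1}}} \subseteq \mathbb{F}_{q^{t}}$ by the divisibility criterion recalled above. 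Therefore $\mathbb{F}_{q^{\omega_{1}}}$ is contained in every extension over which the desired factorization holds, while itself being such an extension; this is exactly the assertion that it is the smallest one.

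There is essentially no obstacle here: the content has already been absorbed into the equivalence (\romannumeral1) $\Leftrightarrow$ (\romannumeral3) of the theorem, and the corollary is merely its translation from the divisibility condition $\omega_{1} \mid t$ into the language of field inclusions. The only point requiring a moment's care is to observe that the admissible exponents $t$, namely those with $\omega_{1} \mid t$, all share $\omega_{1}$ as a common divisor and include $\omega_{1}$ itself, so that $\mathbb{F}_{q^{\omega_{1}}}$ is genuinely the least element of the corresponding family of subfields. This family is not totally ordered under inclusion, but it does possess a minimum, and that minimum is $\mathbb{F}_{q^{\omega_{1}}}$.
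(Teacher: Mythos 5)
Your proposal is correct and matches the intended derivation: the paper states this corollary without a separate proof precisely because it is the immediate translation of the equivalence (i) $\Leftrightarrow$ (iii) in part (3) of the preceding theorem, combined with the standard fact that $\mathbb{F}_{q^{t_{1}}} \subseteq \mathbb{F}_{q^{t_{2}}}$ if and only if $t_{1} \mid t_{2}$. Your additional remark that the family of admissible degrees is exactly the set of multiples of $\omega_{1}$, and hence has $\omega_{1}$ as a genuine minimum, is the right point to check and is handled correctly.
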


\subsection{Cyclic codes over finite fields}
A linear code of length $n$ over $\mathbb{F}_{q}$ is a $\mathbb{F}_{q}$-subspace of $\mathbb{F}_{q}^{n}$. Define the cyclic shift on $\mathbb{F}_{q}^{n}$ by
$$\tau(c_{0},c_{1},\cdots,c_{n-1}) = (c_{n-1},c_{0},\cdots,c_{n-2}).$$

\begin{definition}
	A linear code $\mathcal{C}$ of length $n$ is said to be cycic if $\tau(\mathcal{C}) = \mathcal{C}$.
\end{definition}

Since $\mathbb{F}_{q}^{n}$ is isomorphic to $\mathbb{F}_{q}[X]/ (X^{n}- 1)$ as $\mathbb{F}_{q}$-spaces, each word $c = (c_{0},c_{1},\cdots,c_{n-1})$ in a linear code $\mathcal{C}$ can be identified with a unique polynomial
$$c_{0} + c_{1}X + \cdots + c_{n-1}X^{n-1} \in \mathbb{F}_{q}[X]/ (X^{n}- 1),$$
which is called the polynomial representation of $c$. Under such identifications, the code $\mathcal{C}$ is cyclic if and only if it is an ideal of $\mathbb{F}_{q}[X]/ (X^{n}- 1)$. Thus the cyclic codes of length $n$ over $\mathbb{F}_{q}$ are in a natural one-to-one correspondence with the monic factors of $X^{n}-1$. Following the convention for polynomials, we say that a cyclic code $\mathcal{C}$ of length $n$ is simple-rooted if $p \nmid n$, while is repeated-rooted if $p \mid n$.

Let $x = (x_{0},x_{1},\cdots,x_{n-1})$ and $y = (y_{0},y_{1},\cdots,y_{n-1})$ be in $\mathbb{F}_{q}^{n}$. The Euclidean product of $x$ and $y$ is given by
$$\langle x, y\rangle = x_{0}y_{0} + x_{1}y_{1} + \cdots + x_{n-1}y_{n-1} \in \mathbb{F}_{q}.$$
Let $\mathcal{C}$ be a linear code of length $n$ over $\mathbb{F}_{q}$. The Euclidean dual code of $\mathcal{C}$ is defined as
$$\mathcal{C}^{\bot} = \{x \in \mathbb{F}_{q}^{n} : \ \langle y, x\rangle =0, \ \forall y \in \mathcal{C}\},$$
which is again a linear code of length $n$. The code $\mathcal{C}$ is called self-orthogonal if $\mathcal{C} \subseteq \mathcal{C}^{\bot}$, and is called self-dual if $\mathcal{C} = \mathcal{C}^{\bot}$. 

If $\mathcal{C} = (f(X))$ is a cyclic code of length $n$, where $f(X) \mid X^{n}-1$, one can verify that $\mathcal{C}^{\bot}$ is alaso a cyclic code of length $n$ which is generated by $g^{\ast}(X)$, where $g(X) = \frac{X^{n}-1}{f(X)}$ and $g^{\ast}(X)$ is the reciprocal polynomial of $g(X)$, given by $g(X)^{\ast} = g(0)^{-1}X^{\mathrm{deg}(g)}g(\frac{1}{X})$. 

\section{Explicit representatives and sizes of cyclotomic cosets}\label{sec 1}
Let $p$ be a prime number, $q=p^{e}$ be a power of $p$, and $n$ be a positive integer not divisible by $p$. This section is devoted to an explicit characterization of the $q$-cyclotomic cosets modulo $n$, via determining a precise full set of representatives and the sizes of these cosets. The first subsection serves as a preparation, which contains the definition and a concrete construction of a primitive root system. In Subsection \ref{sec 3} and \ref{sec 4} we deal with the $q$-cyclotomic cosets modulo $n$ in the cases where $n$ is odd and where $n$ is even respectively.

\subsection{Primitive root systems}\label{sec 5}
Let $p_{1},\cdots,p_{s}$ be pairwise different odd primes, and $e_{1},\cdots,e_{s}$ be positive integers. A $s$-tuple $\eta=(\eta_{1},\cdots,\eta_{s})$ of integers is said to be a primitive root system modulo $p_{1}^{e_{1}}\cdots p_{s}^{e_{s}}$ if for any $1 \leq i \leq s$,
\begin{itemize}
	\item[(1)] $\eta_{i}$ is a primitive root modulo $p_{i}^{d}$ for all $d \geq 1$;
	\item[(2)] $\eta_{i} \equiv 1 \pmod{p_{j}^{e_{j}}}$ for any $j \neq i$. 
\end{itemize}

\begin{proposition}\label{prop 2}
	There exists a primitive root system $\eta = (\eta_{1},\cdots,\eta_{s})$ modulo $p_{1}^{e_{1}}\cdots p_{s}^{e_{s}}$.
\end{proposition}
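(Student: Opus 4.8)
The plan is to construct the $\eta_{i}$ one at a time and to decouple the two defining conditions by the Chinese Remainder Theorem. Fix $1 \leq i \leq s$. Condition (2) only constrains $\eta_{i}$ modulo the primes $p_{j}$ with $j \neq i$, whereas condition (1) is a statement about the residues of $\eta_{i}$ modulo the powers of the single prime $p_{i}$. Since the moduli $p_{i}^{2}$ and $p_{j}^{e_{j}}$ ($j \neq i$) are pairwise coprime, I am free to prescribe $\eta_{i}$ independently modulo each of them and then glue by CRT. The work therefore reduces to arranging a suitable residue modulo $p_{i}^{2}$ that forces condition (1).

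To this end I would first produce a \emph{strong} primitive root $g_{i}$ modulo $p_{i}$, meaning a primitive root $g_{i}$ mod $p_{i}$ with $g_{i}^{p_{i}-1} \not\equiv 1 \pmod{p_{i}^{2}}$. Such a $g_{i}$ exists: take any primitive root $g$ mod $p_{i}$; if $g^{p_{i}-1} \equiv 1 \pmod{p_{i}^{2}}$, then replacing $g$ by $g + p_{i}$, which remains a primitive root mod $p_{i}$, and expanding $(g+p_{i})^{p_{i}-1}$ binomially modulo $p_{i}^{2}$ gives $(g+p_{i})^{p_{i}-1} \equiv 1 - p_{i}\,g^{p_{i}-2} \not\equiv 1 \pmod{p_{i}^{2}}$, the last step because $g$ is invertible modulo $p_{i}$. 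I then choose, by CRT, an integer $\eta_{i}$ with $\eta_{i} \equiv g_{i} \pmod{p_{i}^{2}}$ and $\eta_{i} \equiv 1 \pmod{p_{j}^{e_{j}}}$ for every $j \neq i$; the latter congruences are exactly condition (2).

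It remains to verify condition (1) for this $\eta_{i}$. From $\eta_{i} \equiv g_{i} \pmod{p_{i}^{2}}$ I obtain $\eta_{i} \equiv g_{i} \pmod{p_{i}}$, so $\eta_{i}$ is a primitive root modulo $p_{i}$, and also $\eta_{i}^{p_{i}-1} \equiv g_{i}^{p_{i}-1} \not\equiv 1 \pmod{p_{i}^{2}}$. Lemma \ref{lem 1} then upgrades this at once to the statement that $\eta_{i}$ is a primitive root modulo $p_{i}^{d}$ for every $d \geq 1$. Performing this construction for each $i$ yields the desired tuple $(\eta_{1},\dots,\eta_{s})$. The only genuinely delicate point, and the reason for the detour through $p_{i}^{2}$, is that condition (1) demands a common primitive root over all powers of $p_{i}$ simultaneously, which fixing $\eta_{i}$ modulo $p_{i}$ alone cannot guarantee; the strong-primitive-root construction combined with the lift-the-exponent criterion of Lemma \ref{lem 1} is precisely what makes a single integer work for all $d$ at once, while everything else is a routine application of CRT.
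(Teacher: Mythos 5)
Your proof is correct and follows essentially the same route as the paper: both produce a primitive root $g_i$ modulo $p_i$ with $g_i^{p_i-1}\not\equiv 1\pmod{p_i^2}$ (adjusting by $g_i+p_i$ if necessary), invoke Lemma \ref{lem 1} to promote it to all powers $p_i^d$, and then adjust it to be $\equiv 1$ modulo the other prime powers. The only cosmetic difference is that the paper writes down the lift explicitly as $\mu_i+(1-\mu_i)p_i^{\phi(p_1^{e_1})\cdots\widehat{\phi(p_i^{e_i})}\cdots\phi(p_s^{e_s})}$ using Euler's theorem, whereas you invoke the Chinese Remainder Theorem abstractly; you also supply the binomial-expansion verification that $g+p_i$ works, which the paper omits.
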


\begin{proof}
	For any $1 \leq i \leq s$, we begin with a random primitive root $\mu_{i}^{\prime}$ modulo $p_{i}$. If $p_{i}^{2} \nmid \mu_{i}^{\prime p_{i}-1}-1$ we set $\mu_{i} = \mu_{i}^{\prime}$; otherwise we set $\mu_{i} = \mu_{i}^{\prime} + p_{i}$. Then $\mu_{i}$ is a primitive root modulo $p_{i}$ such that
	$$\mu_{i}^{p_{i}-1} \not\equiv 1 \pmod{p_{i}^{2}}.$$
	It follows from Lemma \ref{lem 1} that $\mu_{i}$ is a primitive root modulo $p_{i}^{d}$ for all $d \geq 1$. 
	
	Let $\eta_{i} = \mu_{i} + (1-\mu_{i})p_{i}^{\phi(p_{1}^{e_{1}})\cdots\widehat{\phi(p_{i}^{e_{i}})}\cdots \phi(p_{s}^{e_{s}})}$, where $\phi(p_{1}^{e_{1}})\cdots\widehat{\phi(p_{i}^{e_{i}})}\cdots \phi(p_{s}^{e_{s}})$ denotes the product of all $\phi(p_{j}^{e_{j}})$ for $j \neq i$. Then we have
	$$\eta_{i}^{p_{i}-1} = (\mu_{i} + (1-\mu_{i})p_{i}^{\phi(p_{1}^{e_{1}})\cdots\widehat{\phi(p_{i}^{e_{i}})}\cdots \phi(p_{s}^{e_{s}})})^{p_{i}-1} \equiv \mu_{i}^{p_{i}-1} \not\equiv 1 \pmod{p_{i}^{2}}$$
	and
	$$\eta_{i} \equiv \mu_{i} + (1- \mu_{i}) = 1 \pmod{p_{j}^{e_{j}}}, \ j \neq i.$$
	The conclusion follows immediately. 
\end{proof}

Let
$$\widehat{\mathbb{Z}}^{\prime} = \varprojlim_{N \ \mathrm{odd}} \mathbb{Z}/ N\mathbb{Z} = \prod_{\ell \ \mathrm{odd} \ \mathrm{prime}}\mathbb{Z}_{\ell}$$
be the quotient of the profinite completion $\widehat{\mathbb{Z}}$ of $\mathbb{Z}$ modulo $\mathbb{Z}_{2}$. For any odd prime $\ell_{0}$, there is a common primitive root $\eta_{\ell_{0}}$ modulo $\ell_{0}^{d}$ for all $d \geq 1$. Define an element $\widehat{\eta}_{\ell_{0}}$ in $\widehat{\mathbb{Z}}^{\prime}$ by setting the component of $\widehat{\eta}_{\ell_{0}}$ modulo every $\mathbb{Z}/\ell_{0}^{d}\mathbb{Z}$ to be $\eta_{\ell_{0}}$, while the component modulo $\mathbb{Z}/\ell\mathbb{Z}$ for $\ell \neq \ell_{0}$ is $1$. Such an element is often convenient to compute with, however it does not lie in $\mathbb{Z}$. The proposition below says that a primitive root system actually comes from projections of elements in this form.

\begin{proposition}
	Let $p_{1},\cdots,p_{s}$ be distinct odd primes, and $e_{1},\cdots,e_{s}$ be positive integers. Then a primitive root systems modulo $p_{1}^{e_{1}}\cdots p_{s}^{e_{s}}$ is exactly a lifting of a $s$-tuple 
	$$(\mathrm{pr}(\widehat{\eta}_{p_{1}}),\cdots,\mathrm{pr}(\widehat{\eta}_{p_{s}})) \in (\mathbb{Z}/p_{1}^{e_{1}}\cdots p_{s}^{e_{s}}\mathbb{Z})^{s},$$
	to $\mathbb{Z}$, where $\mathrm{pr}$ denotes the natural projection from $\widehat{\mathbb{Z}}^{\prime}$ onto $\mathbb{Z}/p_{1}^{e_{1}}\cdots p_{s}^{e_{s}}\mathbb{Z}$, and $\widehat{\eta}_{p_{i}}$, $1 \leq i \leq s$, are the elements in $\widehat{\mathbb{Z}}^{\prime}$ induced by some primitive roots $\eta_{i}$ modulo $p_{i}^{d}$ for $d \geq 1$ in the above way.
\end{proposition}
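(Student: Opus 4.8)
The plan is to unwind both sides of the claimed identity through the Chinese Remainder Theorem and compare them coordinatewise. First I would fix the CRT isomorphism $\mathbb{Z}/n\mathbb{Z} \cong \prod_{i=1}^{s} \mathbb{Z}/p_i^{e_i}\mathbb{Z}$ for $n = p_1^{e_1}\cdots p_s^{e_s}$, and observe that, since every $p_i$ is odd, the projection $\mathrm{pr}: \widehat{\mathbb{Z}}^{\prime} = \prod_{\ell \text{ odd}} \mathbb{Z}_{\ell} \to \mathbb{Z}/n\mathbb{Z}$ factors as the projection onto $\prod_{i=1}^{s} \mathbb{Z}_{p_i}$ followed by reduction of each factor modulo $p_i^{e_i}$. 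Computing $\mathrm{pr}(\widehat{\eta}_{p_i})$ under this identification is the conceptual core: because $\widehat{\eta}_{p_i}$ carries a common primitive root $\eta_i$ in its $p_i$-component and $1$ in every $p_j$-component with $j \neq i$, its image is the tuple in $\prod_i \mathbb{Z}/p_i^{e_i}\mathbb{Z}$ whose $i$-th entry is $\eta_i \bmod p_i^{e_i}$ and whose $j$-th entry is $1$. This is exactly the residue data that conditions (1) and (2) of a primitive root system prescribe.

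For the forward inclusion I would begin with an arbitrary primitive root system $(\eta_1, \ldots, \eta_s)$ modulo $n$. Condition (1) guarantees that each $\eta_i$ is a common primitive root modulo $p_i^d$ for all $d \geq 1$, so $\eta_i$ legitimately induces the element $\widehat{\eta}_{p_i} \in \widehat{\mathbb{Z}}^{\prime}$ obtained by placing $\eta_i$ in the $p_i$-component and $1$ elsewhere. I would then verify $\eta_i \equiv \mathrm{pr}(\widehat{\eta}_{p_i}) \pmod{n}$ coordinatewise: the $i$-th coordinates agree tautologically, both being $\eta_i \bmod p_i^{e_i}$, while for $j \neq i$ condition (2) supplies $\eta_i \equiv 1 \pmod{p_j^{e_j}}$, matching the $1$ in $\mathrm{pr}(\widehat{\eta}_{p_i})$. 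Hence every primitive root system is a lift to $\mathbb{Z}$ of the displayed tuple, for the choice of $\widehat{\eta}_{p_i}$ built from its own components.

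For the reverse inclusion I would fix common primitive roots $\eta_i$, which exist by Lemma \ref{lem 1} and are produced concretely in Proposition \ref{prop 2}, and analyze which integer lifts $(N_1, \ldots, N_s)$ of $(\mathrm{pr}(\widehat{\eta}_{p_1}), \ldots, \mathrm{pr}(\widehat{\eta}_{p_s}))$ are primitive root systems. The coordinate computation already shows that any such lift automatically satisfies $N_i \equiv 1 \pmod{p_j^{e_j}}$ for $j \neq i$, so condition (2) is free; what must still be imposed is condition (1), namely that $N_i$ be a common primitive root modulo $p_i^d$ for all $d$. The construction underlying Proposition \ref{prop 2} exhibits lifts meeting this requirement, so the primitive root systems are precisely those lifts of the tuple that retain the common-primitive-root property.

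The hard part, and the place I would argue most carefully, is exactly this residual requirement. Being a common primitive root modulo $p_i^d$ for all $d$ is not detected by the residue modulo $p_i^{e_i}$ alone --- already when $e_i = 1$ an integer congruent to a primitive root modulo $p_i$ can fail to be primitive modulo $p_i^2$ --- so it cannot be read off from the image in $\mathbb{Z}/n\mathbb{Z}$. The surviving information is carried by the full $p_i$-component of $\widehat{\eta}_{p_i}$ inside $\mathbb{Z}_{p_i}$, which is precisely why the statement is phrased through the profinite element $\widehat{\eta}_{p_i}$ rather than through a bare residue modulo $n$. Keeping this $p_i$-adic data separate from the finite reduction, and emphasizing that only liftings preserving primitivity at every power of $p_i$ qualify, is what makes the correspondence exact.
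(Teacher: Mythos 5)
The paper states this proposition without any proof, so there is no official argument to compare against; your CRT unwinding is the natural one and is essentially what the surrounding text implicitly relies on. Your forward direction is complete and correct: conditions (1) and (2) in the definition of a primitive root system are exactly the data needed to build $\widehat{\eta}_{p_{i}}$ and to match its reduction modulo $n = p_{1}^{e_{1}}\cdots p_{s}^{e_{s}}$ coordinatewise. You are also right to single out the converse as the delicate point, and your diagnosis is accurate: when some $e_{i}=1$, an integer $N_{i}$ congruent to $\mathrm{pr}(\widehat{\eta}_{p_{i}})$ modulo $n$ is only pinned down modulo $p_{i}$, and such an $N_{i}$ can satisfy $N_{i}^{p_{i}-1}\equiv 1 \pmod{p_{i}^{2}}$ and hence fail to be a primitive root modulo $p_{i}^{2}$; so not every lifting is a primitive root system, and the proposition's ``is exactly a lifting'' overstates the correspondence in that case. (When every $e_{i}\geq 2$ the converse does hold: being a primitive root modulo $p_{i}^{d}$ for all $d\geq 1$ is equivalent to being a primitive root modulo $p_{i}$ with $\eta^{p_{i}-1}\not\equiv 1\pmod{p_{i}^{2}}$, by Lemma \ref{lem 1}, and this depends only on the residue modulo $p_{i}^{2}$, which any lift modulo $p_{i}^{e_{i}}$ determines.) Your closing restriction to liftings that retain the common-primitive-root property is the correct repair. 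The one step I would tighten is the appeal to Proposition \ref{prop 2} in the reverse inclusion: it shows that at least one qualifying lift exists, which gives non-emptiness but neither proves nor is needed for the characterization; the $e_{i}\geq 2$ observation above is what actually closes the converse in the regime where it is true.
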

	
\subsection{$q$-cyclotomic cosets modulo an odd integer}\label{sec 3}
In this subsection we assume that $n = p_{1}^{e_{1}}\cdots p_{s}^{e_{s}}$ is an odd positive integer, where $p_{1},\cdots,p_{s}$ are distinct odd primes different from $p$ and $e_{1},\cdots,e_{s}$ are positive integers. For any integers $y_{1},\cdots,y_{s}$ such that $0 \leq y_{i} \leq e_{i}$, $i=1,\cdots,s$, we set the following notations:
\begin{itemize}
    \item[(1)] $f_{i,y_{i}} = \mathrm{ord}_{p_{i}^{e_{i}-y_{i}}}(q)$;
	\item[(2)] $g_{i,y_{i}} = \phi(p_{i}^{e_{i}-y_{i}})/ f_{i,y_{i}}$; and
	\item[(3)] $\tau_{y_{1}\cdots y_{s}} = \mathrm{ord}_{p_{1}^{e_{1}-y_{1}}\cdots p_{s}^{e_{s}-y_{s}}}(q) = \mathrm{lcm}(f_{1,y_{1}},\cdots,f_{s,y_{s}})$.
\end{itemize}
	
\begin{theorem}\label{thm 3}
	Fix a system of primitive roots $\eta=(\eta_{1},\cdots,\eta_{s})$ modulo $n = p_{1}^{e_{1}}\cdots p_{s}^{e_{s}}$. Then all the distinct $q$-cyclotomoic cosets modulo $n$ are given by
	$$c_{n/q}(\eta_{1}^{x_{1}}\cdots\eta_{s}^{x_{s}}p_{1}^{y_{1}}\cdots p_{s}^{y_{s}}) = \{\eta_{1}^{x_{1}}\cdots\eta_{s}^{x_{s}}p_{1}^{y_{1}}\cdots p_{s}^{y_{s}}, \eta_{1}^{x_{1}}\cdots\eta_{s}^{x_{s}}p_{1}^{y_{1}}\cdots p_{s}^{y_{s}}q,\cdots, \eta_{1}^{x_{1}}\cdots\eta_{s}^{x_{s}}p_{1}^{y_{1}}\cdots p_{s}^{y_{s}}q^{\tau_{y_{1}\cdots y_{s}}-1}\}$$
	for 
	\begin{description}
		\item[(1)] $0 \leq y_{i} \leq e_{i}$, $i = 1,\cdots,s$;
		\item[(2a)] $0 \leq x_{1} \leq g_{1,y_{1}}-1$; and 
		\item[(2b)] $0 \leq x_{i} \leq g_{i,y_{i}}\cdot \mathrm{gcd}(\mathrm{lcm}(f_{1,y_{1}},\cdots,f_{i-1,y_{i-1}}),f_{i,y_{i}}) - 1$, $i = 2,\cdots,s$.
	\end{description}
\end{theorem}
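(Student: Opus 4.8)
The plan is to organize the cosets according to the valuation vector $\vec{y} = (y_1,\dots,y_s)$, where $p_i^{y_i} \,\|\, \gamma$ (that is, $\gcd(\gamma,p_i^{e_i}) = p_i^{y_i}$, with $y_i=e_i$ meaning $p_i^{e_i}\mid\gamma$). This vector is constant along each coset and partitions $\mathbb{Z}/n\mathbb{Z}$, so it suffices to treat one $\vec y$ at a time. I would first record the size. Via the CRT isomorphism $\mathbb{Z}/n\mathbb{Z} \cong \prod_i \mathbb{Z}/p_i^{e_i}\mathbb{Z}$, the relation $\gamma q^{\tau} \equiv \gamma$ becomes $q^{\tau} \equiv 1 \pmod{p_i^{e_i-y_i}}$ for each $i$, so the smallest such $\tau$ is $\mathrm{lcm}_i \mathrm{ord}_{p_i^{e_i-y_i}}(q) = \tau_{y_1\cdots y_s}$, as claimed. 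Writing $\phi_i = \phi(p_i^{e_i-y_i})$, I would then build a dictionary from the primitive-root-system axioms: property (2) gives $\eta_1^{x_1}\cdots\eta_s^{x_s}p_1^{y_1}\cdots p_s^{y_s} \equiv \eta_i^{x_i}p_i^{y_i} \pmod{p_i^{e_i}}$, while property (1) makes $\eta_i$ a primitive root modulo $p_i^{e_i-y_i}$. Hence sending an element to the tuple of discrete logarithms (base $\eta_i$) of its unit parts identifies the set $S_{\vec y}$ of elements with valuation vector $\vec y$ with $\prod_i \mathbb{Z}/\phi_i\mathbb{Z}$, under which the stated representative maps to $(x_1,\dots,x_s)$ and multiplication by $q$ becomes translation by $(a_1,\dots,a_s)$, where $q \equiv \eta_i^{a_i} \pmod{p_i^{e_i-y_i}}$.

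Under this dictionary the cosets in $S_{\vec y}$ correspond exactly to the cosets of the cyclic subgroup $H = \langle (a_1,\dots,a_s)\rangle$ of $\prod_i\mathbb{Z}/\phi_i\mathbb{Z}$. Since $\gcd(a_i,\phi_i) = g_{i,y_i}$, the order of $a_i$ is $\phi_i/g_{i,y_i} = f_{i,y_i}$, so $|H| = \mathrm{lcm}_i f_{i,y_i} = \tau_{y_1\cdots y_s}$ and the number of cosets is $\prod_i \phi_i / \tau_{y_1\cdots y_s}$. A short telescoping computation confirms that the number of admissible tuples agrees with this: writing $L_i = \mathrm{lcm}(f_{1,y_1},\dots,f_{i,y_i})$ (so $L_0=1$, $L_s=\tau_{y_1\cdots y_s}$) and using $\gcd(L_{i-1},f_{i,y_i}) = L_{i-1}f_{i,y_i}/L_i$ together with $g_{i,y_i}f_{i,y_i}=\phi_i$, the product $g_{1,y_1}\prod_{i=2}^{s} g_{i,y_i}\gcd(L_{i-1},f_{i,y_i})$ collapses to $\prod_i\phi_i/\tau_{y_1\cdots y_s}$. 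It therefore suffices to prove that distinct admissible tuples land in distinct cosets; injectivity together with the matching cardinalities forces a bijection, and as $\vec y$ ranges over all valuation vectors these cosets exhaust $\mathbb{Z}/n\mathbb{Z}$.

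The heart of the argument, and the step I expect to be the main obstacle, is this injectivity. Suppose two admissible tuples satisfy $x_i - x_i' \equiv k a_i \pmod{\phi_i}$ for all $i$ with a common integer $k$. I would peel off the primes one at a time, maintaining the inductive claim that after treating $p_1,\dots,p_{i-1}$ one has $x_j = x_j'$ for $j<i$ and $L_{i-1}\mid k$. At the $i$-th step, $k$ being a multiple of $L_{i-1}$ confines $x_i - x_i'$ to the subgroup of $\mathbb{Z}/\phi_i\mathbb{Z}$ generated by $L_{i-1}a_i$, hence makes it divisible by $\gcd(L_{i-1}a_i,\phi_i)$; the key elementary identity $\gcd(L_{i-1}a_i,\phi_i) = g_{i,y_i}\gcd(L_{i-1},f_{i,y_i})$, obtained by writing $a_i = g_{i,y_i}b_i$ with $\gcd(b_i,f_{i,y_i})=1$, then combines with the range bound $|x_i-x_i'| < g_{i,y_i}\gcd(L_{i-1},f_{i,y_i})$ to force $x_i = x_i'$. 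Once this holds, $k a_i \equiv 0 \pmod{\phi_i}$ gives $f_{i,y_i}\mid k$, so $L_i\mid k$, completing the induction (the case $i=1$ is the same formula with $L_0=1$). I anticipate that the valuation and gcd bookkeeping here, and the care needed to match the asymmetric ranges with their extra $\gcd(L_{i-1},f_{i,y_i})$ factors, will be the most delicate part, whereas the CRT reduction, the size computation, and the counting are routine.
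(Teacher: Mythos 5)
Your proposal is correct and is essentially the paper's own argument recast in additive discrete-logarithm coordinates: your inductive injectivity step (peeling off primes, using $L_{i-1}\mid k$ and the identity $\gcd(L_{i-1}a_i,\phi_i)=g_{i,y_i}\gcd(L_{i-1},f_{i,y_i})$ against the range bounds) is the same induction the paper runs multiplicatively on the congruences $\eta_i^{x_i'-x_i}\equiv q^{d}\pmod{p_i^{e_i-y_i}}$, and your telescoping count is the paper's cardinality computation. The only slip is that modulo $p_i^{e_i}$ the representative reduces to $\eta_i^{x_i}p_1^{y_1}\cdots p_s^{y_s}$ rather than $\eta_i^{x_i}p_i^{y_i}$, but this merely shifts your coordinates by a fixed constant and affects nothing.
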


\begin{proof}
	First we prove that the given $q$-cyclotomic cosets are all distinct. Assume that
	$$c_{n/q}(\eta_{1}^{x_{1}}\cdots\eta_{s}^{x_{s}}p_{1}^{y_{1}}\cdots p_{s}^{y_{s}}) = c_{n/q}(\eta_{1}^{x_{1}^{\prime}}\cdots\eta_{s}^{x_{s}^{\prime}}p_{1}^{y_{1}^{\prime}}\cdots p_{s}^{y_{s}^{\prime}})$$ 
	where $x_{i}$ and $y_{i}$ (resp. $x_{i}^{\prime}$ and $y_{i}^{\prime}$), $i=1,\cdots,s$, satisfy the conditions $(1)$, $(2a)$ and $(2b)$. Then there exists a positive integer $d$ such that 
	\begin{equation}\label{eq 3}
		\eta_{1}^{x_{1}^{\prime}}\cdots\eta_{s}^{x_{s}^{\prime}}p_{1}^{y_{1}^{\prime}}\cdots p_{s}^{y_{s}^{\prime}} \equiv \eta_{1}^{x_{1}}\cdots\eta_{s}^{x_{s}}p_{1}^{y_{1}}\cdots p_{s}^{y_{s}}q^{d} \pmod{p_{1}^{e_{1}}\cdots p_{s}^{e_{s}}}.
	\end{equation}
	Since $\eta_{1},\cdots \eta_{s}$ and $q$ are coprime to $n$, comparing the greatest common divisor of the LHS and $n$ with that of the RHS and $n$ yields that $y_{1} = y_{1}^{\prime}, \cdots, y_{s} = y_{s}^{\prime}$. Equation \eqref{eq 3} is then reduced to 
	\begin{equation}\label{eq 4}
		\eta_{1}^{x_{1}^{\prime}}\cdots\eta_{s}^{x_{s}^{\prime}} \equiv \eta_{1}^{x_{1}}\cdots\eta_{s}^{x_{s}}q^{d} \pmod{p_{1}^{e_{1}-y_{1}}\cdots p_{s}^{e_{s}-y_{s}}}.
	\end{equation}
	Remembering that for each $1 \leq i \leq s$, $\eta_{i} \equiv 1 \pmod{p_{j}^{e_{j}}}$ for any $j \neq i$, thus we have
	\begin{equation}
		\left\{
		\begin{array}{rcl}
			\eta_{1}^{x_{1}^{\prime}-x_{1}} \equiv q^{d} \pmod{p_{1}^{e_{1}-y_{1}}}\\
			\eta_{2}^{x_{2}^{\prime}-x_{2}} \equiv q^{d} \pmod{p_{2}^{e_{2}-y_{2}}}\\
			\cdots \cdots\\
			\eta_{s}^{x_{s}^{\prime}-x_{s}} \equiv q^{d} \pmod{p_{s}^{e_{s}-y_{s}}}
		\end{array} \right.
	\end{equation}
	We show that $x_{i} = x_{i}^{\prime}$, $1 \leq i \leq s$, by induction. The congruence $\eta_{1}^{x_{1}^{\prime}-x_{1}} \equiv q^{d} \pmod{p_{1}^{e_{1}-y_{1}}}$ implies that
	$$\eta_{1}^{(x_{1}^{\prime}-x_{1})f_{1,y_{1}}} \equiv q^{d\cdot f_{1,y_{1}}} \equiv 1 \pmod{p_{1}^{e_{1}-y_{1}}},$$
	and therefore $\phi(p_{1}^{e_{1}-y_{1}}) \mid (x_{1}^{\prime}-x_{1})f_{1,y_{1}}$. Since $0 \leq x_{1}, x_{1}^{\prime} \leq g_{1,y_{1}} - 1$, it holds $x_{1} = x_{1}^{\prime}$. Supposes that $x_{1} = x_{1}^{\prime}, \cdots, x_{i-1} = x_{i-1}^{\prime}$ for $2 \leq i \leq s$. Then 
	$$q^{d} \equiv 1 \pmod{p_{1}^{e_{1}-y_{1}}\cdots p_{i-1}^{e_{i-1}-y_{i-1}}},$$
	which indicates that $\mathrm{lcm}(f_{1,y_{1}},\cdots,f_{i-1,y_{i-1}}) \mid d$. As $\eta_{i}^{x_{i}^{\prime}-x_{i}} \equiv q^{d} \pmod{p_{i}^{e_{i}-y_{i}}}$, then
	$$\eta_{i}^{(x_{i}^{\prime}-x_{i})\cdot \frac{f_{i,y_{i}}}{\mathrm{gcd}(\mathrm{lcm}(f_{1,y_{1}},\cdots,f_{i-1,y_{i-1}}),f_{i,y_{i}})}} \equiv q^{d\cdot \frac{f_{i,y_{i}}}{\mathrm{gcd}(\mathrm{lcm}(f_{1,y_{1}},\cdots,f_{i-1,y_{i-1}}),f_{i,y_{i}})}} \equiv 1 \pmod{p_{i}^{e_{i}-y_{i}}},$$
	and therefore 
	$$\phi(p_{i}^{e_{i}-y_{i}}) \mid (x_{i}^{\prime}-x_{i})\cdot \frac{f_{i,y_{i}}}{\mathrm{gcd}(\mathrm{lcm}(f_{1,y_{1}},\cdots,f_{i-1,y_{i-1}}),f_{i,y_{i}})}.$$ 
	It follows from $0 \leq x_{i}, x_{i}^{\prime} \leq g_{i,y_{i}} \cdot \mathrm{gcd}(\mathrm{lcm}(f_{1,y_{1}},\cdots,f_{i-1,y_{i-1}}),f_{i,y_{i}})-1$ that $x_{i} = x_{i}^{\prime}$.
	
	On the other hand, there are in total
	\begin{footnotesize}
		\begin{align*}
			&\sum_{0 \leq y_{1} \leq e_{1}}\cdots \sum_{0 \leq y_{s} \leq e_{s}} \dfrac{\phi(p_{1}^{e_{1}-y_{1}})}{f_{1,y_{1}}} \dfrac{\phi(p_{2}^{e_{2}-y_{2}})}{f_{2,y_{2}}} \mathrm{gcd}(f_{1,y_{1}},f_{2,y_{2}}) \cdots  \dfrac{\phi(p_{s}^{e_{s}-y_{s}})}{f_{s,y_{s}}} \mathrm{gcd}(\mathrm{lcm}(f_{1,y_{1}},\cdots,f_{s-1,y_{s-1}}),f_{s,y_{s}})\cdot \mathrm{lcm}(f_{1,y_{1}},\cdots,f_{s,y_{s}})\\
			&= \sum_{0 \leq y_{1} \leq e_{1}}\cdots \sum_{0 \leq y_{s} \leq e_{s}} \phi(p_{1}^{e_{1}-y_{1}}) \cdot\cdots\cdot \phi(p_{s}^{e_{s}-y_{s}}) = p_{1}^{e_{1}}\cdots p_{s}^{e_{s}}
		\end{align*}
	\end{footnotesize}
	elements contained in the cosets given in Theorem \ref{thm 3}. It follows that they are exactly all the distinct $q$-cyclotomic cosets modulo $n = p_{1}^{e_{1}}\cdots p_{s}^{e_{s}}$.
\end{proof}

\begin{remark}
	In the following context, we will write $(2a)$ and $(2b)$ uniformly as 
	\begin{description}
	    \item[(2)] $0 \leq x_{i} \leq g_{i,y_{i}}\cdot \mathrm{gcd}(\mathrm{lcm}(f_{1,y_{1}},\cdots,f_{i-1,y_{i-1}}),f_{i,y_{i}}) - 1$, $i =1,2,\cdots,s$
	\end{description}
	For $i = 1$, the term $\mathrm{gcd}(\mathrm{lcm}(f_{1,y_{1}},\cdots,f_{i-1,y_{i-1}}),f_{i,y_{i}})$ should be understood to be $1$. We call a $2s$-tuple $(y,x) = (y_{1},\cdots,y_{s},x_{1},\cdots,x_{s})$ which satisfies $(1)$ and $(2)$ in Theorem \ref{thm 3} an applicable tuple with respect to $n$ and $q$, and denote by $\Sigma = \Sigma(n;q)$ the set of all applicable tuples with respect to $n$ and $q$. Given a system of primitive roots $\eta=(\eta_{1},\cdots,\eta_{s})$ modulo $n$, we will simply write
	$$\eta^{(y,x)} = \eta_{1}^{x_{1}}\cdots\eta_{s}^{x_{s}}p_{1}^{y_{1}}\cdots p_{s}^{y_{s}}$$
	and
	$$\tau_{y} = \tau_{y_{1} \cdots y_{s}}$$
	for any $(y,x) = (y_{1},\cdots,y_{s},x_{1},\cdots,x_{s}) \in \Sigma$.
\end{remark}

\begin{corollary}
	The set
	$$\mathcal{CR}_{n/q} = \{\eta^{(y,x)} | (y,s) \in \Sigma\}$$
	is a full set of representatives of $q$-cyclotomic cosets modulo $n$.
\end{corollary}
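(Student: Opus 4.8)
The plan is to derive this directly from Theorem \ref{thm 3}, since the corollary is essentially a restatement of that result in the language of representatives rather than cosets. First I would recall that a full set of representatives of the $q$-cyclotomic cosets modulo $n$ is a subset of $\mathbb{Z}/n\mathbb{Z}$ that meets each coset in exactly one element. Thus it suffices to verify two properties of the set $\mathcal{CR}_{n/q} = \{\eta^{(y,x)} \mid (y,x) \in \Sigma\}$: that every $q$-cyclotomic coset modulo $n$ contains at least one element of $\mathcal{CR}_{n/q}$ (exhaustiveness), and that no coset contains two distinct elements of $\mathcal{CR}_{n/q}$ (irredundancy).

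For exhaustiveness, I would invoke the second half of Theorem \ref{thm 3}: as $(y,x)$ ranges over $\Sigma$, the cosets $c_{n/q}(\eta^{(y,x)})$ are exactly all the $q$-cyclotomic cosets modulo $n$. Since $\eta^{(y,x)}$ is by construction an element of $c_{n/q}(\eta^{(y,x)})$, each coset therefore contains its corresponding representative $\eta^{(y,x)} \in \mathcal{CR}_{n/q}$.

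For irredundancy, I would use the first half of Theorem \ref{thm 3}, which asserts that the cosets $c_{n/q}(\eta^{(y,x)})$ are pairwise distinct for distinct applicable tuples. If two representatives $\eta^{(y,x)}$ and $\eta^{(y',x')}$ with $(y,x) \neq (y',x')$ lay in the same coset, then $c_{n/q}(\eta^{(y,x)}) = c_{n/q}(\eta^{(y',x')})$, contradicting this distinctness. In particular the representatives $\eta^{(y,x)}$ are themselves pairwise distinct elements of $\mathbb{Z}/n\mathbb{Z}$, since each lies in its own coset and distinct cosets are disjoint, and each coset contains exactly one of them. Combining the two properties shows that $\mathcal{CR}_{n/q}$ is a full set of representatives.

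The main point here, rather than a genuine obstacle, is simply to unwind the definition of a full set of representatives and match its two defining conditions against the two assertions already packaged into Theorem \ref{thm 3}, namely exhaustiveness and pairwise distinctness. No new computation is required; the counting identity that drove the proof of Theorem \ref{thm 3} has already done all the work, so I expect this proof to be a short bookkeeping argument.
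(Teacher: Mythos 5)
Your proposal is correct and matches the paper's intent exactly: the paper states this corollary without proof, treating it as an immediate consequence of Theorem \ref{thm 3}, and your unwinding of the definition into exhaustiveness and irredundancy is precisely the bookkeeping that justifies that omission. No discrepancy to report.
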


On the other hand, the sizes of $q$-cyclotomic cosets modulo $n$ can be given more precisely as below.

\begin{proposition}\label{prop 4}
	For any $(y,x)\in \Sigma$, the coset $c_{n/q}(\eta^{(y,x)})$ has size
	$$\tau_{y}= \omega_{y}\cdot p_{1}^{M_{y_{1}}}\cdots p_{s}^{M_{y_{s}}},$$
	where for $i = 1,\cdots,s$, $m_{y_{i}} = \mathrm{min}\{e_{i}-y_{i},1\}$, $\omega_{y}=\mathrm{lcm}(\mathrm{ord}_{p_{1}^{m_{y_{1}}}}(q),\cdots,\mathrm{ord}_{p_{s}^{m_{y_{s}}}}(q))$ and $M_{y_{i}} = \mathrm{max}\{e_{i}-y_{i}-v_{p_{i}}(q^{\omega_{y}}-1),0\}$.
\end{proposition}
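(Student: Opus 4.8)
The plan is to start from the closed form $\tau_y = \mathrm{lcm}(f_{1,y_1},\dots,f_{s,y_s})$ with $f_{i,y_i} = \mathrm{ord}_{p_i^{e_i-y_i}}(q)$, which is already recorded in notation (3) preceding Theorem \ref{thm 3}, and then to reduce the desired identity $\tau_y = \omega_y \prod_i p_i^{M_{y_i}}$ to a prime-by-prime comparison of $\ell$-adic valuations. The first step is to obtain an explicit expression for each single-prime order. Writing $d_i = \mathrm{ord}_{p_i}(q)$ and $c_i = v_{p_i}(q^{d_i}-1) \ge 1$, I claim that for $e_i - y_i \ge 1$,
$$f_{i,y_i} = d_i \cdot p_i^{\max\{e_i-y_i-c_i,\,0\}},$$
while $f_{i,y_i} = 1$ when $y_i = e_i$. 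This follows from Lemma \ref{lem 3}: any $\tau$ with $q^\tau \equiv 1 \pmod{p_i^{e_i-y_i}}$ must be a multiple $d_i k$ of $d_i$, and $v_{p_i}(q^{d_i k}-1) = c_i + v_{p_i}(k)$; since $\gcd(d_i,p_i)=1$ (as $d_i \mid p_i-1$), the smallest admissible $k$ is $p_i^{\max\{e_i-y_i-c_i,0\}}$.

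Next I identify the two auxiliary quantities appearing in the statement. Because $\mathrm{ord}_{p_i^{m_{y_i}}}(q)$ equals $d_i$ when $m_{y_i}=1$ (i.e. $y_i<e_i$) and equals $1$ when $m_{y_i}=0$ (i.e. $y_i=e_i$), one gets $\omega_y = \mathrm{lcm}\{d_i : y_i < e_i\}$. For each $i$ with $y_i<e_i$ the divisor $d_i$ divides $\omega_y$, so applying Lemma \ref{lem 3} once more with base $q^{d_i}$ yields $v_{p_i}(q^{\omega_y}-1) = c_i + v_{p_i}(\omega_y/d_i) = c_i + v_{p_i}(\omega_y)$, the last equality holding because $v_{p_i}(d_i)=0$. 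Substituting this into the definition of $M_{y_i}$ gives $M_{y_i} = \max\{e_i-y_i-c_i-v_{p_i}(\omega_y),\,0\}$, whence the $p_i$-adic valuation of the right-hand side is $v_{p_i}(\omega_y)+M_{y_i} = \max\{e_i-y_i-c_i,\ v_{p_i}(\omega_y)\}$.

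It remains to match valuations at every prime. For a prime $\ell \notin\{p_1,\dots,p_s\}$ one has $v_\ell(f_{i,y_i}) = v_\ell(d_i)$ when $y_i<e_i$ and $0$ otherwise, so $v_\ell(\tau_y) = \max_{y_i<e_i} v_\ell(d_i) = v_\ell(\omega_y) = v_\ell(\omega_y \prod_i p_i^{M_{y_i}})$. For $\ell = p_i$, the factor $f_{i,y_i}$ contributes $\max\{e_i-y_i-c_i,0\}$ to $v_{p_i}(\tau_y)$ (or $0$ if $y_i=e_i$), while each $f_{j,y_j}$ with $j\neq i$ contributes $v_{p_i}(d_j)$; taking the maximum over $j$ and using $v_{p_i}(\omega_y)=\max_{j\neq i,\,y_j<e_j}v_{p_i}(d_j)$ (valid since $v_{p_i}(d_i)=0$) gives $v_{p_i}(\tau_y)=\max\{e_i-y_i-c_i,\ v_{p_i}(\omega_y)\}$, agreeing with the previous paragraph. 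As all $\ell$-adic valuations coincide, the two sides are equal.

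The main point requiring care is precisely the cross terms $v_{p_i}(d_j)$ with $j \neq i$, which are nonzero exactly when $p_i \mid p_j-1$: one must verify that their contribution to the lcm $\tau_y$ is already absorbed into $v_{p_i}(\omega_y)$ and therefore does not perturb the pure $p_i$-power governed by $M_{y_i}$. The degenerate case $y_i=e_i$, where $f_{i,y_i}=1$, $\mathrm{ord}_{p_i^{m_{y_i}}}(q)=1$, and $M_{y_i}=0$, should be isolated and checked separately so that the valuation identities above hold uniformly.
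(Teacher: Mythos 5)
Your proposal is correct and follows essentially the same route as the paper: identify $\tau_y=\mathrm{lcm}(f_{1,y_1},\dots,f_{s,y_s})$, observe that $\omega_y$ is the lcm of the orders modulo the radical, and apply the lift-the-exponent lemma for odd primes to each $p_i$. The paper compresses this into two lines (and in fact cites the $2$-adic Lemma \ref{lem 2} where the odd-prime Lemma \ref{lem 3} is meant), whereas you carry out the prime-by-prime valuation comparison in full — in particular the verification that the cross terms $v_{p_i}(d_j)$, $j\neq i$, are absorbed into $v_{p_i}(\omega_y)$ is exactly the point the paper's proof silently skips, so your write-up is the more complete one.
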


\begin{proof}
	Notice that $ p_{1}^{m_{y_{1}}}\cdots p_{s}^{m_{y_{s}}} = \mathrm{rad}(p_{1}^{e_{1}-y_{1}}\cdots p_{s}^{e_{s}-y_{s}})$, therefore
	$$\omega_{y} = \mathrm{lcm}(\mathrm{ord}_{p_{1}^{m_{y_{1}}}}(q),\cdots,\mathrm{ord}_{p_{s}^{m_{y_{s}}}}(q)) \mid \mathrm{lcm}(\mathrm{ord}_{p_{1}^{e_{1}-y_{1}}}(q),\cdots,\mathrm{ord}_{p_{s}^{e_{s}-y_{s}}}(q)) = \tau_{y}.$$
	Applying Lemma \ref{lem 2} for the primes $p_{1},\cdots,p_{s}$, respectively, yields that 
	$$\tau_{y} = \omega_{y}\cdot p_{1}^{M_{y_{1}}}\cdots p_{s}^{M_{y_{s}}}.$$
\end{proof}

\subsection{$q$-cyclotomic cosets modulo an even positive integer}\label{sec 4}
\subsubsection{The classification of $q$-cyclotomic cosets modulo $n$}
Let $m$ and $n$ be positive integers such that $m \mid n$. There is a canonical projection
$$\pi_{n/m}: \mathbb{Z}/n\mathbb{Z} \rightarrow \mathbb{Z}/m\mathbb{Z}; \ a+n\mathbb{Z} \mapsto a+m\mathbb{Z}.$$
If, moreover, both $m$ and $n$ are coprime to $q$, then the image $\pi_{n/m}(c_{n/q}(\gamma))$ of any $q$-cyclotomic coset $c_{n/q}(\gamma)$ modulo $n$ is exactly the coset $c_{m/q}(\pi_{n/m}(\gamma))$ modulo $m$ containing $\pi_{n/m}(\gamma)$. Consequently, $\pi_{n/m}$ induces a map
$$\widehat{\pi}_{n/m}: \mathcal{C}_{n/q} \rightarrow \mathcal{C}_{m/q}: \ c_{n/q}(\gamma) \mapsto c_{m/q}(\pi_{n/m}(\gamma)).$$
It is obvious to see that $\widehat{\pi}_{n/m}$ is a surjection.

Now let $n=2^{e_{0}}p_{1}^{e_{1}}\cdots p_{s}^{e_{s}}$, where $p_{1},\cdots,p_{s}$ are distinct odd primes different from $p$ and $e_{0},e_{1},\cdots,e_{s}$ are positive integers, and let $n^{\prime} = p_{1}^{e_{1}}\cdots p_{s}^{e_{s}}$ be the maximal odd divisor of $n$. We show that the space $\mathcal{C}_{n/q}$ can be naturally classified by the preimages of the cosets in $\mathcal{C}_{n^{\prime}/q}$ along $\pi_{n/n^{\prime}}$.

\begin{proposition}\label{prop 1}
	Let $\eta=(\eta_{1},\cdots,\eta_{s})$ be a primitive root system modulo $n^{\prime}$. Then the space $\mathcal{C}_{n/q}$ of $q$-cyclotomic cosets modulo $n$ has the following partition:
	\begin{equation}\label{eq 1}
		\mathcal{C}_{n/q} = \bigsqcup_{(y,x) \in \Sigma(n^{\prime};q)} \widehat{\pi}_{n/n^{\prime}}^{-1}(c_{n^{\prime}/q}(\eta^{(y,x)})).
	\end{equation}
	Moreover, the partition \eqref{eq 1} is compatible with the projective system
	$$\cdots \rightarrow \mathcal{C}_{2^{2}n^{\prime}/q} \rightarrow \mathcal{C}_{2n^{\prime}/q} \rightarrow \mathcal{C}_{n^{\prime}/q}$$
	in the sense that for any $N = 2^{E_{0}}p_{1}^{e_{1}}\cdots p_{s}^{e_{s}}$ with $E_{0} \geq e_{0}$ it holds that
	$$\widehat{\pi}_{N/n^{\prime}}^{-1}(c_{n^{\prime}/q}(\eta^{(y,x)})) = \bigsqcup_{c_{n/q}(\gamma) \in \widehat{\pi}_{n/n^{\prime}}^{-1}(c_{n^{\prime}/q}(\eta^{(y,x)}))} \widehat{\pi}_{N/n}^{-1}(c_{n/q}(\gamma))$$
	for any $(y,x) \in \Sigma(n^{\prime};q)$, and 
	$$\mathcal{C}_{N/q} = \bigsqcup_{(y,x) \in \Sigma(n^{\prime};q)}\left(\bigsqcup_{c_{n/q}(\gamma) \in \widehat{\pi}_{n/n^{\prime}}^{-1}(c_{n^{\prime}/q}(\eta^{(y,x)}))} \widehat{\pi}_{N/n}^{-1}(c_{n/q}(\gamma))\right).$$
\end{proposition}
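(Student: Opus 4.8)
The plan is to reduce the entire statement to two elementary facts: that preimages of a partition under an arbitrary map again form a partition of the source, and that the induced projections are functorial, i.e. $\widehat{\pi}_{N/n'} = \widehat{\pi}_{n/n'} \circ \widehat{\pi}_{N/n}$. Once these are in place the three displayed equations follow by formal manipulation, and the only mathematical input beyond this section is Theorem \ref{thm 3}.

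First I would establish the partition \eqref{eq 1}. By Theorem \ref{thm 3} and the corollary following it, $\{c_{n'/q}(\eta^{(y,x)}) : (y,x) \in \Sigma(n';q)\}$ is precisely $\mathcal{C}_{n'/q}$, the cosets being pairwise distinct; thus these cosets partition $\mathcal{C}_{n'/q}$ into singletons. Applying $\widehat{\pi}_{n/n'}^{-1}$ and using that preimages of disjoint subsets are disjoint while preimages of an exhaustive family are exhaustive yields $\mathcal{C}_{n/q} = \bigsqcup_{(y,x)} \widehat{\pi}_{n/n'}^{-1}(c_{n'/q}(\eta^{(y,x)}))$. Surjectivity of $\widehat{\pi}_{n/n'}$, already noted in the text, guarantees that no piece is empty, so this is a genuine partition.

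Next I would prove the compatibility. Since $n' \mid n \mid N$ — the second divisibility using $E_{0} \geq e_{0}$ — the canonical projections compose as $\pi_{N/n'} = \pi_{n/n'} \circ \pi_{N/n}$, and passing to orbits via $\pi_{n/m}(c_{n/q}(\gamma)) = c_{m/q}(\pi_{n/m}(\gamma))$ gives $\widehat{\pi}_{N/n'} = \widehat{\pi}_{n/n'} \circ \widehat{\pi}_{N/n}$. Hence $\widehat{\pi}_{N/n'}^{-1}(c_{n'/q}(\eta^{(y,x)})) = \widehat{\pi}_{N/n}^{-1}\bigl(\widehat{\pi}_{n/n'}^{-1}(c_{n'/q}(\eta^{(y,x)}))\bigr)$. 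Writing the inner set as the family of cosets $c_{n/q}(\gamma)$ it contains and distributing $\widehat{\pi}_{N/n}^{-1}$ over this union produces the middle display; the union is disjoint because the preimages of distinct cosets under $\widehat{\pi}_{N/n}$ are disjoint. Finally, the last display is obtained by invoking \eqref{eq 1} with $n$ replaced by $N$ and substituting the compatibility equation term by term.

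The crucial, and only genuinely delicate, point is that \eqref{eq 1} applies verbatim to $N$: this is legitimate precisely because $N = 2^{E_{0}} n'$ has the same maximal odd divisor $n'$, so $\mathcal{C}_{N/q}$ is indexed by the same set $\Sigma(n';q)$, which depends only on the odd part. The hard part will therefore be bookkeeping rather than mathematics — keeping the index sets straight, confirming functoriality of $\widehat{\pi}$, and verifying that the nested disjoint unions conceal no collision. Each of these is routine given Theorem \ref{thm 3}, and no new estimate or construction is needed.
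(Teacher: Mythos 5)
Your proposal is correct and follows essentially the same route as the paper's own proof: the partition \eqref{eq 1} from Theorem \ref{thm 3} together with surjectivity of $\widehat{\pi}_{n/n^{\prime}}$, and the compatibility from the factorization $\widehat{\pi}_{N/n^{\prime}} = \widehat{\pi}_{n/n^{\prime}} \circ \widehat{\pi}_{N/n}$ induced by the composition of canonical projections. You simply spell out the set-theoretic bookkeeping (disjointness and exhaustiveness of preimages, and the observation that $N$ has the same maximal odd divisor $n^{\prime}$) that the paper leaves implicit.
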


\begin{proof}
	The first assertion follows from Theorem \ref{thm 3} and the fact that $\widehat{\pi}_{n/n^{\prime}}$ is surjective. If $N = 2^{E_{0}}p_{1}^{e_{1}}\cdots p_{s}^{e_{s}}$ with $E_{0} \geq e_{0}$, then the projection $\pi_{N/n^{\prime}}$ is the composition $\pi_{n/n^{\prime}}\cdot\pi_{N/n}$, which implies that 
	$$\widehat{\pi}_{N/n^{\prime}} = \widehat{\pi}_{n/n^{\prime}}\cdot \widehat{\pi}_{N/n}.$$
	Therefore the last conclusion holds.
\end{proof}

\subsubsection{Cyclotomic system}
According to Proposition \ref{prop 1}, to determine a full set or representatives of $q$-cyclotomic cosets modulo $n$ it suffices to compute the preimage $\widehat{\pi}_{n/n^{\prime}}^{-1}(c_{n^{\prime}/q}(\eta^{(y,x)}))$ for all $(y,x) \in \Sigma(n^{\prime};q)$. For this purpose we introduce the definition of cyclotomic system in this subsection. 

Let $q$ be a power of an odd prime $p$, and $n= p_{1}^{e_{1}}\cdots p_{s}^{e_{s}}$ be an odd integer, where $p_{1},\cdots,p_{s}$ are distinct odd primes different from $p$ and $e_{1},\cdots,e_{s}$ are positive integers. Consider the chain of projections
$$\cdots \rightarrow \mathbb{Z}/2^{2}n\mathbb{Z} \rightarrow \mathbb{Z}/2n\mathbb{Z} \rightarrow \mathbb{Z}/n\mathbb{Z}.$$
It gives rise to a chain of maps
$$\cdots \rightarrow \mathcal{C}_{2^{2}n/q} \rightarrow \mathcal{C}_{2n/q} \rightarrow \mathcal{C}_{n/q},$$
which forms a projective system of finite sets. Moreover, for any $q$-cyclotomic coset $c_{n/q}(\gamma)$ modulo $n$, the subsets
$$\cdots \rightarrow \widehat{\pi}_{2^{2}n/n}^{-1}(c_{n/q}(\gamma)) \rightarrow \widehat{\pi}_{2n/n}^{-1}(c_{n/q}(\gamma)) \rightarrow \{c_{n/q}(\gamma)\}$$
also forms a projective system.

\begin{definition}
	Define the ($2$-adic) q-cyclotomic system with base module $n$ to be the projective limit
	$$\mathcal{PC}_{n/q} = \varprojlim_{i \in \mathbb{N}}\mathcal{C}_{2^{i}n/q}.$$
	Fix a $q$-cyclotomic coset $c_{n/q}(\gamma)$ modulo $n$, the ($2$-adic) $q$-cyclotomic system over $c_{n/q}(\gamma)$ is defined to be the projective limit 
	$$\mathcal{PC}_{n/q}(\gamma) = \varprojlim_{i \in \mathbb{N}}\widehat{\pi}_{2^{i}n/n}^{-1}(c_{n/q}(\gamma)).$$
\end{definition}

Explicitly, elements in the profinite space $\mathcal{PC}_{n/q}$ are compatible sequences $(c_{2^{i}n/q}(\gamma_{i}))_{i \in \mathbb{N}}$ of cyclotomic cosets, where the compatibility means that for any $i_{1} \geq i_{2}$ it holds that 
$$\pi_{2^{i_{1}}n/2^{i_{2}}n}(c_{2^{i_{1}}n/q}(\gamma_{i_{1}})) = c_{2^{i_{2}}n/q}(\gamma_{i_{2}}).$$
And $\mathcal{PC}_{n/q}(\gamma)$ is the sub-profinite space of $\mathcal{PC}_{n/q}$ consisting of the sequences with the first component being $c_{n/q}(\gamma)$. It follows from the definition that $\mathcal{PC}_{n/q}$ can be decomposed as 
$$\mathcal{PC}_{n/q} = \bigsqcup_{\gamma \in \mathcal{CR}_{n/q}} \mathcal{PC}_{n/q}(\gamma),$$
where $\mathcal{CR}_{n/q}$ is any full set of representatives of $q$-cyclotomic cosets modulo $n$.

\begin{lemma}\label{lem 4}
	Let $m$ be an arbitrary positive integer, and $q$ be an odd prime power that is coprime to $m$. Let $\gamma$ be an element in $\mathbb{Z}/ m\mathbb{Z}$ with the associated $q$-cyclotomic coset modulo $m$ given by
	$$c_{m/q}(\gamma) = \{\gamma,\gamma q,\cdots,\gamma q^{\tau-1}\}.$$
	\begin{itemize}
		\item[(1)] If $2m \mid \gamma q^{\tau}-\gamma$, that is, $v_{2}(q^{\tau}-1) + v_{2}(\gamma) \geq v_{2}(m)+1$, then viewing $\gamma$ as an element in $\mathbb{Z}/2m\mathbb{Z}$, the $q$-cyclotomic coset modulo $2m$ containing $\gamma$ is 
		$$c_{2m/q}(\gamma) =  \{\gamma,\gamma q,\cdots,\gamma q^{\tau-1}\}.$$
		Moreover, the $q$-cyclotomic coset
		$$c_{2m/q}(m+\gamma) = \{m+\gamma,(m+\gamma)q,\cdots,(m+\gamma)q^{\tau-1}\}$$
		 modulo $2m$ containing $m+\gamma$ is disjoint with $c_{2m/q}(\gamma)$, and the union $c_{2m/q}(\gamma) \sqcup c_{2m/q}(m+\gamma)$ is exactly the preimage of $c_{m/q}(\gamma)$ under the projection $\pi_{2m/m}$.
		\item[(2)] If $2m \nmid \gamma q^{\tau}-\gamma$, that is, $v_{2}(q^{\tau}-1) + v_{2}(\gamma) < v_{2}(m)+1$, then viewing $\gamma$ as an element in $\mathbb{Z}/2m\mathbb{Z}$, the $q$-cyclotomic coset modulo $2m$ containing $\gamma$ is 
		$$c_{2m/q}(\gamma) =  \{\gamma,\gamma q,\cdots,\gamma q^{2\tau-1}\}.$$
		Moreover, the coset $c_{2m/q}(\gamma)$ contains $m+\gamma$, and is exactly the preimage of $c_{m/q}(\gamma)$ under the projection $\pi_{2m/m}$.
	\end{itemize}
\end{lemma}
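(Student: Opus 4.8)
The plan is to reduce both parts to a single quantity: the size $\tau'$ of the coset $c_{2m/q}(\gamma)$ modulo $2m$, i.e. the least positive integer with $2m \mid \gamma(q^{\tau'}-1)$. Since $m \mid 2m$, any such $\tau'$ also satisfies $m \mid \gamma(q^{\tau'}-1)$, so $\tau \mid \tau'$; moreover the odd part of the divisibility condition for $2m$ is identical to that for $m$ and is automatic once $\tau \mid \tau'$ (as $q^{\tau}-1 \mid q^{\tau'}-1$), so determining $\tau'$ becomes a purely $2$-adic valuation problem. Setting $a = v_{2}(\gamma)$, $b = v_{2}(q^{\tau}-1)$ and $V = v_{2}(m)$, the condition $2m \mid \gamma(q^{\tau'}-1)$ reads $a + v_{2}(q^{\tau'}-1) \geq V+1$, and I want the least multiple of $\tau$ realizing it.

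First I would record the dichotomy. The defining relation $m \mid \gamma(q^{\tau}-1)$ forces $a + b \geq V$, so the Case $(2)$ hypothesis $a+b < V+1$ in fact pins down $a+b = V$ exactly, an equality I will use throughout Case $(2)$. In Case $(1)$ the hypothesis $a+b \geq V+1$ says that $\tau' = \tau$ already satisfies the divisibility, and with $\tau \mid \tau'$ this gives $\tau' = \tau$. In Case $(2)$, $\tau'=\tau$ fails, and I would show $\tau' = 2\tau$ via the factorization $q^{2\tau}-1 = (q^{\tau}-1)(q^{\tau}+1)$: as $q^{\tau}$ is odd, $q^{\tau}-1$ and $q^{\tau}+1$ are consecutive even integers, exactly one divisible by $4$, whence $v_{2}(q^{2\tau}-1) = v_{2}(q^{\tau}-1) + v_{2}(q^{\tau}+1) \geq b+1 = V+1-a$ uniformly in both residue classes $q^{\tau} \equiv 1, 3 \pmod 4$. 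Thus $2m \mid \gamma(q^{2\tau}-1)$, and minimality among multiples of $\tau$ forces $\tau' = 2\tau$. (Lemma \ref{lem 2} would serve equally well, but the factorization avoids its case split.)

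With the sizes settled, the set-theoretic claims rest on the observation that $m q^{j} \equiv m \pmod{2m}$ for every $j$, since $q^{j}$ is odd; hence $(m+\gamma)q^{j} \equiv m + \gamma q^{j} \pmod{2m}$. In Case $(1)$ this shows $c_{2m/q}(m+\gamma) = \{m + \gamma q^{j} : 0 \leq j \leq \tau-1\}$, a set of $\tau$ distinct elements closing up after $\tau$ steps. The cosets $c_{2m/q}(\gamma)$ and $c_{2m/q}(m+\gamma)$ are either equal or disjoint; equality would mean $m+\gamma \equiv \gamma q^{j} \pmod{2m}$ for some $0 \leq j \leq \tau-1$, which upon reduction modulo $m$ forces $\tau \mid j$, hence $j=0$ and $m \equiv 0 \pmod{2m}$, a contradiction, so they are disjoint. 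Their disjoint union has $2\tau$ elements and lies inside $\pi_{2m/m}^{-1}(c_{m/q}(\gamma))$, which is the preimage of a $\tau$-element set under the $2$-to-$1$ map $\pi_{2m/m}$ and so also has $2\tau$ elements; the two sets therefore coincide. In Case $(2)$, from $\gamma q^{\tau} \equiv \gamma \pmod m$ but $\gamma q^{\tau} \not\equiv \gamma \pmod{2m}$ it follows that $\gamma q^{\tau}$ is the second lift $m+\gamma$, so $m+\gamma \in c_{2m/q}(\gamma)$; this coset has $2\tau$ elements and is contained in $\pi_{2m/m}^{-1}(c_{m/q}(\gamma))$, which likewise has $2\tau$ elements, so the two coincide.

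The step I expect to be most delicate is the valuation bookkeeping in Case $(2)$: one must confirm that $a+b \geq V$ is genuinely forced by $m \mid \gamma(q^{\tau}-1)$, so that the hypothesis collapses to $a+b=V$, and that the estimate $v_{2}(q^{2\tau}-1) \geq b+1$ holds uniformly without separately tracking whether $q^{\tau} \equiv 1$ or $3 \pmod 4$. Once the sizes $\tau$ and $2\tau$ are in hand, the remaining assertions are orbit counting against the known fiber size of $\pi_{2m/m}$.
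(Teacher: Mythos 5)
Your proof is correct and follows essentially the same route as the paper's: determine the size of $c_{2m/q}(\gamma)$ in each case, use $(m+\gamma)q^{j}\equiv m+\gamma q^{j}\pmod{2m}$ to control the coset of $m+\gamma$, and identify the preimage of $c_{m/q}(\gamma)$. The only cosmetic differences are that in Case (2) you obtain $v_{2}(q^{2\tau}-1)\geq v_{2}(q^{\tau}-1)+1$ directly from the factorization $q^{2\tau}-1=(q^{\tau}-1)(q^{\tau}+1)$ instead of invoking the case split of Lemma \ref{lem 2}, and you spell out the two-to-one fiber count that the paper dismisses as straightforward.
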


\begin{proof}
	\begin{itemize}
		\item[(1)] Since $\tau$ is the smallest positive integer such that $\gamma q^{\tau} \equiv \gamma \pmod{m}$ and $2m \mid \gamma q^{\tau}-\gamma$, then $\tau$ is also the smallest positive integer such that $\gamma q^{\tau} \equiv \gamma \pmod{2m}$. Therefore we have
		$$c_{2m/q}(\gamma) = \{\gamma,\gamma q,\cdots,\gamma q^{\tau-1}\}.$$
		
		Also by $2m \mid \gamma q^{\tau}-\gamma$, the congruence equation $\gamma q^{x} \equiv m+\gamma \pmod{2m}$ has no solution. It follows that $c_{2m/q}(\gamma)$ and $c_{2m/q}(n+\gamma)$ are disjoint. As $q$ is odd,
		$$(m+\gamma)q^{j} = mq^{j} + \gamma q^{j} \equiv m + \gamma q^{j} \pmod{2m}$$
		for any $j \in \mathbb{N}$, which implies that $(m+\gamma)q^{j} \equiv m+\gamma \pmod{2m}$ if and only if $\gamma q^{j} \equiv \gamma \pmod{2m}$. Thus we obtain
		$$c_{2m/q}(m+\gamma) = \{m+\gamma,(m+\gamma)q,\cdots,(m+\gamma)q^{\tau-1}\} = \{m+\gamma,m+\gamma q,\cdots,m+\gamma q^{\tau-1}\}.$$
		It is straightforward to check that the disjoint union $c_{2m/q}(\gamma) \sqcup c_{2m/q}(n+\gamma)$ is exactly the preimage of $c_{m/q}(\gamma)$ under $\pi_{2m/m}$.
		
		\item[(2)] Since $m \mid \gamma q^{\tau}-\gamma$ and $2m \nmid \gamma q^{\tau}-\gamma$, we have
		$$v_{2}(\gamma q^{\tau}-\gamma) = v_{2}(q^{\tau}-1) + v_{2}(\gamma) = v_{2}(m).$$
		It follows from Lemma \ref{lem 2} that $2\tau$ is the smallest positive integer such that 
		$$\gamma q^{2\tau} \equiv \gamma \pmod{2m}.$$
		Hence the $q$-cyclotomic coset modulo $2m$ containing $\gamma$ is 
		$$c_{2m/q}(\gamma) = \{\gamma,\gamma q,\cdots,\gamma q^{2\tau-1}\}.$$
		
		Also by $m \mid \gamma q^{\tau}-\gamma$ and $2m \nmid \gamma q^{\tau}-\gamma$, we obtain
		$$\gamma q^{\tau} \equiv m+\gamma \pmod{2m},$$
		which implies that $m+\gamma \in c_{2m/q}(\gamma)$. Furthermore, it still holds that $(m+\gamma)q^{j} \equiv m + \gamma q^{j} \pmod{2m}$ for any $j \in \mathbb{N}$. Hence the coset
		$$c_{2m/q}(\gamma) = c_{2m/q}(m+\gamma) = \{\gamma,\gamma q,\cdots,\gamma q^{\tau-1},m+\gamma,m+\gamma q,\cdots,m+\gamma q^{\tau-1}\}$$
		is exactly the preimage of $c_{m/q}(\gamma)$ under $\pi_{2m/m}$.
	\end{itemize}
\end{proof}

If $\gamma \in \mathbb{Z}/m\mathbb{Z}$ fits the first condition in Lemma \ref{lem 4}, that is,
$$\pi_{2m/m}^{-1}(c_{m/q}(\gamma)) = c_{2m/q}(\gamma) \sqcup c_{2m/q}(m+\gamma),$$
then we say that $c_{m/q}(\gamma)$ is splitting with respect to the extension $\mathbb{Z}/2m\mathbb{Z} : \mathbb{Z}/m\mathbb{Z}$. Otherwise, $c_{m/q}(\gamma)$ satisfies
$$\pi_{2m/m}^{-1}(c_{m/q}(\gamma)) = c_{2m/q}(\gamma) = c_{2m/q}(m+\gamma),$$
in which case we say that $c_{m/q}(\gamma)$ is nonsplitting or stable with respect to the extension $\mathbb{Z}/2m\mathbb{Z} : \mathbb{Z}/m\mathbb{Z}$. It can be verified that whether or not a cyclotomic coset split does not depend on the choice of the representative of the coset, thus the above definitions are well-defined.

\begin{corollary}\label{coro 1}
	If $n= p_{1}^{e_{1}}\cdots p_{s}^{e_{s}}$ is an odd integer, where $p_{1},\cdots,p_{s}$ are distinct odd primes different from $p$ and $e_{1},\cdots,e_{s}$ are positive integers, then all the $q$-cyclotomic cosets modulo $2n$ are given by
	$$c_{2n/q}(\eta^{(y,x)}) = \{\eta^{(y,x)}, \eta^{(y,x)}q, \cdots, \eta^{(y,x)}q^{\tau_{y}-1}\},$$
	and
    $$c_{2n/q}(\eta^{(y,x)}) =\{\eta^{(y,x)}+n, \eta^{(y,x)}q+n, \cdots, \eta^{(y,x)}q^{\tau_{y}-1}+n\},$$
	for all $(y,x) \in \Sigma(n;q)$, where $\eta = (\eta_{1},\cdots,\eta_{s})$ is a primitive root system modulo $n$.
\end{corollary}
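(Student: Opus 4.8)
The plan is to show that \emph{every} $q$-cyclotomic coset modulo $n$ is splitting with respect to the extension $\mathbb{Z}/2n\mathbb{Z} : \mathbb{Z}/n\mathbb{Z}$, and then to read off the two resulting cosets modulo $2n$ directly from case (1) of Lemma \ref{lem 4}. First I would invoke Theorem \ref{thm 3}, which guarantees that the cosets $c_{n/q}(\eta^{(y,x)})$, as $(y,x)$ ranges over $\Sigma(n;q)$, constitute the complete list of distinct $q$-cyclotomic cosets modulo $n$, each of the form $c_{n/q}(\eta^{(y,x)}) = \{\eta^{(y,x)}, \eta^{(y,x)}q, \ldots, \eta^{(y,x)}q^{\tau_y - 1}\}$ and of size $\tau_y$.

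The crux is to check the hypothesis of Lemma \ref{lem 4}(1) for each such coset, namely $v_2(q^{\tau_y} - 1) + v_2(\eta^{(y,x)}) \geq v_2(n) + 1$. Here I would observe that the right-hand side equals $1$ because $n$ is odd, while on the left $q$ is a power of the odd prime $p$, so $q^{\tau_y} - 1$ is even and $v_2(q^{\tau_y} - 1) \geq 1$, with $v_2(\eta^{(y,x)}) \geq 0$. Thus the inequality holds for every $(y,x) \in \Sigma(n;q)$. Equivalently, $2n \mid \eta^{(y,x)}(q^{\tau_y} - 1)$: one compares valuations prime-by-prime, using $n \mid \eta^{(y,x)}(q^{\tau_y} - 1)$ (from $\tau_y$ being the coset size) at the odd primes and the parity of $q^{\tau_y} - 1$ at the prime $2$. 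Since $v_2(q^{\tau_y} - 1) \geq 1$ alone already forces the inequality, the condition is manifestly insensitive to which integer lift of $\eta^{(y,x)}$ is used.

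With splitting established, Lemma \ref{lem 4}(1) applied with $m = n$ and $\gamma = \eta^{(y,x)}$ produces exactly the two cosets $c_{2n/q}(\eta^{(y,x)}) = \{\eta^{(y,x)}, \ldots, \eta^{(y,x)}q^{\tau_y - 1}\}$ and $c_{2n/q}(\eta^{(y,x)} + n) = \{\eta^{(y,x)} + n, \ldots, \eta^{(y,x)}q^{\tau_y - 1} + n\}$, and it identifies their disjoint union with the fiber $\widehat{\pi}_{2n/n}^{-1}(c_{n/q}(\eta^{(y,x)}))$. To conclude that these exhaust all cosets modulo $2n$, I would apply the $N = 2n$ case of Proposition \ref{prop 1}: since $\widehat{\pi}_{2n/n}$ is surjective and the $c_{n/q}(\eta^{(y,x)})$ exhaust $\mathcal{C}_{n/q}$, their fibers partition $\mathcal{C}_{2n/q}$. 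I anticipate no genuine obstacle: the whole argument hinges on the elementary $2$-adic observation that $q$ odd makes $q^{\tau_y} - 1$ even while $n$ odd makes $v_2(n) = 0$, so that the splitting criterion of Lemma \ref{lem 4} is satisfied uniformly across all cosets.
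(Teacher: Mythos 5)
Your argument is correct and is essentially the paper's own proof: the paper likewise notes that $q$ odd gives $v_{2}(\eta^{(y,x)})+v_{2}(q^{\tau_{y}}-1)\geq 1 = v_{2}(n)+1$, so every coset $c_{n/q}(\eta^{(y,x)})$ splits, and then reads off the two cosets modulo $2n$ from case (1) of Lemma \ref{lem 4}. Your additional explicit appeal to Theorem \ref{thm 3} and the surjectivity of $\widehat{\pi}_{2n/n}$ for exhaustiveness is left implicit in the paper but is the same underlying reasoning.
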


\begin{proof}
	Since $q$ is odd, for any $(y,x) \in \Sigma(n;q)$
	$$v_{2}(\eta^{(y,x)}) + v_{2}(q^{\tau_{y_{1} \cdots y_{s}}}) \geq 1,$$
	therefore the coset $c_{n/q}(\eta^{(y,x)})$ is splitting with respect to $\mathbb{Z}/2n\mathbb{Z} : \mathbb{Z}/n\mathbb{Z}$.
\end{proof}

Let $m$ be a positive integer such that $v=v_{2}(m) > 0$. For $\gamma \in \mathbb{Z}/m\mathbb{Z}$ it makes sense to say whether $\gamma$ is divisible by $2^{v}$ or not, as it does not depend on the choice of the representative of $\gamma$ in $\mathbb{Z}$. If $\gamma$ is divisible by $2^{v}$, then the coset $c_{m/q}(\gamma)$ splits into $c_{2m/q}(\gamma)$ and $c_{2m/q}(m+\gamma)$ along $\pi_{2m/m}$. It is trivial to check that exact one among $\gamma$ and $m+\gamma$ is divisible by $2^{v+1}$. Applying this argument successively gives rise to the following definition.

\begin{definition}\label{def 1}
	Let $n$ be an odd integer. For any $\gamma \in \mathbb{Z}$, there exists a unique power series
	$$\varphi_{n}(\gamma) = \sum_{k=0}^{\infty}\phi_{n,k}(\gamma)\cdot 2^{k}, \ \phi_{n,k}(\gamma) \in \{0,1\},$$
	such that for every $i \in \mathbb{N}$ it holds that $2^{i+1} \mid \gamma+ n\cdot \varphi_{n}(\gamma)_{\leq i}$, where $\varphi_{n}(\gamma)_{\leq i} = \sum\limits_{k=0}^{i}\phi_{n,k}(\gamma)\cdot 2^{k}$ is the finite sum of the first $i+1$ terms of $\varphi_{n}(\gamma)$. The power series $\varphi_{n}(\gamma)$ is called the cyclotomic $2$-adic integer associated to $\gamma$.
	
	Let $\overline{\gamma} \in \mathbb{Z}/n\mathbb{Z}$, and $\gamma$ be a preimage of $\overline{\gamma}$ in $\mathbb{Z}$. The cyclotomic $2$-adic integer $\varphi_{n}(\gamma)$ is said to be over $\overline{\gamma}$.
\end{definition}

\begin{remark}
	\begin{itemize}
		\item [(1)] It is well-known fact that there is a canonical isomorphism from the ring of $2$-adic power series onto the ring $\mathbb{Z}_{2}$ of $2$-adic integers, which justify the name cyclotomic $2$-adic integer.
		\item [(2)] To define the $2$-adic integer associated to a residue class $\overline{\gamma} \in \mathbb{Z}/n\mathbb{Z}$, without specifying a representative of $\overline{\gamma}$, one should work with the quotient of $\mathbb{Z}_{2}$ modulo the subgroup $\mathbb{Z}$. In this paper we will follow Definition \ref{def 1}, which suffices for our purpose.
	\end{itemize}
\end{remark}

The following figure shows the cyclotomic $2$-adic integer associated to $\gamma$, which is represented by the solid line.

\begin{center}
	\begin{tikzpicture}
		\node{$\gamma$}
		child{node {$\gamma$}
			child{node {$\gamma$} edge from parent[dashed]
				child{node {$\gamma$}
					child{node{$\vdots$}}
					child{node{$\vdots$}
						child[missing]{}}}
				child{node {$4n^{\prime}+\gamma$}
					child{node{$\vdots$}}
					child{node{$\vdots$}}}}
			child[missing]{}
			child{node {$2n^{\prime}+\gamma$}
				child{node {$2n^{\prime}+\gamma$}
					child{node{$\vdots$}edge from parent[dashed]}
					child{node{$\vdots$}}}
				child{node {$6n^{\prime}+\gamma$} edge from parent[dashed]
					child{node{$\vdots$}}
					child{node{$\vdots$}}}}}
		child[missing]{}
		child[missing]{}
		child[missing]{}
		child{node {$n^{\prime}+\gamma$} edge from parent[dashed]
			child{node {$n^{\prime}+\gamma$}
				child{node {$n^{\prime}+\gamma$}
					child{node{$\vdots$}}
					child{node{$\vdots$}
						child[missing]{}}}
				child{node {$5n^{\prime}+\gamma$}
					child{node{$\vdots$}}
					child{node{$\vdots$}}}}
			child[missing]{}
			child{node {$3n^{\prime}+\gamma$}
				child{node {$3n^{\prime}+\gamma$}
					child{node{$\vdots$}}
					child{node{$\vdots$}}}
				child{node {$7n^{\prime}+\gamma$}
					child{node{$\vdots$}}
					child{node{$\vdots$}}}}}
		;
	\end{tikzpicture}
\end{center}
$$\mathrm{Fig.} \ 1$$

Let $c_{n/q}(\gamma)$ be a $q$-cyclotomic coset modulo $n$, and 
$$\varphi_{n}(\gamma) = \sum_{k=0}^{\infty}\phi_{n,k}(\gamma)\cdot 2^{k}$$
be the cyclotomic $2$-adic integer associated to $\gamma$. For every nonnegative integer $d$ we define an integer $U_{d}(\gamma) = \sum\limits_{k=0}^{d}u_{d,k}(\gamma)\cdot 2^{k}$, $u_{d,k}(\gamma) \in \{0,1\}$, by
\begin{equation*}
	\left\{
	\begin{array}{lcl}
		u_{d,k}(\gamma) = \phi_{n,k}(\gamma), \ 0 \leq k \leq d-1;\\
		u_{d,k}(\gamma) = 1-\phi_{n,k}(\gamma), \ k = d.
	\end{array} \right.
\end{equation*}
We call $U_{d}(\gamma)$ the $d$-th generator of the cyclotomic system $\mathcal{PC}_{n/q}(\gamma)$, and call $\varphi_{n}(\gamma)$ the generator at infinity of $\mathcal{PC}_{n/q}(\gamma)$.

The following theorems determine precisely the elements in $\mathcal{PC}_{n/q}(\gamma)$, in the case where $q^{\tau} \equiv 1 \pmod{4}$ and where $q^{\tau} \equiv 3 \pmod{4}$ respectively. Let $(c_{2^{i}n/q}(\gamma_{i}))_{i \in \mathbb{N}}$ be an element in $\mathcal{PC}_{n/q}(\gamma)$. If the component $c_{2^{i_{0}}n/q}(\gamma_{i_{0}})$ splits with respect to the extension $\mathbb{Z}/2^{i_{0}+1}n\mathbb{Z}: \mathbb{Z}/2^{i_{0}}n\mathbb{Z}$, we say that $(c_{2^{i}n/q}(\gamma_{i}))_{i \in \mathbb{N}}$ splits at degree $i_{0}$. If $(c_{2^{i}n/q}(\gamma_{i}))_{i \in \mathbb{N}}$ splits at every degree, it is said to be splitting. Otherwise, it is called stable.

\begin{theorem}\label{thm 5}
	Let $c_{n/q}(\gamma)$ be a $q$-cyclotomic coset modulo $n$ with $| c_{n/q}(\gamma) | =\tau$. Assume that $q^{\tau} \equiv 1 \pmod{4}$. Then
	\begin{description}
		\item[(1)] The generator $\varphi_{n}(\gamma)$ at infinity gives rise to an element
		$$c_{\infty} = (c_{2^{i}n/q}(\gamma + n\cdot \varphi_{n}(\gamma)_{\leq i-1}))_{i \in \mathbb{N}} \in \mathcal{PC}_{n/q}(\gamma),$$
		where $\varphi_{n}(\gamma)_{\leq -1}$ is set to be $0$. It is the unique splitting element in $\mathcal{PC}_{n/q}(\gamma)$.
		\item[(2)] For each $d \in \mathbb{N}$, the $d$-th generator $U_{d}(\gamma)$ gives rise to $2^{v^{+}-1}$ elements
		$$c_{d,t} = (c_{2^{i}n/q}(\gamma + n\cdot \widehat{U}_{d,t}(\gamma)_{\leq i-1}))_{i \in \mathbb{N}} \in \mathcal{PC}_{n/q}(\gamma),$$
		where $v^{+} = v_{2}(q^{\tau}-1)$, $t = (t_{1},\cdots,t_{v^{+}-1}) \in \{0,1\}^{v^{+}-1}$, and 
		$$\widehat{U}_{d,t}(\gamma) = U_{d}(\gamma) + \sum_{j=1}^{v^{+}-1}t_{j}\cdot 2^{d+j}.$$
		In particular, we set $\widehat{U}_{d,t}(\gamma)_{\leq-1}$ to be $0$.
	\end{description}
	Furthermore, all elements in $\mathcal{PC}_{n/q}(\gamma)$ are obtained by the above construction unrepeatedly.
\end{theorem}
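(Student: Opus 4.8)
\emph{Proof plan.}
The plan is to identify $\mathcal{PC}_{n/q}(\gamma)$ with the set of infinite paths in the rooted binary tree of Fig.~1, whose depth-$i$ nodes are the cosets in $\widehat{\pi}_{2^{i}n/n}^{-1}(c_{n/q}(\gamma))$, and to make the splitting/stable dichotomy of Lemma~\ref{lem 4} completely explicit in terms of two invariants of a node $c_{2^{i}n/q}(\gamma_{i})$: its size $\tau_{i}$ and the $2$-adic valuation $w_{i}=v_{2}(\gamma_{i})$ of any representative (well-defined since $q$ is odd). Because $n$ is odd, a compatible lift from depth $i$ to depth $i+1$ sends $\gamma_{i}$ to $\gamma_{i}$ or $\gamma_{i}+2^{i}n$, so by Lemma~\ref{lem 4} the node splits and keeps its size exactly when $\delta_{i}:=v_{2}(q^{\tau_{i}}-1)+w_{i}-(i+1)\geq 0$, and is stable with doubled size otherwise. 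First I would record the two ways $\delta_{i}$ evolves along a fixed path: at a split $\tau_{i}$ and (on a frozen branch) $w_{i}$ are unchanged while $i$ grows, so $\delta$ drops by $1$; at a stabilization the size doubles and Lemma~\ref{lem 2}(1) gives $v_{2}(q^{2\tau_{i}}-1)=v_{2}(q^{\tau_{i}}-1)+1$, which exactly cancels the growth of $i$, so $\delta$ is unchanged. This single bookkeeping device drives the whole proof.

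For part (1), I would check directly that the \emph{spine} $c_{\infty}$ splits at every depth: its representative $\gamma+n\,\varphi_{n}(\gamma)_{\leq i-1}$ has $w_{i}\geq i$ by the defining property of $\varphi_{n}(\gamma)$, the size stays $\tau$, and hence $\delta_{i}=v^{+}+w_{i}-(i+1)\geq v^{+}-1\geq 1$, using that $q^{\tau}\equiv 1\pmod 4$ forces $v^{+}=v_{2}(q^{\tau}-1)\geq 2$; this simultaneously shows $c_{\infty}$ is a bona fide element of $\mathcal{PC}_{n/q}(\gamma)$. Uniqueness I would postpone to the analysis below, from which it is immediate: any path leaving the spine becomes permanently stable after finitely many steps, so it cannot split at every degree.

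The core of the argument is a branch whose first deviation from the spine occurs between depths $d$ and $d+1$. There the partial sum $\varphi_{n}(\gamma)_{\leq d}$ is replaced by $U_{d}(\gamma)$, and since the two differ by $\pm 2^{d}$ while $\gamma+n\,\varphi_{n}(\gamma)_{\leq d}$ has valuation $\geq d+1$ and $n$ is odd, the new representative $\gamma+nU_{d}(\gamma)$ has valuation exactly $d$; every later correction $t_{j}2^{d+j}n$ has strictly larger valuation, so $w_{i}$ is frozen at $d$ for all $i\geq d+1$, for either choice of child. Feeding $w_{i}=d$ into $\delta_{i}$ shows it starts at $v^{+}-2\geq 0$ at depth $d+1$ and decreases by $1$ per level, so the branch splits exactly at depths $d+1,\dots,d+v^{+}-1$ and first stabilizes at depth $d+v^{+}$; by the invariance of $\delta$ under stabilization, once $\delta$ turns negative every subsequent node is stable, hence the branch stays stable forever with a forced unique continuation. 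The $v^{+}-1$ successive splits each contribute one free binary choice, and recording the choice at the split of depth $d+j$ as $t_{j}$ converts these into the correction term $\sum_{j=1}^{v^{+}-1}t_{j}2^{d+j}$, i.e.\ into $\widehat{U}_{d,t}(\gamma)$, yielding the asserted $2^{v^{+}-1}$ elements $c_{d,t}$.

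Finally, to see that every element of $\mathcal{PC}_{n/q}(\gamma)$ is produced once and only once, I would argue that an arbitrary compatible sequence is classified by exactly one datum: either it coincides with the spine forever, giving $c_{\infty}$, or it has a well-defined first deviation depth $d$ followed by precisely $v^{+}-1$ binary choices and then a forced stable tail, giving some $c_{d,t}$; distinct data $(d,t)$ already differ at a finite depth, so no repetition occurs and the list is exhaustive. The step I expect to be the main obstacle is the combined bookkeeping on the deviated branch: one must track $\tau_{i}$ and $w_{i}$ simultaneously through the inequality $\delta_{i}\geq 0$, prove that $w_{i}$ is genuinely frozen at $d$ throughout the splitting phase, establish the permanence of stability from the exact valuation formula of Lemma~\ref{lem 2}(1), and match the parametrization $t\mapsto\widehat{U}_{d,t}(\gamma)$ bit-for-bit with the sequence of binary choices so that the enumeration is provably without repetition.
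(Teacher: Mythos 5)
Your proposal is correct and follows essentially the same route as the paper: compute the $2$-adic valuation of each representative $\gamma+n\cdot\widehat{U}_{d,t}(\gamma)_{\leq i-1}$, feed it into the splitting criterion of Lemma \ref{lem 4}, conclude that a branch deviating from $\varphi_{n}(\gamma)$ at index $d$ has valuation frozen at $d$ and hence splits exactly through degree $d+v^{+}-1$, and classify arbitrary sequences by their first deviation from the generator at infinity. Your $\delta_{i}$-bookkeeping is only a repackaging of the paper's valuation computation, though it does make the permanence of stability (via Lemma \ref{lem 2}(1)) more explicit than the paper's proof does.
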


\begin{proof}
	It is trivial to see that $c_{\infty}$ and all $c_{d,t}$'s are compatible sequences of $q$-cyclotomic cosets lying in $\mathcal{PC}_{n/q}(\gamma)$. First we prove that they are pairwise distinct. For each $i \geq 1$, by the definition of $\varphi_{n}(\gamma)$ we have
	$$v_{2}(\gamma+n\cdot \varphi_{n}(\gamma)_{\leq i-1})+v_{2}(q^{\tau}-1) > i+1,$$
	which indicates that $c_{2^{i}n/q}(\gamma+n\cdot \varphi_{n}(\gamma)_{\leq i-1})$ splits with respect to $\mathbb{Z}/2^{i+1}n\mathbb{Z}: \mathbb{Z}/2^{i}n\mathbb{Z}$. Hence $c_{\infty}$ is a splitting sequences.
	
	For any $d \in \mathbb{N}$ and any $t = (t_{1},\cdots,t_{v^{+}-1}) \in \{0,1\}^{v^{+}-1}$, by the definition of $U_{d}(\gamma)$ it holds that 
	\begin{equation*}
		v_{2}(\gamma+n\cdot \widehat{U}_{d,t}(\gamma)_{\leq i-1}) + v_{2}(q^{\tau}-1) \left\{
		\begin{array}{lcl}
			\geq i+v^{+}, \ 1\leq i \leq d;\\
			=d+v^{+}, \ i \geq d+1.
		\end{array} \right.
	\end{equation*}
	Consequently, $c_{d,t}$ splits at degree less than $d+v^{+}$, while begin to be nonsplitting from degree $d+v^{+}$. It follows from that $c_{\infty}$ and $c_{d,t}$'s are pairwise different.
	
    Next we prove that the sequences $c_{\infty}$ and $c_{d,t}$'s are exactly all the elements in $\mathcal{PC}_{n/q}(\gamma)$. Let $(c_{2^{i}n/q}(\gamma_{i}))_{i \in \mathbb{N}} \in \mathcal{PC}_{n/q}(\gamma)$. By Lemma \ref{lem 4} the representative $\gamma_{i}$ can be chosen to be in the form
	$$\gamma_{i} = \gamma+n\cdot Y_{\leq i-1},$$
	where $Y_{\leq i-1}$ is the finite sum of the first $i$ terms of a power series $Y = \sum\limits_{k=0}^{\infty} y_{k}\cdot 2^{k}$. If $Y = \varphi_{n}(\gamma)$ then $(c_{2^{i}n/q}(\gamma_{i}))_{i \in \mathbb{N}}$ is just $c_{\infty}$. Otherwise there is a smallest nonnegative integer $d$ such that $y_{d} \neq \phi_{n,d}(\gamma)$, which implies that 
	$$\gamma_{d+1} = \gamma + n\cdot Y_{\leq d} = \gamma + n \cdot U_{d}(\gamma).$$
	Applying Lemma \ref{lem 4} indicates that all the elements in $\mathcal{PC}_{n/q}(\gamma)$ with the component at degree $d+1$ being $c_{2^{d+1}n/q}(\gamma+n\cdot U_{d}(\gamma))$ are exactly $c_{d,t}$ for all $t = (t_{1},\cdots,t_{v^{+}-1}) \in \{0,1\}^{v^{+}-1}$. Here we complete the proof.
\end{proof}

\begin{theorem}\label{thm 6}
	Let $c_{n/q}(\gamma)$ be a $q$-cyclotomic coset modulo $n$ with $| c_{n/q}(\gamma) | =\tau$. Assume that $q^{\tau} \equiv 3 \pmod{4}$. Then 
	\begin{description}
		\item[(1)] The generator $\varphi_{n}(\gamma)$ at infinity gives rise to an element
		$$c_{\infty} = (c_{2^{i}n/q}(\gamma + n\cdot \varphi_{n}(\gamma)_{\leq i-1}))_{i \in \mathbb{N}} \in \mathcal{PC}_{n/q}(\gamma),$$
		where $\varphi_{n}(\gamma)_{\leq -1}$ is set to be $0$. It is the unique splitting element in $\mathcal{PC}_{n/q}(\gamma)$.
		\item[(2)] For each $d \in \mathbb{N}$, the $d$-th generator $U_{d}(\gamma)$ gives rise to $2^{v^{-}-1}$ elements
		$$c_{d,t} = (c_{2^{i}n/q}(\gamma + n\cdot \widehat{U}_{d,t}(\gamma)_{\leq i-1}))_{i \in \mathbb{N}} \in \mathcal{PC}_{n/q}(\gamma),$$
		where $v^{-} = v_{2}(q^{\tau}+1)$, $t=(t_{1},\cdots,t_{v^{-}-1}) \in \{0,1\}^{v^{-}-1}$, and 
		$$\widehat{U}_{d,t}(\gamma) = U_{d}(\gamma) + \sum_{j=1}^{v^{-}-1}t_{j}\cdot 2^{d+j+1}.$$
		In particular, we set $\widehat{U}_{d,t}(\gamma)_{\leq -1}$ to be $0$.
	\end{description}
	Furthermore, all elements in $\mathcal{PC}_{n/q}(\gamma)$ are obtained by the above construction unrepeatedly.
\end{theorem}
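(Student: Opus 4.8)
The plan is to follow the same three-part skeleton as the proof of Theorem \ref{thm 5}: first verify that $c_\infty$ and the $c_{d,t}$ are genuine compatible sequences lying in $\mathcal{PC}_{n/q}(\gamma)$, then show they are pairwise distinct, and finally show they exhaust $\mathcal{PC}_{n/q}(\gamma)$. The whole argument is driven by the split/stable dichotomy of Lemma \ref{lem 4}, applied level by level along the tower $\cdots \to \mathbb{Z}/2^{i+1}n\mathbb{Z} \to \mathbb{Z}/2^i n\mathbb{Z}$; the essential difference from Theorem \ref{thm 5} lies entirely in the $2$-adic valuation bookkeeping, since now $v_2(q^\tau - 1) = 1$ and $v_2(q^\tau + 1) = v^- \geq 2$ instead of $v_2(q^\tau - 1) = v^+ \geq 2$. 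The splitting element $c_\infty$ is handled exactly as before: along $\varphi_n(\gamma)$ the representative $\gamma + n\,\varphi_n(\gamma)_{\leq i-1}$ satisfies $v_2 \geq i$, the coset size stays $\tau$, and the criterion of Lemma \ref{lem 4} reads $v_2(q^\tau - 1) + v_2(\gamma + n\,\varphi_n(\gamma)_{\leq i-1}) \geq 1 + i \geq i+1$, so this branch splits at every degree; that it is the \emph{only} splitting branch will drop out of the exhaustiveness step.

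The heart of the proof is to determine the splitting/stable profile of a branch $c_{d,t}$. First I would record that, because $\widehat U_{d,t}(\gamma)$ agrees with $\varphi_n(\gamma)$ on bits $0,\dots,d-1$, flips bit $d$, and places its remaining mass on bits $\geq d+2$, the representative $\gamma + n\,\widehat U_{d,t}(\gamma)_{\leq i-1}$ has $2$-adic valuation $\geq i$ for $i \leq d$ and valuation exactly $d$ for every $i \geq d+1$ (the $t_j$-terms sit above bit $d$ and cannot lower it). Feeding this into Lemma \ref{lem 4} level by level gives the key phenomenon that distinguishes this case: the branch splits at degrees $0,\dots,d$; at degree $d+1$ the criterion $v_2(q^\tau - 1) + d = 1 + d < (d+1)+1$ fails, so this step is \emph{stable} and the coset size doubles from $\tau$ to $2\tau$. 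Now Lemma \ref{lem 2} gives $q^{2\tau} \equiv 1 \pmod 4$ with $v_2(q^{2\tau}-1) = 1 + v^-$, which throws the branch back into a Theorem \ref{thm 5}-type splitting regime: with effective valuation $1+v^-$ and representative valuation $d$, the criterion $(1+v^-) + d \geq i+1$ holds for $i = d+2,\dots,d+v^-$ and fails for $i \geq d+v^-+1$. Hence $c_{d,t}$ splits again at degrees $d+2,\dots,d+v^-$ and then, since each subsequent stable step doubles the size while $v_2(q^{2^k\tau}-1) = v^- + k$ grows only in step with the threshold (again by Lemma \ref{lem 2}), is permanently stable from degree $d+v^-+1$ on. The $v^- - 1$ free binary choices made at the splits of degrees $d+2,\dots,d+v^-$ are precisely the coordinates $t_1,\dots,t_{v^- -1}$, with $t_j$ sitting at bit $d+j+1$ --- this is exactly the index shift $\sum_j t_j\, 2^{d+j+1}$ in $\widehat U_{d,t}$, as opposed to $\sum_j t_j\, 2^{d+j}$ in Theorem \ref{thm 5}, and it is forced by the one extra stable step at degree $d+1$.

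With the profile in hand, distinctness and exhaustiveness are routine. For distinctness: $c_\infty$ splits at every degree whereas each $c_{d,t}$ first becomes stable at degree $d+1$, so $c_\infty$ is distinct from all $c_{d,t}$; branches with different $d$ have different first stable degrees; and branches with the same $d$ differ in at least one of the free choices at degrees $d+2,\dots,d+v^-$, i.e.\ in some $t_j$. For exhaustiveness I would take an arbitrary element $(c_{2^i n/q}(\gamma_i))_i \in \mathcal{PC}_{n/q}(\gamma)$, write its representatives via Lemma \ref{lem 4} as $\gamma_i = \gamma + n\, Y_{\leq i-1}$ for a $2$-adic series $Y$, and split on whether $Y = \varphi_n(\gamma)$ (giving $c_\infty$) or not. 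In the latter case, letting $d$ be the least index with $y_d \neq \phi_{n,d}(\gamma)$ forces bit $d$ to be the flip so that $\gamma_{d+1} = \gamma + n\,U_d(\gamma)$; the same valuation analysis then shows the step at degree $d+1$ is stable (no choice), the steps at degrees $d+2,\dots,d+v^-$ are genuine splits recording the free bits $t_1,\dots,t_{v^- - 1}$, and all higher steps are stable (no choice), so $Y$ agrees with $\widehat U_{d,t}(\gamma)$ and the sequence is exactly $c_{d,t}$ for a unique $t$.

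I expect the main obstacle to be the non-monotone splitting profile. Unlike the $q^\tau \equiv 1$ case of Theorem \ref{thm 5}, where a branch splits up to a single threshold and is stable ever after, here an isolated stable step at degree $d+1$ is sandwiched between two splitting regimes, produced by the transition $v_2(q^\tau - 1) = 1 \rightsquigarrow v_2(q^{2\tau}-1) = 1 + v^-$. Pinning down the exact degrees of the splits, keeping track of the size doublings, and matching the $v^- - 1$ free choices to the shifted positions $d+j+1$ in $\widehat U_{d,t}$ is the delicate part, and it rests on repeated and careful application of the lifting-the-exponent Lemma \ref{lem 2} to $q^{2^k\tau} \pm 1$.
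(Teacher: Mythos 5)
Your proposal is correct and follows essentially the same route as the paper: establish the valuation profile $v_{2}(\gamma+n\cdot\widehat{U}_{d,t}(\gamma)_{\leq i-1})\geq i$ for $i\leq d$ and $=d$ for $i\geq d+1$, deduce via Lemma \ref{lem 4} and the lifting-the-exponent Lemma \ref{lem 2} the split/stable profile (splits at degrees $0,\dots,d$, stable at $d+1$, splits again at $d+2,\dots,d+v^{-}$, permanently stable from $d+v^{-}+1$), and then run the distinctness and exhaustiveness argument of Theorem \ref{thm 5}. The paper's proof is exactly this, stated more tersely; your additional detail on why $v_{2}(q^{2^{k}\tau}-1)=v^{-}+k$ keeps pace with but never overtakes the threshold is a correct elaboration of what the paper leaves implicit.
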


\begin{proof}
	Let $d \in \mathbb{N}$ and $t = (t_{1},\cdots,t_{v^{-}-1}) \in \{0,1\}^{v^{-}-1}$. Denote by $\tau_{i} =|c_{2^{i}n/q}(\gamma+n\cdot \widehat{U}_{d,t}(\gamma)_{\leq i-1})|$ for $i \in \mathbb{N}$. Notice that 
		\begin{equation*}
		v_{2}(\gamma+n\cdot \widehat{U}_{d,t}(\gamma)_{\leq i-1}) \left\{
		\begin{array}{lcl}
			\geq i, \ 1\leq i \leq d;\\
			=d, \ i \geq d+1.
		\end{array} \right.
	\end{equation*}
	Therefore one has that 
	\begin{itemize}
		\item[(1)] when $0 \leq i \leq d$, $c_{2^{i}n/q}(\gamma + n\cdot \widehat{U}_{d,t}(\gamma)_{\leq i-1})$ splits and $\tau_{i+1} = \tau$;
		\item[(2)] when $i =d+1$, $c_{2^{i}n/q}(\gamma + n\cdot \widehat{U}_{d,t}(\gamma)_{\leq i-1})$ does not split and $\tau_{i+1} = 2\tau$;
		\item[(4)] when $d+2 \leq i \leq d+v^{-}$, $c_{2^{i}n/q}(\gamma + n\cdot \widehat{U}_{d,t}(\gamma)_{\leq i-1})$ splits and $\tau_{i+1} = 2\tau$;
		\item[(5)] when $i \geq d+v^{-}+1$, $c_{2^{i}n/q}(\gamma + n\cdot \widehat{U}_{d,t}(\gamma)_{\leq i-1})$ does not split and $\tau_{i+1} = 2^{i+1-d-v^{-}}\tau$;
	\end{itemize}
	Now Theorem \ref{thm 6} can be proved via the same argument as the proof of Theorem \ref{thm 5}.
\end{proof}

In general, there is a unique element $c_{\infty}$ in the cyclotomic system $\mathcal{PC}_{n/q}(\gamma)$ which splits at every degree. It is called the principal sequence over $c_{n/q}(\gamma)$, while the other elements are called stable sequences. If $c$ is a stable sequence, it is induced by a unique generator $U_{d}(\gamma)$, $d \in \mathbb{N}$. Define the quasi-stable degree of $c$ to be $d+1$, and the stable degree of $c$ to be the degree from which $c$ is always nonsplitting. By the proofs of Theorem \ref{thm 5} and \ref{thm 6}, if $q^{\tau} \equiv 1 \pmod{4}$, the stable degree of $c$ is $d+v^{+}$; if $q^{\tau} \equiv 3 \pmod{4}$, the stable degree of $c$ is $d+v^{-}+1$. 

Figure $2$ and $3$ exhibit the elements in $\mathcal{PC}_{n/q}(\gamma)$, in the case where $q^{\tau} \equiv 1 \pmod{4}$ and where $q^{\tau} \equiv 3 \pmod{4}$ respectively. The red line represents the principal sequence.

\begin{center}
	\begin{tikzpicture}
		\node {$c_{n^{\prime}/q}(\gamma)$}
		child{node {$\bullet$}
		child{node {$\stackrel{\bullet}{\vdots}$}
		child {node {$\bullet$}
		child {node {$\stackrel{\bullet}{\vdots}$}}}
		child {node {$\bullet$}
		child{node {$\stackrel{\bullet}{\vdots}$}}}}
		child{node {$\stackrel{\bullet}{\vdots}$}
		child{node {$\bullet$}
		child{node {$\stackrel{\bullet}{\vdots}$}}}
		child{node {$\bullet$}
		child{node {$\stackrel{\bullet}{\vdots}$}}}}
		}
		child[missing]{}
		child[missing]{}
		child[grow=down, red]{node {$\bullet$} 
		child[grow'=down, red]{node {$\bullet$}
		child[grow'=down, red]{node {$\bullet$}
		child{node {$\bullet$}
	    child{node {$\stackrel{\bullet}{\vdots}$}}}}}
		child[missing]{}
		child[missing]{}
		child[black]{node {$\bullet$}
		child{node {$\stackrel{\bullet}{\vdots}$}
		child{node {$\bullet$}
		child{node {$\stackrel{\bullet}{\vdots}$}}}
		child{node {$\bullet$}
		child{node {$\stackrel{\bullet}{\vdots}$}}}}
		child{node {$\stackrel{\bullet}{\vdots}$}
		child{node {$\bullet$}
		child{node {$\stackrel{\bullet}{\vdots}$}}}
	    child{node {$\bullet$}
        child{node {$\stackrel{\bullet}{\vdots}$}}}}}
		}
		;
	\end{tikzpicture}
\end{center}
$$\mathrm{Fig.} \ 2$$

\begin{center}
	\begin{tikzpicture}
		\node {$c_{n^{\prime}/q}(\gamma)$}
		child{node {$\bullet$}
		child{node {$\bullet$}
		child{node {$\stackrel{\bullet}{\vdots}$}
		child{node {$\bullet$}
		child{node {$\bullet$}
		child{node {$\stackrel{\bullet}{\vdots}$}}}}
		child{node {$\bullet$}
		child{node {$\bullet$}
	    child{node {$\stackrel{\bullet}{\vdots}$}}}}}
		child{node {$\stackrel{\bullet}{\vdots}$}
		child{node {$\bullet$}
		child{node {$\bullet$}
		child{node {$\stackrel{\bullet}{\vdots}$}}}}
	    child{node {$\bullet$}
        child{node {$\bullet$}
        child{node {$\stackrel{\bullet}{\vdots}$}}}}}}
		}
		child[missing]{}
		child[missing]{}
		child[grow=down,red]{node {$\bullet$}
		child[grow'=down]{node {$\bullet$}
		child[grow'=down]{node {$\bullet$}
		child[grow'=down]{node {$\bullet$}
	    child[grow'=down]{node {$\bullet$}
        child[grow'=down]{node {$\stackrel{\bullet}{\vdots}$}}}}}}
		child[missing]{}
		child[missing]{}
		child[black]{node {$\bullet$}
		child{node {$\bullet$}
		child{node {$\stackrel{\bullet}{\vdots}$}
		child{node {$\bullet$}
		child{node {$\stackrel{\bullet}{\vdots}$}}}
		child{node {$\bullet$}
		child{node {$\stackrel{\bullet}{\vdots}$}}}}
	    child{node {$\stackrel{\bullet}{\vdots}$}
        child{node {$\bullet$}
        child{node {$\stackrel{\bullet}{\vdots}$}}}
        child{node {$\bullet$}
        child{node {$\stackrel{\bullet}{\vdots}$}}}}}}
		}
		;
	\end{tikzpicture}
\end{center}
$$\mathrm{Fig.} \ 3$$

Combining Proposition \ref{prop 4}, Theorem \ref{thm 5} and \ref{thm 6}, we give the size of each component of the sequences in $\mathcal{PC}_{n/q}(\gamma)$. Fix a primitive root system $\eta = (\eta_{1},\cdots,\eta_{s})$ mdoulo $n$. The representative $\gamma$ can be chosen to be $\gamma = \eta^{(y,x)}$ for some $(y,x) \in \Sigma(n;q)$. 

\begin{corollary}\label{coro 4}
	\begin{itemize}
		\item[(1)] If $(y,x) \in \Sigma(n;q)$ satisfies that $q^{\tau_{y}} \equiv 1 \pmod{4}$, then
		\begin{itemize}
			\item[(1a)] The principal sequence $c_{\infty}$ has the same size at every degree, given by
			$$| c_{2^{i}n/q}(\gamma + n\cdot \varphi_{n}(\gamma)_{\leq i-1})| =  \omega_{y}\cdot p_{1}^{M_{y_{1}}}\cdots p_{s}^{M_{y_{s}}}.$$
			\item[(1b)] For any $d \in \mathbb{N}$ and any $t = (t_{1},\cdots,t_{v^{+}-1}) \in \{0,1\}^{v^{+}-1}$, the size of each component of $c_{d,t}$ is given by
			\begin{small}
				\begin{equation*}
					|c_{2^{i}n/q}(\gamma + n\cdot \widehat{U}_{d,t}(\gamma)_{\leq i-1}) | = \left\{
					\begin{array}{lcl}
						\omega_{y}\cdot p_{1}^{M_{y_{1}}}\cdots p_{s}^{M_{y_{s}}}, \quad 0 \leq i \leq d+v^{+};\\
						2^{i-d-v^{+}}\omega_{y}\cdot p_{1}^{M_{y_{1}}}\cdots p_{s}^{M_{y_{s}}}, \quad i \geq d+v^{+}+1.
					\end{array} \right.
				\end{equation*}
			\end{small}
		\end{itemize}
		\item[(2)] If $(y,x) \in \Sigma(n;q)$ satisfies that $q^{\tau_{y}} \equiv 3 \pmod{4}$, then
		\begin{itemize}
			\item[(2a)] The principal sequence $c_{\infty}$ has the same size at every degree, given by
			$$| c_{2^{i}n/q}(\gamma + n\cdot \varphi_{n}(\gamma)_{\leq i-1})| =  \omega_{y}\cdot p_{1}^{M_{y_{1}}}\cdots p_{s}^{M_{y_{s}}}.$$
			\item[(2b)] For any $d \in \mathbb{N}$ and any $t = (t_{1},\cdots,t_{v^{+}-1}) \in \{0,1\}^{v^{-}-1}$, the size of each component of $c_{d,t}$ is given by
			\begin{small}
				\begin{equation*}
					| c_{2^{i}n/q}(\gamma + n\cdot \widehat{U}_{d,t}(\gamma)_{\leq i-1}) | = \left\{
					\begin{array}{lcl}
						\omega_{y}\cdot p_{1}^{M_{y_{1}}}\cdots p_{s}^{M_{y_{s}}}, \quad 0 \leq i \leq d+1;\\
						2\omega_{y}\cdot p_{1}^{M_{y_{1}}}\cdots p_{s}^{M_{y_{s}}}, \quad d+2 \leq i \leq d+v^{-}+1;\\
						2^{i-d-v^{-}}\omega_{y}\cdot p_{1}^{M_{y_{1}}}\cdots p_{s}^{M_{y_{s}}}, \quad i \geq d+v^{-}+2.
					\end{array} \right.
				\end{equation*}
			\end{small}
		\end{itemize}
	\end{itemize}
\end{corollary}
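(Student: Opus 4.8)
The plan is to reduce the whole statement to a single bookkeeping principle: tracking how the size of a coset evolves as one climbs the tower $\mathbb{Z}/2^{i}n\mathbb{Z}$. The starting point is Proposition \ref{prop 4}, which tells us that the base coset $c_{n/q}(\gamma) = c_{n/q}(\eta^{(y,x)})$ has size exactly $\tau_{y} = \omega_{y}\cdot p_{1}^{M_{y_{1}}}\cdots p_{s}^{M_{y_{s}}}$; this is precisely the common "leading" size appearing in every case of the corollary. Thus the component at level $i=0$ already has the asserted size, and it remains to control the higher levels.

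Next I would record the transition rule furnished by Lemma \ref{lem 4}: in any compatible sequence $(c_{2^{i}n/q}(\gamma_{i}))_{i}$, passing from level $j$ to level $j+1$ leaves the size unchanged if $c_{2^{j}n/q}(\gamma_{j})$ is splitting and doubles it if $c_{2^{j}n/q}(\gamma_{j})$ is nonsplitting. By an immediate induction on $i$ it follows that the size of the $i$-th component equals $\tau_{y}\cdot 2^{N_{i}}$, where $N_{i}$ is the number of indices $0 \leq j \leq i-1$ at which the sequence is nonsplitting. Hence every size in the statement reduces to counting nonsplitting steps.

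The final step is to read off $N_{i}$ from the splitting patterns already established in Theorems \ref{thm 5} and \ref{thm 6}. For the principal sequence $c_{\infty}$ in both cases, splitting occurs at every degree, so $N_{i}=0$ for all $i$ and the size is constantly $\tau_{y}$, yielding (1a) and (2a). For a stable sequence $c_{d,t}$ when $q^{\tau_{y}} \equiv 1 \pmod{4}$, the proof of Theorem \ref{thm 5} gives splitting at degrees below $d+v^{+}$ and nonsplitting from $d+v^{+}$ onward, so $N_{i} = \max\{i-d-v^{+},0\}$, which produces the two-piece formula (1b). For $c_{d,t}$ when $q^{\tau_{y}} \equiv 3 \pmod{4}$, the pattern from Theorem \ref{thm 6} is the subtle one: a single nonsplitting step at degree $d+1$, then splitting at degrees $d+2,\ldots,d+v^{-}$, then permanent nonsplitting from degree $d+v^{-}+1$. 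Summing these contributions gives $N_{i}=0$ for $0 \leq i \leq d+1$, $N_{i}=1$ for $d+2 \leq i \leq d+v^{-}+1$, and $N_{i} = i-d-v^{-}$ for $i \geq d+v^{-}+2$, which is exactly the three-piece formula (2b).

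The only place requiring genuine care is this last case, where the isolated nonsplitting step at degree $d+1$ is separated from the eventual block of nonsplitting steps by a run of $v^{-}-1$ splitting steps; pinning down the boundary indices in the piecewise formula is the main (and essentially the only) spot where an off-by-one error could slip in. Everything else is a direct assembly of Proposition \ref{prop 4}, Lemma \ref{lem 4}, and the two theorems, with no further obstacle.
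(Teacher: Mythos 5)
Your proposal is correct and follows essentially the same route as the paper: the base size from Proposition \ref{prop 4}, the observation that each splitting step preserves the size while each nonsplitting step doubles it, and then reading off the splitting patterns from the proofs of Theorems \ref{thm 5} and \ref{thm 6}. Your explicit count $N_{i}$ of nonsplitting steps (including the delicate three-piece bookkeeping in the $q^{\tau_{y}}\equiv 3 \pmod 4$ case) matches the paper's piecewise formulas exactly.
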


\begin{proof}
	By Proposition \ref{prop 4} the size of the coset $c_{n/q}(\eta^{(y,x)})$ is 
	$$\tau_{y} = \omega_{y}\cdot p_{1}^{M_{y_{1}}}\cdots p_{s}^{M_{y_{s}}}.$$
	Notice that if a sequence $(c_{2^{i}n/q}(\gamma_{i}))_{i \in \mathbb{N}}$ splits at degree $i$ then $| c_{2^{i+1}n/q}(\gamma_{i+1}) | = | c_{2^{i}n/q}(\gamma_{i}) |$; otherwise $| c_{2^{i+1}n/q}(\gamma_{i+1}) | = 2| c_{2^{i}n/q}(\gamma_{i}) |$. Thus the assertions follow from the proofs of Theorem \ref{thm 5} and \ref{thm 6}.
\end{proof}

\subsubsection{The representatives and the sizes of $q$-cyclotomic cosets modulo $n$}\label{sec 8}
With the results obtained in the last two subsections we give a full set of representatives of $\mathcal{C}_{n/q}$ and the sizes of the cosets in $\mathcal{C}_{n/q}$. In this subsection it is assumed that $q$ is a power of an odd prime $p$, $n = 2^{e_{0}}p_{1}^{e_{1}}\cdots p_{s}^{e_{s}}$ is an even integer, where $p_{1},\cdots,p_{s}$ are distinct odd primes different from $p$ and $e_{0},e_{1},\cdots,e_{s}$ are positive integers, and $n^{\prime} = p_{1}^{e_{1}}\cdots p_{s}^{e_{s}}$ is the maximal odd divisor of $n$. 

Let $\eta = (\eta_{1},\cdots,\eta_{s})$ be a system of primitive roots modulo $n^{\prime}$, and $\Sigma(n^{\prime};q)$ be the set of applicable tuples with respect to $n^{\prime}$ and $q$. Further, we set
$$\Sigma^{+}(n^{\prime};q) = \{(y,x) \in \Sigma(n^{\prime};q) \ | \ q^{\tau_{y}} \equiv 1 \pmod{4}\}$$
and
$$\Sigma^{-}(n^{\prime};q) = \{(y,x) \in \Sigma(n^{\prime};q) \ | \ q^{\tau_{y}} \equiv 3 \pmod{4}\}.$$
Recall that for any $(y,x) = (y_{1},\cdots,y_{s},x_{1},\cdots,x_{s}) \in \Sigma$,
$$\tau_{y} = \omega_{y}\cdot p_{1}^{M_{y_{1}}}\cdots p_{s}^{M_{y_{s}}}$$
is the size of $c_{n^{\prime}/q}(\eta^{(y,x)})$, where $m_{y_{i}} = \mathrm{min}\{e_{i}-y_{i},1\}$, $\omega_{y}=\mathrm{lcm}(\mathrm{ord}_{p_{1}^{m_{y_{1}}}}(q),\cdots,\mathrm{ord}_{p_{s}^{m_{y_{s}}}}(q))$ and $M_{y_{i}} = \mathrm{max}\{e_{i}-y_{i}-v_{p_{i}}(q^{\omega_{y}}-1),0\}$, $i=1,\cdots,s$. If $(y,x) \in \Sigma^{+}(n^{\prime};q)$, then we set $v^{+}(y) = v_{2}(q^{\tau_{y}}-1)$ and for $0 \leq d \leq e_{0}$
$$s_{y,d}^{+}=\mathrm{min}\{e_{0}-d-1,v^{+}(y)-1\};$$
otherwise, if $(y,x) \in \Sigma^{-}(n^{\prime};q)$, then we set $v^{-}(y) = v_{2}(q^{\tau_{y}}+1)$ and for $0 \leq d \leq e_{0}$
$$s_{y,d}^{-}=\mathrm{min}\{e_{0}-d-2,v^{-}(y)-1\}.$$

\begin{definition}
	\begin{description}
		\item[(1)] Let $(y,x) \in \Sigma^{+}(n^{\prime};q)$. For any $d \in \{0,1,\cdots,e_{0}\}$ and $t = (t_{1},\cdots,t_{s_{y,d}^{+}}) \in \{0,1\}^{s_{y,d}^{+}}$ (if $s_{y,d}^{+} \leq 0$ then $\{0,1\}^{s_{y,d}^{+}}$ is set to be $\{0\}$), define
		$$\mu(y,x)_{d,t}^{+} = \eta^{(y,x)} + n^{\prime}\cdot(U_{d}(\eta^{(y,x)})+ \sum_{j=1}^{s_{y,d}^{+}}t_{j}\cdot 2^{d+j}),$$
		and define $\Sigma_{(y,x)}$ to be the set of all these $\mu(y,x)_{d,t}^{+}$:
		$$\mathcal{CR}_{(y,x)}^{+} = \{\mu(y,x)_{d,t}^{+} \ | \ 0 \leq d \leq e_{0}, t \in \{0,1\}^{s_{y,d}^{+}}\}.$$
		\item[(2)] Let $(y,x) \in \Sigma^{-}(n^{\prime};q)$. For any $d \in \{0,1,\cdots,e_{0}\}$ and $t = (t_{1},\cdots,t_{s_{y,d}^{-}}) \in \{0,1\}^{s_{y,d}^{-}}$ (if $s_{y,d}^{-} \leq 0$ then $\{0,1\}^{s_{y,d}^{-}}$ is set to be $\{0\}$), define
		$$\mu(y,x)_{d,t}^{-} = \eta^{(y,x)} + n^{\prime}\cdot(U_{d}(\eta^{(y,x)})+ \sum_{j=1}^{s_{y,d}^{-}}t_{j}\cdot 2^{d+j+1}),$$
		and define $\Sigma_{(y,x)}$ to be the set of all these $\mu(y,x)_{d,t}^{-}$:
		$$\mathcal{CR}_{(y,x)}^{-} = \{\mu(y,x)_{d,t}^{-} \ | \ 0 \leq d \leq e_{0}, t \in \{0,1\}^{s_{y,d}^{-}}\}.$$
	\end{description}
\end{definition}

\begin{theorem}\label{thm 7}
	A full set of representatives of $q$-cyclotomic cosets modulo $n$ is given by
	$$\mathcal{CR}_{n/q} = (\bigsqcup_{(y,x) \in \Sigma^{+}(n^{\prime};q)} \mathcal{CR}_{(y,x)}^{+}) \sqcup (\bigsqcup_{(y,x) \in \Sigma^{-}(n^{\prime};q)} \mathcal{CR}_{(y,x)}^{-}).$$
\end{theorem}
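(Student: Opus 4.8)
The plan is to prove the statement one block at a time, using the partition of $\mathcal{C}_{n/q}$ furnished by Proposition \ref{prop 1}. Since $\mathcal{C}_{n/q} = \bigsqcup_{(y,x) \in \Sigma(n';q)} \widehat{\pi}_{n/n'}^{-1}(c_{n'/q}(\eta^{(y,x)}))$, and representatives lying over distinct base cosets $c_{n'/q}(\eta^{(y,x)})$ are automatically inequivalent because their images under $\widehat{\pi}_{n/n'}$ already differ, it suffices to fix one applicable tuple $(y,x)$, set $\gamma = \eta^{(y,x)}$, and show that $\mathcal{CR}_{(y,x)}^{+}$ (if $(y,x) \in \Sigma^{+}(n';q)$) or $\mathcal{CR}_{(y,x)}^{-}$ (if $(y,x) \in \Sigma^{-}(n';q)$) is a full set of representatives of the fibre $\widehat{\pi}_{n/n'}^{-1}(c_{n'/q}(\gamma)) = \widehat{\pi}_{2^{e_0}n'/n'}^{-1}(c_{n'/q}(\gamma))$.

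The bridge to the cyclotomic system is the coordinate projection $\rho_{e_0}: \mathcal{PC}_{n'/q}(\gamma) \to \widehat{\pi}_{2^{e_0}n'/n'}^{-1}(c_{n'/q}(\gamma))$ that sends a compatible sequence to its component at degree $e_0$; as a projection in a projective system of finite nonempty sets with surjective transition maps, $\rho_{e_0}$ is surjective, so the fibre is exactly the set of degree-$e_0$ components of the sequences enumerated in Theorem \ref{thm 5} and Theorem \ref{thm 6}. The computation I would carry out is to reduce these components modulo $n = 2^{e_0}n'$: the degree-$e_0$ component attached to a datum $\widehat{U}_{d,t}(\gamma)$ is $c_{n/q}(\gamma + n'\cdot \widehat{U}_{d,t}(\gamma)_{\leq e_0-1})$, and since $n'\cdot 2^{e_0} \equiv 0 \pmod n$, only the bits $0,\ldots,e_0-1$ of $\widehat{U}_{d,t}(\gamma)$ survive. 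Two bookkeeping facts then match the enumeration to the definition of $\mathcal{CR}_{(y,x)}^{\pm}$: first, a generator $U_d(\gamma)$ whose flipped bit sits at a position $d \geq e_0$ contributes the same component as the principal sequence $c_\infty$, which is precisely the $d = e_0$ representative (here one checks $\mu(y,x)_{e_0,0}^{+} \equiv \gamma + n'\varphi_{n'}(\gamma)_{\leq e_0-1} \pmod n$, and similarly in the $\Sigma^{-}$ case); second, among the free bits $t_j$, only those whose positions fall below $e_0$ affect the truncation, and counting them reproduces exactly $s_{y,d}^{+} = \min\{e_0-d-1, v^{+}(y)-1\}$ in the $\Sigma^{+}$ case and $s_{y,d}^{-} = \min\{e_0-d-2, v^{-}(y)-1\}$ in the $\Sigma^{-}$ case. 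This gives completeness: every coset in the fibre equals some $\mu(y,x)_{d,t}^{\pm}$.

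The crux, and the step I expect to require the most care, is pairwise distinctness of the $\mu(y,x)_{d,t}^{\pm}$ within a single block, i.e. the injectivity of $\rho_{e_0}$ on these data. I would argue through the splitting pattern recorded in Lemma \ref{lem 4} and the proofs of Theorems \ref{thm 5} and \ref{thm 6}. Given two distinct admissible pairs $(d,t) \neq (d',t')$ with distinct truncations, let $\delta$ be the smallest degree at which $\widehat{U}_{d,t}(\gamma)$ and $\widehat{U}_{d',t'}(\gamma)$ disagree. One checks $\delta \leq e_0 - 1$ and that, up to degree $\delta$, both sequences coincide with a common ancestor that splits at degree $\delta$: in the $q^{\tau} \equiv 1 \pmod 4$ case the principal sequence splits at every degree, while in the $q^{\tau} \equiv 3 \pmod 4$ case one must verify that $\delta$ avoids the single nonsplitting degree $d+1$, which the shape of $\widehat{U}_{d,t}(\gamma)$ (whose free bits lie at positions $\geq d+2$) guarantees. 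At a splitting degree the two children are distinct cosets by Lemma \ref{lem 4}, so the components at degree $\delta + 1 \leq e_0$ already differ; by compatibility of the projective system the degree-$e_0$ components differ as well. Combined with the completeness above, this shows $\mathcal{CR}_{(y,x)}^{\pm}$ is a full set of representatives of the block, and assembling over all $(y,x)$ yields the theorem.

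As a consistency check that can either corroborate or, together with surjectivity of $\rho_{e_0}$, replace the distinctness argument, I would verify the count: summing the degree-$e_0$ coset sizes supplied by Corollary \ref{coro 4} over all $\mu(y,x)_{d,t}^{\pm}$ in a block should total $|\pi_{n/n'}^{-1}(c_{n'/q}(\gamma))| = 2^{e_0}\tau_y$, which forces the representatives to be simultaneously distinct and exhaustive.
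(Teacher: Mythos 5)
Your proposal is correct and follows essentially the same route as the paper: decompose $\mathcal{C}_{n/q}$ by Proposition \ref{prop 1}, identify each fibre $\widehat{\pi}_{n/n^{\prime}}^{-1}(c_{n^{\prime}/q}(\eta^{(y,x)}))$ with the degree-$e_{0}$ components of the sequences classified in Theorems \ref{thm 5} and \ref{thm 6}, and observe that truncating at bit $e_{0}-1$ collapses all sequences of quasi-stable degree $\geq e_{0}+1$ onto the single representative indexed by $d=e_{0}$ while the surviving free bits are counted exactly by $s_{y,d}^{\pm}$. Your explicit distinctness argument via the splitting pattern and the size-counting cross-check are welcome elaborations of steps the paper leaves implicit, but they do not change the method.
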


\begin{proof}
	By Proposition \ref{prop 1} we have the decomposition
	$$\mathcal{C}_{n/q} = \bigsqcup_{(y,x) \in \Sigma(n^{\prime};q)}\widehat{\pi}_{n/n^{\prime}}^{-1}(c_{n^{\prime}/q}(\eta^{(y,x)})).$$
	For any $(y,x) \in \Sigma(n^{\prime};q)$, the $q$-cyclotomic cosets modulo $n$ lying in $\widehat{\pi}_{n/n^{\prime}}^{-1}(c_{n^{\prime}/q}(\eta^{(y,x)}))$ can be obtained via taking the components at degree $e_{0}$ of the sequences with quasi-stable degree $\leq e_{0}+1$ in $\mathcal{PC}_{n^{\prime}/q}(\eta^{(y,x)})$, since all the sequences with quasi-stable degree $\geq e_{0}+1$ have the same component at degree $e_{0}$. To be explicit, if $(y,x) \in \Sigma^{+}(y,x)$, then the representatives of the cosets in $\widehat{\pi}_{n/n^{\prime}}^{-1}(c_{n^{\prime}/q}(\eta^{(y,x)}))$ are given by
	$$\eta^{(y,x)} + n\cdot\widehat{U}_{d,t}(\eta^{(y,x)})_{\leq e_{0}-1} = \eta^{(y,x)} + n\cdot(U_{d}(\eta^{(y,x)})+ \sum_{j=1}^{s_{y,d}^{+}}t_{j}\cdot 2^{d+j}) = \mu(y,x)_{d,t}^{+},$$
	for all $0 \leq d \leq e_{0}$ and $t = (t_{1},\cdots,t_{s_{y,d}^{+}}) \in \{0,1\}^{s_{y,d}^{+}}$. Similar argument applies for the applicable tuples $(y,x) \in \Sigma^{-}(y,x)$.
\end{proof}

\begin{corollary}\label{coro 3}
	\begin{description}
		\item[(1)] Let $(y,x) \in \Sigma^{+}(n^{\prime};q)$. For any $d \in \{0,1,\cdots,e_{0}\}$ and $t = (t_{1},\cdots,t_{s_{y,d}^{+}}) \in \{0,1\}^{s_{y,d}^{+}}$, the size of the coset $c_{n/q}(\mu(y,x)_{d,t}^{+})$ is 
		$$| c_{n/q}(\mu(y,x)_{d,t}^{+}) | = 2^{\mathrm{max}\{e_{0}-d-v^{+}(y),0\}}\cdot\omega_{y}p_{1}^{M_{y_{1}}}\cdots p_{s}^{M_{y_{s}}}.$$
		\item[(2)] Let $(y,x) \in \Sigma^{-}(n^{\prime};q)$. For any $d \in \{0,1,\cdots,e_{0}\}$ and $t = (t_{1},\cdots,t_{s_{y,d}^{-}}) \in \{0,1\}^{s_{y,d}^{-}}$, the size of the coset $c_{n/q}(\mu(y,x)_{d,t}^{-})$ is 
		\begin{equation*}
			| c_{n/q}(\mu(y,x)_{d,t}^{-}) | = \left\{
			\begin{array}{lcl}
				2^{\mathrm{max}\{e_{0}-d-v^{-}(y),1\}}\cdot\omega_{y}p_{1}^{M_{y_{1}}}\cdots p_{s}^{M_{y_{s}}}, \ \mathrm{if} \ 0 \leq d \leq e_{0}-2;\\
				\omega_{y}p_{1}^{M_{y_{1}}}\cdots p_{s}^{M_{y_{s}}}, \ \mathrm{if} \ d=e_{0}-1 \ \mathrm{or} \ e_{0}.
			\end{array} \right.
		\end{equation*}
	\end{description}
\end{corollary}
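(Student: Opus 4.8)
The plan is to obtain both sizes by reading them off Corollary \ref{coro 4}, evaluated at degree $i = e_{0}$. The starting point is the identification already established in the proof of Theorem \ref{thm 7}: writing $\gamma = \eta^{(y,x)}$, the coset $c_{n/q}(\mu(y,x)_{d,t}^{\pm})$ is precisely the degree-$e_{0}$ component of the sequence $c_{d,t} \in \mathcal{PC}_{n'/q}(\gamma)$ attached to the generator $U_{d}(\gamma)$, because $\mu(y,x)_{d,t}^{\pm} \equiv \gamma + n'\cdot\widehat{U}_{d,t}(\gamma)_{\leq e_{0}-1} \pmod{n}$ and $2^{e_{0}}n' = n$, so that $c_{n/q}(\mu(y,x)_{d,t}^{\pm}) = c_{2^{e_{0}}n'/q}(\gamma + n'\cdot\widehat{U}_{d,t}(\gamma)_{\leq e_{0}-1})$. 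Thus the desired size is exactly the value of $|c_{2^{i}n'/q}(\gamma + n'\cdot\widehat{U}_{d,t}(\gamma)_{\leq i-1})|$ at $i = e_{0}$, which Corollary \ref{coro 4} furnishes (with its odd base taken to be $n'$). A preliminary observation I would record is that this value does not depend on the discarded tail coordinates $t_{s_{y,d}^{\pm}+1},\dots$; hence using any extension of $t$ to the larger index set of Corollary \ref{coro 4} returns the same number, which is why the truncated parameter set $\{0,1\}^{s_{y,d}^{\pm}}$ appearing in $\mu(y,x)_{d,t}^{\pm}$ suffices to list all the cosets without repetition.

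For assertion (1), with $(y,x) \in \Sigma^{+}(n';q)$, I would substitute $i = e_{0}$ into Corollary \ref{coro 4}(1b). Its two branches become the base size $\omega_{y}p_{1}^{M_{y_{1}}}\cdots p_{s}^{M_{y_{s}}}$ when $e_{0} \leq d + v^{+}(y)$, and $2^{e_{0}-d-v^{+}(y)}\,\omega_{y}p_{1}^{M_{y_{1}}}\cdots p_{s}^{M_{y_{s}}}$ when $e_{0} \geq d + v^{+}(y)+1$. Since the first branch is the base size times $2^{0}$ and the exponent in the second branch is strictly positive on its range, the two cases merge into the single expression $2^{\mathrm{max}\{e_{0}-d-v^{+}(y),\,0\}}\,\omega_{y}p_{1}^{M_{y_{1}}}\cdots p_{s}^{M_{y_{s}}}$, which is exactly (1). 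This step is pure rewriting of a piecewise formula and requires no new input.

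For assertion (2), with $(y,x) \in \Sigma^{-}(n';q)$, I would substitute $i = e_{0}$ into Corollary \ref{coro 4}(2b), whose three branches give the base size for $e_{0} \leq d+1$, twice the base size for $d+2 \leq e_{0} \leq d + v^{-}(y)+1$, and $2^{e_{0}-d-v^{-}(y)}$ times the base size for $e_{0} \geq d + v^{-}(y)+2$. The first branch is equivalent to $d \geq e_{0}-1$, i.e.\ $d = e_{0}-1$ or $d = e_{0}$, which yields the second line of (2). For $0 \leq d \leq e_{0}-2$ only the latter two branches can occur; here I would use that $q^{\tau_{y}} \equiv 3 \pmod{4}$ forces $v^{-}(y) = v_{2}(q^{\tau_{y}}+1) \geq 2$, so that the middle branch (where $e_{0}-d-v^{-}(y) \leq 1$) contributes the exponent $1 = \mathrm{max}\{e_{0}-d-v^{-}(y),1\}$, while the last branch (where $e_{0}-d-v^{-}(y) \geq 2 > 1$) contributes $\mathrm{max}\{e_{0}-d-v^{-}(y),1\}$ as well. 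Both therefore collapse into $2^{\mathrm{max}\{e_{0}-d-v^{-}(y),\,1\}}\,\omega_{y}p_{1}^{M_{y_{1}}}\cdots p_{s}^{M_{y_{s}}}$, the first line of (2).

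Because Corollary \ref{coro 4} already encodes the entire arithmetic of how sizes double each time a sequence fails to split, there is no genuine obstacle in this proof: its whole content is matching the piecewise thresholds of Corollary \ref{coro 4} to the two $\mathrm{max}$-expressions and isolating the degenerate range $d \in \{e_{0}-1,e_{0}\}$ in the type-$(-)$ case. The one point that genuinely merits care, and which I would state explicitly to avoid an off-by-one slip, is the inequality $v^{-}(y) \geq 2$: it is what guarantees that the boundary $d = e_{0}-2$ falls into the middle branch and that the floor value $1$ of $\mathrm{max}\{\,\cdot\,,1\}$ is actually attained, so that the two lines of (2) glue correctly across their common boundary.
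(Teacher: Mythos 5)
Your proof is correct and follows exactly the paper's route: the paper's own proof is the one-line observation that the result follows from Corollary \ref{coro 4} and Theorem \ref{thm 7}, and you have simply written out the substitution $i=e_{0}$ and the merging of the piecewise branches into the two $\mathrm{max}$-expressions. The only minor remark is that your appeal to $v^{-}(y)\geq 2$ is not actually needed — the inequality $e_{0}-d-v^{-}(y)\leq 1$ on the middle branch already follows from that branch's defining condition $e_{0}\leq d+v^{-}(y)+1$ — but this does not affect the argument.
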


\begin{proof}
	The conclusions follow from Corollary \ref{coro 4} and Theorem \ref{thm 7}.
\end{proof}

In practice, it is sometimes more convenient to compute the $q$-cyclotomic cosets modulo $n$ inductively from the cosets modulo the maximal odd divisor $n^{\prime}$ of $n$. We exhibit a concrete examples.

\begin{examples}
	Let $q = 3$ and $n = 40$. We begin with the $3$-cyclotomic cosets modulo $5$, which are
	$$\{0\}, \ \{1,3,4,2\}.$$
	It is easy to see that all these cosets split with respect to the extension $\mathbb{Z}/10\mathbb{Z}: \mathbb{Z}/5\mathbb{Z}$. Thus the $3$-cyclotomic cosets modulo $10$ are
	$$\{0\}, \ \{5\}, \ \{1,3,9,7\}, \ \{6,8,4,2\}.$$
	The coset $\{0\}$, $\{1,3,9,7\}$ and $\{6,8,4,2\}$ split with respect to $\mathbb{Z}/20\mathbb{Z}: \mathbb{Z}/105\mathbb{Z}$, while $\{5\}$ does not, which determines the $3$-cyclotomic cosets modulo $20$ as
	$$\{0\}, \ \{10\}, \ \{5,15\}, \ \{1,3,9,7\}, \ \{11,13,19,17\}, \ \{6,18,14,2\}, \ \{16,8,4,12\}.$$
	Finally, as $\{0\}$, $\{5,15\}$, $\{1,3,9,7\}$, $\{11,13,19,17\}$, $\{6,18,14,2\}$ and $\{16,8,4,12\}$ are splitting with respect to $\mathbb{Z}/40\mathbb{Z}: \mathbb{Z}/205\mathbb{Z}$, while $\{10\}$ is nonsplitting, the $3$-cyclotomic cosets modulo $40$ are
	\begin{align*}
		&\{0\}, \ \{20\}, \{10,30\}, \ \{5,15\}, \ \{25,35\}, \ \{1,3,9,27\}, \ \{21,23,29,7\}, \\ 
		&\{11,33,19,17\}, \ \{31,13,39,37\}, \ \{6,18,14,2\}, \ \{26,38,34,22\}, \ \{16,8,24,32\}, \ \{36,28,4,12\}.
	\end{align*}
\end{examples}

\section{Irreducible factorizations of $X^{n}-1$ and of cyclotomic polynomials over $\mathbb{F}_{q}$}
In \cite{Graner}, A. Graner gives general formulas for the irreducible factorizations of $X^{n}-1$ and of cyclotomic polynomials over $\mathbb{F}_{q}$. Speaking roughly, the idea is to first consider certain finite extension of $\mathbb{F}_{q}$ over which $X^{n}-1$ is factorized into irreducible binomials, according to Wu and Yue's theorems, then to compute the minimal polynomials over $\mathbb{F}_{q}$ of these binomials respectively. Based on the results in the last section and the multiple equal-difference structure of cyclotomic cosets, we give improved formula to factorize $X^{n}-1$ and $\Phi_{n}(X)$ over $\mathbb{F}_{q}$ in this section. The improvements are made in the following two aspects. With the full set of representatives and the sizes of $q$-cyclotomic cosets modulo $n$, the irreducible factors of $X^{n}-1$ and of $\Phi_{n}(X)$ can be determined precisely. And the coarsest multiple equal-difference representation of each cyclotomic coset leads to the extension fields of $\mathbb{F}_{q}$ with the lowest degrees over which the irreducible factors of $X^{n}-1$ over $\mathbb{F}_{q}$ can be further factorized into irreducible binomials, which makes these formulas more convenient to apply.

Let $c_{n/q}(\gamma)$ be a $q$-cyclotomic coset modulo $n$ with $|c_{n/q}(\gamma)| = \tau_{\gamma}$. Let $n_{\gamma} = \frac{n}{\mathrm{gcd}(\gamma,n)}$ and 
\begin{equation*}
	\omega_{\gamma} = \left\{
	\begin{array}{lcl}
		2\mathrm{ord}_{\mathrm{rad}(n_{\gamma})}(q), \quad \mathrm{if} \ q^{\mathrm{ord}_{\mathrm{rad}(n_{\gamma})}(q)} \equiv 3 \pmod{4} \ \mathrm{and} \ 8 \mid n_{\gamma};\\
		\mathrm{ord}_{\mathrm{rad}(n_{\gamma})}(q), \quad \mathrm{otherwise}.
	\end{array} \right.
\end{equation*}
By Corollary \ref{coro 2}
$$c_{n/q}(\gamma) = \bigsqcup_{j=0}^{\omega_{\gamma}-1}c_{n/q^{\omega_{\gamma}}}(\gamma q^{j})$$
is the coarsest multiple equal-difference representation of $c_{n/q}(\gamma)$. Notice that for all $0 \leq j \leq \omega_{\gamma}-1$ the cosets $c_{n/q^{\omega_{\gamma}}}(\gamma q^{j})$ have the same size $\frac{\tau_{\gamma}}{\omega_{\gamma}}$, thus they are equal-difference cosets with common difference $d = \frac{n\omega_{\gamma}}{\tau_{\gamma}}$. It follows that the induced irreducible polynomial $M_{\gamma q^{j},q^{\omega_{\gamma}}}(X)$ over $\mathbb{F}_{q^{\omega_{\gamma}}}$ is given by
$$M_{\gamma q^{j},q^{\omega_{\gamma}}}(X) = (X-\zeta_{n}^{\gamma q^{j}})(X-\zeta_{n}^{\gamma q^{j}+d})\cdots (X-\zeta_{n}^{\gamma q^{j}+(\frac{n}{d}-1)d}) = X^{\frac{n}{d}} - \zeta_{d}^{\gamma q^{j}} = X^{\frac{\tau_{\gamma}}{\omega_{\gamma}}} - \zeta_{d_{\gamma}}^{\widetilde{\gamma}q^{j}},$$
for $0 \leq j \leq \omega_{\gamma}-1$, where $d_{\gamma} = \dfrac{\frac{n\omega_{\gamma}}{\tau_{\gamma}}}{\mathrm{gcd}(\frac{n\omega_{\gamma}}{\tau_{\gamma}},\gamma)}$ and $\widetilde{\gamma} = \dfrac{\gamma}{\mathrm{gcd}(\frac{n\omega_{\gamma}}{\tau_{\gamma}},\gamma)}$. Hence we have 
$$M_{\gamma,q}(X) = \prod_{j=0}^{\omega_{\gamma}-1}M_{\gamma q^{j},q^{\omega_{\gamma}}}(X) = \sum_{i=0}^{\omega_{\gamma}}(-1)^{\omega_{\gamma}-i}\left(\sum_{\substack{U \subseteq \{0,\cdots,\omega_{\gamma}-1\}\\ |U|=\omega_{\gamma}-i}}\prod_{u \in U}\zeta_{d_{\gamma}}^{\widetilde{\gamma}q^{u}}\right)X^{\frac{\tau_{\gamma}}{\omega_{\gamma}}\cdot i}.$$
This proves the following lemma.

\begin{lemma}\label{lem 6}
	Let the notations be defined as above. The irreducible factor of $X^{n}-1$ induced by $c_{n/q}(\gamma)$ is given by
	$$M_{\gamma,q}(X) = \sum_{i=0}^{\omega_{\gamma}}(-1)^{\omega_{\gamma}-i}\left(\sum_{\substack{U \subseteq \{0,\cdots,\omega_{\gamma}-1\}\\ |U|=\omega_{\gamma}-i}}\prod_{u \in U}\zeta_{d_{\gamma}}^{\widetilde{\gamma}q^{u}}\right)X^{\frac{\tau_{\gamma}}{\omega_{\gamma}}\cdot i},$$
	where $d_{\gamma} = \dfrac{\frac{n\omega_{\gamma}}{\tau_{\gamma}}}{\mathrm{gcd}(\frac{n\omega_{\gamma}}{\tau_{\gamma}},\gamma)}$ and $\widetilde{\gamma} = \dfrac{\gamma}{\mathrm{gcd}(\frac{n\omega_{\gamma}}{\tau_{\gamma}},\gamma)}$.
\end{lemma}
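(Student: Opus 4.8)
The plan is to read the factorization directly off the coarsest multiple equal-difference representation furnished by Corollary \ref{coro 2} and then expand the resulting product of binomials into elementary symmetric functions. First I would recall that by definition $M_{\gamma,q}(X) = \prod_{\delta \in c_{n/q}(\gamma)}(X-\zeta_{n}^{\delta})$, so that the disjoint decomposition $c_{n/q}(\gamma) = \bigsqcup_{j=0}^{\omega_{\gamma}-1} c_{n/q^{\omega_{\gamma}}}(\gamma q^{j})$ of Corollary \ref{coro 2} immediately gives $M_{\gamma,q}(X) = \prod_{j=0}^{\omega_{\gamma}-1} M_{\gamma q^{j},\,q^{\omega_{\gamma}}}(X)$, where each factor runs over the corresponding sub-coset. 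Since each sub-coset is of equal difference, the theorem characterizing equal-difference cosets guarantees that each factor is a binomial over $\mathbb{F}_{q^{\omega_{\gamma}}}$, so it remains only to identify that binomial explicitly.

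Second, I would pin down the common difference. As $c_{n/q^{\omega_{\gamma}}}(\gamma q^{j})$ is equal-difference of size $\tau_{\gamma}/\omega_{\gamma}$, its common difference is $d = n\omega_{\gamma}/\tau_{\gamma}$, and the exponents of its constituent roots of unity form the arithmetic progression $\gamma q^{j}, \gamma q^{j}+d, \ldots, \gamma q^{j}+(\tfrac{n}{d}-1)d$. Applying the geometric-progression identity $\prod_{k=0}^{m-1}(X - a\zeta^{k}) = X^{m} - a^{m}$ with $m = n/d = \tau_{\gamma}/\omega_{\gamma}$, $a = \zeta_{n}^{\gamma q^{j}}$, and $\zeta = \zeta_{n}^{d}$ a primitive $m$-th root of unity, this collapses to $M_{\gamma q^{j},\,q^{\omega_{\gamma}}}(X) = X^{\tau_{\gamma}/\omega_{\gamma}} - \zeta_{n}^{(\gamma q^{j})(n/d)}$.

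Third, I would simplify the constant term through the compatibility relations of the fixed family of roots of unity: $\zeta_{n}^{n/d} = \zeta_{d}$ gives $\zeta_{n}^{(\gamma q^{j})(n/d)} = \zeta_{d}^{\gamma q^{j}}$, and $\zeta_{d}^{\mathrm{gcd}(d,\gamma)} = \zeta_{d_{\gamma}}$ together with $\gamma = \mathrm{gcd}(d,\gamma)\,\widetilde{\gamma}$ gives $\zeta_{d}^{\gamma q^{j}} = \zeta_{d_{\gamma}}^{\widetilde{\gamma} q^{j}}$, which are exactly the constants in the statement. Finally, writing $Y = X^{\tau_{\gamma}/\omega_{\gamma}}$ and $a_{j} = \zeta_{d_{\gamma}}^{\widetilde{\gamma} q^{j}}$, the expansion $\prod_{j=0}^{\omega_{\gamma}-1}(Y-a_{j}) = \sum_{i=0}^{\omega_{\gamma}}(-1)^{\omega_{\gamma}-i} e_{\omega_{\gamma}-i}(a_{0},\ldots,a_{\omega_{\gamma}-1})\,Y^{i}$ into elementary symmetric polynomials, with $e_{\omega_{\gamma}-i} = \sum_{|U|=\omega_{\gamma}-i}\prod_{u\in U} a_{u}$, reproduces the claimed formula after substituting back $Y^{i} = X^{(\tau_{\gamma}/\omega_{\gamma})i}$.

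The computation is essentially routine once the decomposition is in place; the only steps requiring care are the two root-of-unity reductions, where one must use that the fixed \emph{compatible} family makes $\zeta_{n}^{n/d} = \zeta_{d}$ and $\zeta_{d}^{\mathrm{gcd}(d,\gamma)} = \zeta_{d_{\gamma}}$ hold on the nose rather than merely up to a choice of primitive root, together with the verification that $\zeta_{n}^{d}$ is genuinely a primitive $(n/d)$-th root of unity so that the geometric-progression identity applies. I do not expect a serious obstacle beyond this bookkeeping, since the fact that each factor is a binomial over $\mathbb{F}_{q^{\omega_{\gamma}}}$ is already supplied by the equal-difference theory recalled above.
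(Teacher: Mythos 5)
Your proposal is correct and follows essentially the same route as the paper: the paper likewise invokes Corollary \ref{coro 2} to write $M_{\gamma,q}(X)=\prod_{j=0}^{\omega_{\gamma}-1}M_{\gamma q^{j},q^{\omega_{\gamma}}}(X)$, collapses each equal-difference factor with common difference $d=\frac{n\omega_{\gamma}}{\tau_{\gamma}}$ to the binomial $X^{\tau_{\gamma}/\omega_{\gamma}}-\zeta_{d_{\gamma}}^{\widetilde{\gamma}q^{j}}$, and expands the product via elementary symmetric polynomials. Your added care about the compatibility relations $\zeta_{n}^{n/d}=\zeta_{d}$ and $\zeta_{d}^{\mathrm{gcd}(d,\gamma)}=\zeta_{d_{\gamma}}$ is a point the paper uses implicitly without comment.
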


Now we give the factorizations of $X^{n}-1$ and of $\Phi_{n}(X)$ over $\mathbb{F}_{q}$. We consider the cases where the module $n$ is odd and where $n$ is even separately. First assume that $n= p_{1}^{e_{1}}\cdots p_{s}^{e_{s}}$ is odd, where $p_{1},\cdots,p_{s}$ are distinct odd primes different from $p$ and $e_{1},\cdots,e_{s}$ are positive integers. Fix a primitive root system $\eta= (\eta_{1},\cdots,\eta_{s})$ modulo $n$. Recall that the set $\Sigma(n;q)$ consists of the $2s$-tuples $(y,x) = (y_{1},\cdots,y_{s},x_{1},\cdots,x_{s})$ of integers satisfying that 
\begin{itemize}
	\item[(1)] $0 \leq y_{i} \leq e_{i}$, $i = 1,\cdots,s$; and
	\item[(2)] $0 \leq x_{i} \leq g_{i,y_{i}}\cdot \mathrm{gcd}(\mathrm{lcm}(f_{1,y_{1}},\cdots,f_{i-1,y_{i-1}}),f_{i,y_{i}}) - 1$, $i = 1,\cdots,s$.
\end{itemize}
Further, define $\Sigma^{\ast}(n;q)$ to be the set of the $s$-tuples $x=(x_{1},\cdots,x_{s})$ of integers such that 
$$0 \leq x_{i} \leq g_{i,0}\cdot \mathrm{gcd}(\mathrm{lcm}(f_{1,0},\cdots,f_{i-1,0}),f_{i,0}) - 1, \ i=1,\cdots,s.$$
Then $\Sigma^{\ast}(n;q)$ can be seen as a subset of $\Sigma(n;q)$ via the embedding 
$$x \mapsto (0,x) = (0,\cdots,0,x_{1},\cdots,x_{s}),$$
and the set
$$\mathcal{CR}_{n/q}^{\ast} = \{\eta_{1}^{x_{1}}\cdots\eta_{s}^{x_{s}} \ | \ x=(x_{1},\cdots,x_{s}) \in \Sigma^{\ast}(n;q)\}$$
forms a full set of representatives of the $q$-cyclotomic cosets modulo $n$ that contain elemnets coprime to $n$.

\begin{theorem}\label{thm 2}
	Let $n= p_{1}^{e_{1}}\cdots p_{s}^{e_{s}}$, where $p_{1},\cdots,p_{s}$ are distinct odd primes different from $p$ and $e_{1},\cdots,e_{s}$ are positive integers. Then the irreducible factorization of $X^{n}-1$ over $\mathbb{F}_{q}$ is given by
	$$X^{n}-1 = \prod_{(y,x) \in \Sigma(n;q)}\left(\sum_{i=0}^{\omega_{y}}(-1)^{\omega_{y}-i}(\sum_{\substack{U\subseteq \{0,\cdots,\omega_{y}-1\}\\ |U|=\omega_{y}-i}}\prod_{u \in U}\zeta_{p_{1}^{e_{1}-y_{1}-M_{y_{1}}}\cdots p_{s}^{e_{s}-y_{s}-M_{y_{s}}}}^{\eta_{1}^{x_{1}}\cdots\eta_{s}^{x_{s}}q^{u}})X^{p_{1}^{M_{y_{1}}}\cdots p_{1}^{M_{y_{s}}}\cdot i}\right),$$
	where for $j = 1.\cdots,s$, $m_{y_{j}} = \mathrm{min}\{e_{j}-y_{j},1\}$, $\omega_{y} = \mathrm{lcm}(\mathrm{ord}_{p_{1}^{m_{y_{1}}}}(q),\cdots,\mathrm{ord}_{p_{s}^{m_{y_{s}}}}(q))$ and $M_{y_{j}} = \mathrm{max}\{e_{j}-y_{j}-v_{p_{j}}(q^{\omega_{y}}-1),0\}$. And  the irreducible factorization of $\Phi_{n}(X)$ over $\mathbb{F}_{q}$ is given by
	$$\Phi_{n}(X) = \prod_{x \in \Sigma^{\ast}(n;q)}\left(\sum_{i=0}^{\omega_{0}}(-1)^{\omega_{0}-i}(\sum_{\substack{U\subseteq \{0,\cdots,\omega_{0}-1\}\\ |U|=\omega_{0}-i}}\prod_{u \in U}\zeta_{p_{1}^{e_{1}-M_{1}}\cdots p_{s}^{e_{s}-M_{s}}}^{\eta_{1}^{x_{1}}\cdots\eta_{s}^{x_{s}}q^{u}})X^{p_{1}^{M_{{1}}}\cdots p_{1}^{M_{{s}}}\cdot i}\right),$$
	where $\omega_{0} = \mathrm{lcm}(\mathrm{ord}_{p_{1}}(q),\cdots,\mathrm{ord}_{p_{s}}(q))$  and $M_{i} = \mathrm{max}\{e_{i}-v_{p_{i}}(q^{\omega_{0}}-1),0\}$, $i=1,\cdots,s$.
\end{theorem}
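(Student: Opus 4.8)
The plan is to specialize the general formula of Lemma \ref{lem 6} to the explicit representatives furnished by Theorem \ref{thm 3}, and then take the product over all cyclotomic cosets using the factorizations $X^{n}-1 = \prod_{\gamma \in \mathcal{CR}_{n/q}} M_{\gamma,q}(X)$ and $\Phi_{n}(X) = \prod_{\gcd(\gamma,n)=1} M_{\gamma,q}(X)$ recorded in Subsection \ref{sec 2}. By the corollary following Theorem \ref{thm 3}, the set $\{\eta^{(y,x)} \mid (y,x) \in \Sigma(n;q)\}$ is a full set of representatives, so it suffices to show that for each $\gamma = \eta^{(y,x)}$ the factor $M_{\gamma,q}(X)$ coming from Lemma \ref{lem 6} coincides with the inner expression indexed by $(y,x)$ in the statement.

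First I would compute the four quantities $\omega_{\gamma}$, $\tau_{\gamma}/\omega_{\gamma}$, $d_{\gamma}$ and $\widetilde{\gamma}$ occurring in Lemma \ref{lem 6}. Since $\eta_{1},\ldots,\eta_{s}$ are units modulo $n$, one has $\gcd(\gamma,n) = p_{1}^{y_{1}}\cdots p_{s}^{y_{s}}$, hence $n_{\gamma} = p_{1}^{e_{1}-y_{1}}\cdots p_{s}^{e_{s}-y_{s}}$ and $\mathrm{rad}(n_{\gamma}) = p_{1}^{m_{y_{1}}}\cdots p_{s}^{m_{y_{s}}}$ with $m_{y_{i}} = \min\{e_{i}-y_{i},1\}$. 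As $n$ is odd, the condition $8 \mid n_{\gamma}$ never holds, so the second branch in the definition of $\omega_{\gamma}$ is vacuous and $\omega_{\gamma} = \mathrm{ord}_{\mathrm{rad}(n_{\gamma})}(q) = \omega_{y}$; Proposition \ref{prop 4} then gives $\tau_{\gamma} = \omega_{y}\cdot p_{1}^{M_{y_{1}}}\cdots p_{s}^{M_{y_{s}}}$, whence $\tau_{\gamma}/\omega_{\gamma} = p_{1}^{M_{y_{1}}}\cdots p_{s}^{M_{y_{s}}}$, which matches the exponent of $X$.

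The crux is the evaluation of $\gcd(\tfrac{n\omega_{\gamma}}{\tau_{\gamma}},\gamma)$. Writing $D = \tfrac{n\omega_{\gamma}}{\tau_{\gamma}} = p_{1}^{e_{1}-M_{y_{1}}}\cdots p_{s}^{e_{s}-M_{y_{s}}}$ and again using that the $\eta_{i}$ are coprime to $n$, this equals $\prod_{i=1}^{s} p_{i}^{\min\{e_{i}-M_{y_{i}},\,y_{i}\}}$. The key observation is that $M_{y_{i}} \leq e_{i}-y_{i}$ in both branches of its definition, since $v_{p_{i}}(q^{\omega_{y}}-1) \geq 0$ forces the first branch to be at most $e_{i}-y_{i}$ and the second branch is $0$; therefore $e_{i}-M_{y_{i}} \geq y_{i}$ and the minimum equals $y_{i}$. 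This yields $\gcd(D,\gamma) = p_{1}^{y_{1}}\cdots p_{s}^{y_{s}}$, so that $d_{\gamma} = p_{1}^{e_{1}-y_{1}-M_{y_{1}}}\cdots p_{s}^{e_{s}-y_{s}-M_{y_{s}}}$ and $\widetilde{\gamma} = \eta_{1}^{x_{1}}\cdots\eta_{s}^{x_{s}}$. Substituting these into Lemma \ref{lem 6} reproduces the factor in the statement verbatim, and the product over $(y,x) \in \Sigma(n;q)$ completes the factorization of $X^{n}-1$. I expect this gcd bookkeeping, though elementary, to be the only point genuinely requiring care.

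For $\Phi_{n}(X)$ I would use $\Phi_{n}(X) = \prod_{\gcd(\gamma,n)=1} M_{\gamma,q}(X)$ and note that, among the representatives $\eta^{(y,x)}$, the coprimality $\gcd(\gamma,n)=1$ is equivalent to $y=(0,\ldots,0)$, that is, to $\gamma = \eta_{1}^{x_{1}}\cdots\eta_{s}^{x_{s}}$ with $x \in \Sigma^{\ast}(n;q)$. Specializing the computation above to $y=0$ gives $m_{y_{i}}=1$, $\omega_{0} = \mathrm{lcm}(\mathrm{ord}_{p_{1}}(q),\ldots,\mathrm{ord}_{p_{s}}(q))$ and $M_{i} = \max\{e_{i}-v_{p_{i}}(q^{\omega_{0}}-1),0\}$, and the same substitution into Lemma \ref{lem 6} produces the stated factorization of $\Phi_{n}(X)$.
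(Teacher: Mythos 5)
Your proposal is correct and follows essentially the same route as the paper's proof: identify $\omega_{\eta^{(y,x)}}=\omega_{y}$ and $\tau_{\eta^{(y,x)}}=\tau_{y}=\omega_{y}p_{1}^{M_{y_{1}}}\cdots p_{s}^{M_{y_{s}}}$ via Proposition \ref{prop 4}, substitute into Lemma \ref{lem 6} over the full set of representatives $\{\eta^{(y,x)}\}$ from Theorem \ref{thm 3}, and restrict to $y=0$ for $\Phi_{n}(X)$. Your explicit verification that $\gcd(\tfrac{n\omega_{\gamma}}{\tau_{\gamma}},\gamma)=p_{1}^{y_{1}}\cdots p_{s}^{y_{s}}$ (using $M_{y_{i}}\leq e_{i}-y_{i}$) is a detail the paper leaves implicit, and it is carried out correctly.
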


\begin{proof}
	Since 
	$$\mathcal{CR}_{n/q} = \{\eta^{(y,x)} \ | \ (y,x) \in \Sigma(n^{\prime};q)\}$$
	forms a full set of representatives of $q$-cyclotomic cosets modulo $n$, one has
	$$X^{n}-1 = \prod_{(y,x) \in \Sigma(n^{\prime};q)} M_{\eta^{(y,x)},q}(X).$$
	For any $(y,x) = (y_{1},\cdots,y_{s},x_{1},\cdots,x_{s}) \in \Sigma(n^{\prime};q)$, as $n_{\eta^{(y,x)}} = p_{1}^{m_{y_{1}}}\cdots p_{s}^{m_{y_{s}}}$, then
	$$\omega_{\eta^{(y,x)}} = \mathrm{ord}_{\mathrm{rad}(\eta^{(y,x)})}(q) = \mathrm{lcm}(\mathrm{ord}_{p_{1}^{m_{y_{1}}}}(q),\cdots,\mathrm{ord}_{p_{s}^{m_{y_{s}}}}(q)) = \omega_{y}.$$
	Recall that $|c_{n/q}(\eta^{(y,x)})| = \tau_{y} = \omega_{y}p_{1}^{M_{y_{1}}}\cdots p_{s}^{M_{y_{s}}}$, then it follows Lemma \ref{lem 6} that 
	$$X^{n}-1 = \prod_{(y,x) \in \Sigma(n;q)}\left(\sum_{i=0}^{\omega_{y}}(-1)^{\omega_{y}-i}(\sum_{\substack{U\subseteq \{0,\cdots,\omega_{y}-1\}\\ |U|=\omega_{y}-i}}\prod_{u \in U}\zeta_{p_{1}^{e_{1}-y_{1}-M_{y_{1}}}\cdots p_{s}^{e_{s}-y_{s}-M_{y_{s}}}}^{\eta_{1}^{x_{1}}\cdots\eta_{s}^{x_{s}}q^{u}})X^{p_{1}^{M_{y_{1}}}\cdots p_{1}^{M_{y_{s}}}\cdot i}\right).$$
	Moreover, as $\mathcal{CR}_{n/q}^{\ast}$ defined above is a full set of representatives of the $q$-cyclotomic cosets modulo $n$ that contains elements coprime to $n$, then it holds that 
	$$\Phi_{n}(X) = \prod_{x \in \Sigma^{\ast}(n;q)}\left(\sum_{i=0}^{\omega_{0}}(-1)^{\omega_{0}-i}(\sum_{\substack{U\subseteq \{0,\cdots,\omega_{0}-1\}\\ |U|=\omega_{0}-i}}\prod_{u \in U}\zeta_{p_{1}^{e_{1}-M_{1}}\cdots p_{s}^{e_{s}-M_{s}}}^{\eta_{1}^{x_{1}}\cdots\eta_{s}^{x_{s}}q^{u}})X^{p_{1}^{M_{{1}}}\cdots p_{1}^{M_{{s}}}\cdot i}\right).$$
\end{proof}

We turn to the case that $n$ is even. Assume that $n = 2^{e_{0}}p_{1}^{e_{1}}\cdots p_{s}^{e_{s}}$ where $p_{1},\cdots,p_{s}$ are distinct odd primes different than $p$ and $e_{0},e_{1},\cdots,e_{s}$ are positive integers, and $n^{\prime} = p_{1}^{e_{1}}\cdots p_{s}^{e_{s}}$ is the maximal odd divisor of $n$. Let $\eta = (\eta_{1},\cdots,\eta_{s})$ be a system of primitive roots modulo $n^{\prime}$, and $\Sigma(n^{\prime};q)$ be the set of applicable tuples with respect to $n^{\prime}$ and $q$. Further, we set
$$\Sigma^{+}(n^{\prime};q) = \{(y,x) \in \Sigma(n^{\prime};q) \ | \ q^{\tau_{y}} \equiv 1 \pmod{4}\}$$
and
$$\Sigma^{-}(n^{\prime};q) = \{(y,x) \in \Sigma(n^{\prime};q) \ | \ q^{\tau_{y}} \equiv 3 \pmod{4}\}.$$
The notations $s_{y,d}^{+}$, $s_{y,d}^{-}$, $\mu(y,x)_{d,t}^{+}$ and $\mu(y,x)_{d,t}^{-}$ are defined as in Section \ref{sec 8}.

\begin{theorem}\label{thm 1}
	Let $n = 2^{e_{0}}p_{1}^{e_{1}}\cdots p_{s}^{e_{s}}$, where $p_{1},\cdots,p_{s}$ are distinct odd primes different than $p$ and $e_{0},e_{1},\cdots,e_{s}$ are positive integers. Then the irreducible factorization of $X^{n}-1$ over $\mathbb{F}_{q}$ is given by
	\begin{align*}
		X^{n}-1 = & \prod_{(y,x) \in \Sigma^{+}(n^{\prime};q)}\prod_{d=0}^{e_{0}} \prod_{t \in \{0,1\}^{s_{y,d}^{+}}}\left(\sum_{i=0}^{\omega_{y}}(-1)^{\omega_{y}-i}(\sum_{\substack{U\subseteq \{0,\cdots,\omega_{y}-1\}\\ |U|=\omega_{y}-i}}\prod_{u \in U}\zeta_{r(y,x)_{d,t}^{+}}^{\widetilde{\mu}(y,x)_{d,t}^{+}q^{u}})X^{h(y,x)_{d,t}^{+}\cdot i}\right)\\
		&\times \prod_{(y,x) \in \Sigma^{-}(n^{\prime};q)}\prod_{d=0}^{e_{0}} \prod_{t \in \{0,1\}^{s_{y,d}^{-}}}\left(\sum_{i=0}^{\omega_{y}^{\prime}}(-1)^{\omega_{y}^{\prime}-i}(\sum_{\substack{U\subseteq \{0,\cdots,\omega_{y}^{\prime}-1\}\\ |U|=\omega_{y}^{\prime}-i}}\prod_{u \in U}\zeta_{r(y,x)_{d,t}^{-}}^{\widetilde{\mu}(y,x)_{d,t}^{-}q^{u}})X^{h(y,x)_{d,t}^{-}\cdot i}\right)
	\end{align*}
	where
	\begin{description}
		\item[(1)] for any $(y,x) \in \Sigma^{+}(n^{\prime};q)$,
		$$\widetilde{\mu}(y,x)_{d,t}^{+} = \dfrac{1}{2^{d}}(\eta_{1}^{x_{1}}\cdots\eta_{s}^{x_{s}}+p_{1}^{e_{1}-y_{1}}\cdots p_{s}^{e_{s}-y_{s}}\cdot(U_{d}(\eta^{(y,x)})+\sum_{j=1}^{s_{y,d}^{+}}t_{j}\cdot 2^{d+j})),$$
		$$r(y,x)_{d,t}^{+} = 2^{\mathrm{min}\{e_{0}-d,v^{+}(y)\}}p_{1}^{e_{1}-y_{1}-M_{y_{1}}}\cdots p_{s}^{e_{s}-y_{s}-M_{y_{s}}},$$
		and
		$$h(y,x)_{d,t}^{+} = 2^{\mathrm{max}\{e_{0}-d-v^{+}(y),1\}}p_{1}^{M_{y_{1}}}\cdots p_{s}^{M_{y_{s}}}.$$
		\item[(2)] for any $(y,x) \in \Sigma^{-}(n^{\prime};q)$,
		\begin{equation*}
			\omega_{y}^{\prime} = \left\{
			\begin{array}{lcl}
				2\omega_{y}, \quad \mathrm{if} \ 0 \leq d \leq e_{0}-3;\\
				\omega_{y}, \quad \mathrm{if} \ e_{0}-2 \leq d \leq e_{0},
			\end{array} \right.
		\end{equation*}
		$$\widetilde{\mu}(y,x)_{d,t}^{-} = \dfrac{1}{2^{d}}(\eta_{1}^{x_{1}}\cdots\eta_{s}^{x_{s}}+p_{1}^{e_{1}-y_{1}}\cdots p_{s}^{e_{s}-y_{s}}\cdot(U_{d}(\eta^{(y,x)})+\sum_{j=1}^{s_{y,d}^{-}}t_{j}\cdot 2^{d+j+1})),$$
		\begin{equation*}
			r(y,x)_{d,t}^{-} = \left\{
			\begin{array}{lcl}
				2^{\mathrm{min}\{e_{0}-d,v^{-}(y)+1\}}p_{1}^{e_{1}-y_{1}-M_{y_{1}}}\cdots p_{s}^{e_{s}-y_{s}-M_{y_{s}}}, \quad \mathrm{if} \ 0 \leq d \leq e_{0}-3;\\
				2p_{1}^{e_{1}-y_{1}-M_{y_{1}}}\cdots p_{s}^{e_{s}-y_{s}-M_{y_{s}}}, \quad \mathrm{if} \ d= e_{0}-2 \ \mathrm{or} \ e_{0}-1;\\
				p_{1}^{e_{1}-y_{1}-M_{y_{1}}}\cdots p_{s}^{e_{s}-y_{s}-M_{y_{s}}}, \quad \mathrm{if} \ d= e_{0},
			\end{array} \right.
		\end{equation*}
		and
		\begin{equation*}
			h(y,x)_{d,t}^{-} = \left\{
			\begin{array}{lcl}
				2^{\mathrm{max}\{e_{0}-d-v^{-}(y)-1,0\}}p_{1}^{M_{y_{1}}}\cdots p_{s}^{M_{y_{s}}}, \quad \mathrm{if} \ 0 \leq d \leq e_{0}-3;\\
				2p_{1}^{M_{y_{1}}}\cdots p_{s}^{M_{y_{s}}}, \quad \mathrm{if} \ d= e_{0}-2;\\
				p_{1}^{M_{y_{1}}}\cdots p_{s}^{M_{y_{s}}}, \quad \mathrm{if} \ d= e_{0}-1 \ \mathrm{or} \ e_{0}.
			\end{array} \right.
		\end{equation*}
	\end{description}
\end{theorem}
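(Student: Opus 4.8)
The plan is to combine the explicit list of representatives from Theorem \ref{thm 7} with the single-factor binomial formula of Lemma \ref{lem 6}. By Theorem \ref{thm 7}, $\mathcal{CR}_{n/q}$ is the disjoint union of the sets $\mathcal{CR}_{(y,x)}^{+}$ over $(y,x) \in \Sigma^{+}(n^{\prime};q)$ and $\mathcal{CR}_{(y,x)}^{-}$ over $(y,x) \in \Sigma^{-}(n^{\prime};q)$, so $X^{n}-1 = \prod_{\gamma \in \mathcal{CR}_{n/q}} M_{\gamma,q}(X)$ immediately breaks into a product indexed by $(y,x)$, $d$ and $t$. For a fixed representative $\gamma = \mu(y,x)_{d,t}^{\pm}$ the task then reduces to identifying the four data entering Lemma \ref{lem 6} — the upper index $\omega_{\gamma}$, the degree multiplier $\tau_{\gamma}/\omega_{\gamma}$, the modulus $d_{\gamma}$ of the root of unity, and the exponent $\widetilde{\gamma}$ — with the quantities $\omega_{y}$ (resp. $\omega_{y}^{\prime}$), $h(y,x)_{d,t}^{\pm}$, $r(y,x)_{d,t}^{\pm}$ and $\widetilde{\mu}(y,x)_{d,t}^{\pm}$ in the statement.

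Everything rests on two valuation computations. First I would establish $v_{2}(\mu(y,x)_{d,t}^{\pm}) = d$: by Definition \ref{def 1} the generator $U_{d}(\eta^{(y,x)})$ agrees with $\varphi_{n^{\prime}}(\eta^{(y,x)})$ in digits $0,\dots,d-1$ and flips digit $d$, so $v_{2}(\eta^{(y,x)} + n^{\prime}U_{d}(\eta^{(y,x)})) = d$ exactly, and the further shift $\sum_{j} t_{j}2^{d+j}$ (resp. $\sum_{j} t_{j}2^{d+j+1}$) is divisible by $2^{d+1}$ and leaves this unchanged. Second, since the correction term is a multiple of $n^{\prime}$ and hence of $p_{i}^{e_{i}}$ while $v_{p_{i}}(\eta^{(y,x)}) = y_{i}$, one gets $v_{p_{i}}(\mu(y,x)_{d,t}^{\pm}) = y_{i}$. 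Hence $n_{\gamma} = 2^{e_{0}-d}p_{1}^{e_{1}-y_{1}}\cdots p_{s}^{e_{s}-y_{s}}$, so $\mathrm{rad}(n_{\gamma}) = 2^{\min\{e_{0}-d,1\}}p_{1}^{m_{y_{1}}}\cdots p_{s}^{m_{y_{s}}}$, and because $q$ is odd (so $\mathrm{ord}_{2}(q)=1$) one obtains $\mathrm{ord}_{\mathrm{rad}(n_{\gamma})}(q) = \omega_{y}$.

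Feeding this into the defining dichotomy of $\omega_{\gamma}$ produces the split between the two families. On $\Sigma^{+}(n^{\prime};q)$ one has $q^{\omega_{y}} \equiv q^{\tau_{y}} \equiv 1 \pmod{4}$, since $\tau_{y}/\omega_{y}$ is odd, so the correction is never triggered and $\omega_{\gamma} = \omega_{y}$; on $\Sigma^{-}(n^{\prime};q)$ one has $q^{\omega_{y}} \equiv 3 \pmod{4}$, and the condition $8 \mid n_{\gamma}$ reads $e_{0}-d \geq 3$, giving $\omega_{\gamma} = 2\omega_{y}$ for $d \leq e_{0}-3$ and $\omega_{\gamma} = \omega_{y}$ otherwise — precisely $\omega_{y}^{\prime}$. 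The remaining entries then drop out: the size $\tau_{\gamma}$ supplied by Corollary \ref{coro 3} gives the degree multiplier $\tau_{\gamma}/\omega_{\gamma} = h(y,x)_{d,t}^{\pm}$, and using $\gcd(n\omega_{\gamma}/\tau_{\gamma},\gamma) = 2^{d}p_{1}^{y_{1}}\cdots p_{s}^{y_{s}}$ (immediate from the two valuation facts) yields $d_{\gamma} = r(y,x)_{d,t}^{\pm}$ and $\widetilde{\gamma} = \gamma/(2^{d}p_{1}^{y_{1}}\cdots p_{s}^{y_{s}}) = \widetilde{\mu}(y,x)_{d,t}^{\pm}$. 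A good internal check at each stage is the degree identity $\omega_{\gamma}\cdot(\tau_{\gamma}/\omega_{\gamma}) = \tau_{\gamma} = |c_{n/q}(\gamma)|$, i.e. each displayed factor must have degree equal to the size of its coset.

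I expect the main obstacle to be the case analysis on $\Sigma^{-}(n^{\prime};q)$, where both $\omega_{\gamma}$ and the coset size are genuinely piecewise in $d$, with thresholds at $d = e_{0}-3, e_{0}-2, e_{0}-1$ inherited from Corollary \ref{coro 4}; consequently $r(y,x)_{d,t}^{-}$ and $h(y,x)_{d,t}^{-}$ each split into three ranges, and keeping these ranges aligned with the definition of $\omega_{y}^{\prime}$ and with the degree-consistency identity is where slips are most likely. The $2$-adic valuation claim $v_{2}(\mu(y,x)_{d,t}^{\pm}) = d$, although conceptually the heart of the argument, should follow cleanly from the defining property of $\varphi_{n^{\prime}}$ in Definition \ref{def 1} together with the structural description of $\mathcal{PC}_{n^{\prime}/q}(\eta^{(y,x)})$ in Theorems \ref{thm 5} and \ref{thm 6}.
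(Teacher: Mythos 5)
Your proposal follows the paper's own proof essentially step for step: decompose $X^{n}-1$ over the representatives supplied by Theorem \ref{thm 7}, apply Lemma \ref{lem 6} to each factor $M_{\mu(y,x)_{d,t}^{\pm},q}(X)$, and identify $\omega_{\gamma}$, $\tau_{\gamma}/\omega_{\gamma}$, $d_{\gamma}$ and $\widetilde{\gamma}$ with the stated quantities via Corollary \ref{coro 3}; your explicit valuation computations $v_{2}(\mu(y,x)_{d,t}^{\pm})=d$ and $v_{p_{i}}(\mu(y,x)_{d,t}^{\pm})=y_{i}$ simply spell out what the paper leaves implicit, and they are correct. One remark: your degree-consistency check $\deg M_{\gamma,q}=|c_{n/q}(\gamma)|$ actually pays off, since carrying it out on $\Sigma^{+}(n^{\prime};q)$ yields $h(y,x)_{d,t}^{+}=2^{\mathrm{max}\{e_{0}-d-v^{+}(y),0\}}p_{1}^{M_{y_{1}}}\cdots p_{s}^{M_{y_{s}}}$ (in agreement with Corollary \ref{coro 3} and with $\omega_{\gamma}=\omega_{y}$), so the exponent $\mathrm{max}\{e_{0}-d-v^{+}(y),1\}$ in the statement appears to be a typo rather than a flaw in your argument.
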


\begin{proof}
	By Theorem \ref{thm 7} a full set of representatives of $\mathcal{C}_{n/q}$ is
	$$\mathcal{CR}_{n/q} = (\bigsqcup_{(y,x) \in \Sigma^{+}(n^{\prime};q)} \mathcal{CR}_{(y,x)}^{+}) \sqcup (\bigsqcup_{(y,x) \in \Sigma^{-}(n^{\prime};q)} \mathcal{CR}_{(y,x)}^{-}).$$
	By the definition of $\mathcal{CR}_{(y,x)}^{+}$ and $\mathcal{CR}_{(y,x)}^{-}$ one has that 
	$$X^{n}-1 = \prod_{(y,x) \in \Sigma^{+}(n^{\prime};q)}\prod_{d=0}^{e_{0}} \prod_{t \in \{0,1\}^{s_{y,d}^{+}}}M_{\mu(y,x)_{d,t}^{+},q}(X) \times \prod_{(y,x) \in \Sigma^{-}(n^{\prime};q)}\prod_{d=0}^{e_{0}} \prod_{t \in \{0,1\}^{s_{y,d}^{-}}}M_{\mu(y,x)_{d,t}^{-},q}(X).$$

	For any $(y,x) \in \Sigma^{+}(n^{\prime};q)$, $d \in \{0,\cdots,e_{0}\}$ and $t \in \{0,1\}^{s_{y,d}^{+}}$, as $q^{\tau_{y}} \equiv 1 \pmod{4}$, Lemma \ref{lem 6} yields that 
	$$M_{\mu(y,x)_{d,t}^{-},q}(X) = \sum_{i=0}^{\omega_{y}-1}(-1)^{\omega_{y}-i}(\sum_{\substack{U \subseteq \{0,\cdots,\omega_{y}-1\}\\ |U|=\omega_{y}-i}}\prod_{u \in U}\zeta_{r(y,x)_{d,t}^{+}}^{\widetilde{\mu}(y,x)_{d,t}^{+}q^{u}})X^{h(y,x)_{d,t}^{+}\cdot i},$$
	where $r(y,x)_{d,t}^{+} = \dfrac{\frac{n\omega_{y}}{\tau}}{\mathrm{gcd}(\frac{n\omega_{y}}{\tau},\mu(y,x)_{d,t}^{+})}$, and $\widetilde{\mu}(y,x)_{d,t}^{+} = \dfrac{\mu(y,x)_{d,t}^{+}}{\mathrm{gcd}(\frac{n\omega_{y}}{\tau},\mu(y,x)_{d,t}^{+})}$ and $h(y,x)_{d,t}^{+} = \frac{\tau}{\omega_{y}}$ for $\tau = |c_{n/q}(\mu(y,x)_{d,t}^{+})|$. Now the assertions follow from Corollary \ref{coro 3}. Similar argument applies for the case where $(y,x) \in \Sigma^{-}(n^{\prime};q)$, $d \in \{0,\cdots,e_{0}\}$ and $t \in \{0,1\}^{s_{y,d}^{-}}$.
\end{proof}

Notice that for any representative $\mu(y,x)_{d,t}^{\pm}$, it is coprime to $n$ if and only if $(y,x) = (0,x) \in \Sigma^{\ast}(n^{\prime};q)$ and $d=0$. Then from Theorem \ref{thm 1} we obtain the following corollary.

\begin{corollary}
	If $q^{\mathrm{ord}_{n^{\prime}}(q)} \equiv 1 \pmod{4}$, then the irreducible factorization of $\Phi_{n}(X)$ is given by
	$$\Phi_{n}(X)= \prod_{x \in \Sigma^{\ast}(n^{\prime};q)}\prod_{t \in \{0,1\}^{s_{0,0}^{+}}}\left(\sum_{i=0}^{\omega_{0}}(-1)^{\omega_{0}-i}(\sum_{\substack{U\subseteq \{0,\cdots,\omega_{0}-1\}\\ |U|=\omega_{0}-i}}\prod_{u \in U}\zeta_{r(x)_{t}^{+}}^{\widetilde{\mu}(x)_{t}^{+}q^{u}})X^{h(x)_{t}^{+}\cdot i}\right),$$
	where
	$$\widetilde{\mu}(x)_{t}^{+} = \eta_{1}^{x_{1}}\cdots\eta_{s}^{x_{s}}+p_{1}^{e_{1}}\cdots p_{s}^{e_{s}}\cdot(U_{d}(\eta_{1}^{x_{1}}\cdots\eta_{s}^{x_{s}})+\sum_{j=1}^{s_{0,0}^{+}}t_{j}\cdot 2^{j}),$$
	$$r(x)_{t}^{+} = 2^{\mathrm{min}\{e_{0},v_{2}(q^{\mathrm{ord}_{n^{\prime}}(q)}-1)\}}p_{1}^{e_{1}-M_{1}}\cdots p_{s}^{e_{s}-M_{s}},$$
	and
	$$h(x)_{t}^{+} = 2^{\mathrm{max}\{e_{0}-v_{2}(q^{\mathrm{ord}_{n^{\prime}}(q)}-1),1\}}p_{1}^{M_{1}}\cdots p_{s}^{M_{s}}.$$
	If $q^{\mathrm{ord}_{n^{\prime}}(q)} \equiv 3 \pmod{4}$, then the irreducible factorization of $\Phi_{n}(X)$ is given by
	$$\Phi_{n}(X)= \prod_{x \in \Sigma^{\ast}(n^{\prime};q)}\prod_{t \in \{0,1\}^{s_{0,0}^{-}}}\left(\sum_{i=0}^{\omega_{0}^{\prime}}(-1)^{\omega_{0}^{\prime}-i}(\sum_{\substack{U\subseteq \{0,\cdots,\omega_{0}^{\prime}-1\}\\ |U|=\omega_{0}^{\prime}-i}}\prod_{u \in U}\zeta_{r(x)_{t}^{-}}^{\widetilde{\mu}(x)_{t}^{-}q^{u}})X^{h(x)_{t}^{-}\cdot i}\right),$$
	where
	\begin{equation*}
		\omega_{0}^{\prime} = \left\{
		\begin{array}{lcl}
			\omega_{0}, \quad \mathrm{if} \ e_{0}\leq 2;\\
			2\omega_{0}, \quad \mathrm{if} \ e_{0}\geq 3,
		\end{array} \right.
	\end{equation*}
	$$\widetilde{\mu}(x)_{t}^{-} = \eta_{1}^{x_{1}}\cdots\eta_{s}^{x_{s}}+p_{1}^{e_{1}}\cdots p_{s}^{e_{s}}\cdot(U_{d}(\eta_{1}^{x_{1}}\cdots\eta_{s}^{x_{s}})+\sum_{j=1}^{s_{0,0}^{-}}t_{j}\cdot 2^{j+1}),$$
	\begin{equation*}
		r(x)_{t}^{-} = \left\{
		\begin{array}{lcl}
			2p_{1}^{e_{1}-M_{1}}\cdots p_{s}^{e_{s}-M_{s}}, \quad \mathrm{if} \ e_{0}=1 \ \mathrm{or} \ 2;\\
			2^{\mathrm{min}\{e_{0},v_{2}(q^{\mathrm{ord}_{n^{\prime}}(q)}+1)+1\}}p_{1}^{e_{1}-M_{1}}\cdots p_{s}^{e_{s}-M_{s}}, \quad \mathrm{if} \ e_{0} \geq 3,
		\end{array} \right.
	\end{equation*}
	and
	\begin{equation*}
		h(x)_{t}^{-} = \left\{
		\begin{array}{lcl}
			p_{1}^{M_{1}}\cdots p_{s}^{M_{s}}, \quad \mathrm{if} \ e_{0}=1;\\
			2p_{1}^{M_{1}}\cdots p_{s}^{M_{s}}, \quad \mathrm{if} \ e_{0}=2;\\
			2^{\mathrm{max}\{e_{0}-v_{2}(q^{\mathrm{ord}_{n^{\prime}}(q)}+1)-1,0\}}p_{1}^{M_{1}}\cdots p_{s}^{M_{s}}, \quad \mathrm{if} \ e_{0}=3.
		\end{array} \right.
	\end{equation*}
\end{corollary}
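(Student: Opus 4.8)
The plan is to derive this factorization of $\Phi_n(X)$ by restricting the factorization of $X^n-1$ from Theorem \ref{thm 1} to those irreducible factors $M_{\gamma,q}(X)$ whose associated coset consists of elements coprime to $n$. Recall from Subsection \ref{sec 2} that
$$\Phi_n(X) = \prod_{\substack{\gamma \in \mathcal{CR}_{n/q}\\ \gcd(\gamma,n)=1}} M_{\gamma,q}(X),$$
so the corollary is simply Theorem \ref{thm 1} with its product reindexed over the coprime representatives. First I would invoke the observation preceding the corollary: among all representatives $\mu(y,x)_{d,t}^{\pm}$ furnished by Theorem \ref{thm 7}, exactly those with $(y,x)=(0,x)$, $x \in \Sigma^{\ast}(n^{\prime};q)$, and $d=0$ are coprime to $n$. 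The mechanism is that $v_2(\mu(y,x)_{d,t}^{\pm}) = d$ (the generator $U_d$ is engineered to make the $2$-adic valuation equal to $d$, and the added $t$-terms have strictly higher valuation), so oddness forces $d=0$, while the odd part $\eta^{(y,x)}+n^{\prime}(\cdots)$ is coprime to $n^{\prime}$ precisely when $y=0$. Hence the defining product for $\Phi_n(X)$ runs over $\{\mu(0,x)_{0,t}^{+}\}$ in the first case and over $\{\mu(0,x)_{0,t}^{-}\}$ in the second.

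Next I would note that the dichotomy is global rather than term-by-term. For a tuple $(0,x)$ the size $\tau_y$ equals $\mathrm{ord}_{n^{\prime}}(q)$ independently of $x$, so $(0,x) \in \Sigma^{+}(n^{\prime};q)$ for \emph{every} $x$ exactly when $q^{\mathrm{ord}_{n^{\prime}}(q)} \equiv 1 \pmod 4$, and $(0,x) \in \Sigma^{-}(n^{\prime};q)$ for every $x$ exactly when $q^{\mathrm{ord}_{n^{\prime}}(q)} \equiv 3 \pmod 4$; this is what produces two separate formulas instead of a mixture. With the index set fixed, the remaining work is to specialize every quantity in Theorem \ref{thm 1} at $y=0$ and $d=0$. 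Setting $y=0$ gives $m_{y_i}=\min\{e_i,1\}=1$, whence $\omega_y = \mathrm{lcm}(\mathrm{ord}_{p_1}(q),\cdots,\mathrm{ord}_{p_s}(q)) = \omega_0$ and $M_{y_i}=\max\{e_i-v_{p_i}(q^{\omega_0}-1),0\}=M_i$; moreover $v^{+}(y)=v_2(q^{\mathrm{ord}_{n^{\prime}}(q)}-1)$ and $v^{-}(y)=v_2(q^{\mathrm{ord}_{n^{\prime}}(q)}+1)$, and the factor $p_1^{e_1-y_1}\cdots p_s^{e_s-y_s}$ collapses to $p_1^{e_1}\cdots p_s^{e_s}$. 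Setting $d=0$ makes the prefactor $\tfrac{1}{2^d}=1$ and turns the exponents $2^{d+j}$ and $2^{d+j+1}$ into $2^{j}$ and $2^{j+1}$, reproducing $\widetilde{\mu}(x)_t^{\pm}$ verbatim, while the index bound $s_{0,0}^{\pm}$ is read straight from its definition in Section \ref{sec 8}.

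The one place demanding genuine care is the three-way case split in the $-$ branch of Theorem \ref{thm 1}, whose sub-cases are keyed to the position of $d$ relative to $e_0-3$, $e_0-2$, $e_0-1$, $e_0$. Since $d$ is now pinned to $0$, the applicable sub-case is governed entirely by $e_0$: the regime $0 \le d \le e_0-3$ holds iff $e_0 \ge 3$, the cases $d=e_0-2$ and $d=e_0-1$ correspond to $e_0=2$ and $e_0=1$, and $d=e_0$ cannot occur because $e_0 \ge 1$. Reading off $\omega_y^{\prime}$, $r(y,x)_{d,t}^{-}$ and $h(y,x)_{d,t}^{-}$ under this substitution yields precisely the piecewise formulas for $\omega_0^{\prime}$, $r(x)_t^{-}$ and $h(x)_t^{-}$ in the statement (the sizes $h(x)_t^{\pm}$ being exactly the $d=0$ values supplied by Corollary \ref{coro 3}). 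The $+$ branch has no such split and is a clean substitution. Thus the main obstacle is purely bookkeeping, namely matching the specialized sub-case boundaries to the thresholds $e_0=1$, $e_0=2$, $e_0\ge 3$; no new estimate or construction is needed.
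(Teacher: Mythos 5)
Your proposal is correct and follows essentially the same route as the paper: the paper's entire justification is the remark that $\mu(y,x)_{d,t}^{\pm}$ is coprime to $n$ exactly when $(y,x)=(0,x)$ with $x\in\Sigma^{\ast}(n^{\prime};q)$ and $d=0$, after which the corollary is read off from Theorem \ref{thm 1} by specialization. Your additional bookkeeping (the valuation argument $v_{2}(\mu(y,x)_{d,t}^{\pm})=d$, the observation that $\tau_{0}=\mathrm{ord}_{n^{\prime}}(q)$ makes the $\pm$ dichotomy uniform over $x$, and the translation of the $d$-indexed sub-cases into thresholds on $e_{0}$, including reading the paper's ``$e_{0}=3$'' as $e_{0}\geq 3$) is accurate and in fact more explicit than what the paper provides.
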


\section{Cyclic codes over finite fields}\label{sec 6}
As an application of the irreducible factorization of $X^{n}-1$ over $\mathbb{F}_{q}$, we determine the cyclic codes over $\mathbb{F}_{q}$ via giving their generator polynomials. To deal with both the simple-rooted and the repeat-rooted cyclic codes, we allow the length $m$ of the codes to be divisible by $p$. Furthermore, we prove a criterion for a cyclic code to be self-dual, and enumerate the self-dual cyclic codes of length $m$ over $\mathbb{F}_{q}$.

Let $m = p^{v}n$ be a positive integer, where $v = v_{p}(m) \geq 0$ and $n$ is coprime to $p$. Following from Theorem \ref{thm 2} and \ref{thm 1} directly, the cyclic codes of length $m$ over $\mathbb{F}_{q}$ are given by the following theorems, in the cases where $n$ is odd and where $n$ is even respectively.

\begin{theorem}\label{thm 4}
	Assume that $n= p_{1}^{e_{1}}\cdots p_{s}^{e_{s}}$, where $p_{1},\cdots,p_{s}$ are distinct odd primes different from $p$ and $e_{1},\cdots,e_{s}$ are positive integers. With the notations adopted from Theorem \ref{thm 2}, all the cyclic codes of length $m = p^{v}n$ are given by
	$$C_{\Omega} = \left(\prod_{(y,x) \in \Sigma(n;q)}M_{\eta^{(y,x)},q}(X)^{\Omega(y,x)}\right)$$
	for all $0 \leq \Omega(y,x) \leq p^{v}$, $(y,x) \in \Sigma(n;q)$, where 
	$$M_{\eta^{(y,x)},q}(X) = \sum_{i=0}^{\omega_{y}}(-1)^{\omega_{y}-i}(\sum_{\substack{U\subseteq \{0,\cdots,\omega_{y}-1\}\\ |U|=\omega_{y}-i}}\prod_{u \in U}\zeta_{p_{1}^{e_{1}-y_{1}-M_{y_{1}}}\cdots p_{s}^{e_{s}-y_{s}-M_{y_{s}}}}^{\eta_{1}^{x_{1}}\cdots\eta_{s}^{x_{s}}q^{u}})X^{p_{1}^{M_{y_{1}}}\cdots p_{1}^{M_{y_{s}}}\cdot i}.$$
\end{theorem}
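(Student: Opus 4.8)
The plan is to reduce the statement to the irreducible factorization of $X^m-1$ over $\mathbb{F}_q$ together with the standard correspondence, recalled in Subsection \ref{sec 2}, between cyclic codes of length $m$ and monic divisors of $X^m-1$. The starting point is the characteristic-$p$ identity
$$X^m-1 = X^{p^v n}-1 = (X^n)^{p^v}-1^{p^v} = (X^n-1)^{p^v},$$
which holds because $\mathbb{F}_q$ has characteristic $p$ and the map $a \mapsto a^{p^v}$ is additive. Hence the factorization of $X^m-1$ is obtained from that of $X^n-1$ by raising each factor to the power $p^v$, and the whole problem is transported to the odd (coprime-to-$p$) modulus $n$, where Theorem \ref{thm 2} already applies.

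First I would invoke Theorem \ref{thm 2}, which supplies the irreducible factorization
$$X^n-1 = \prod_{(y,x) \in \Sigma(n;q)} M_{\eta^{(y,x)},q}(X)$$
over $\mathbb{F}_q$, where each $M_{\eta^{(y,x)},q}(X)$ is the monic irreducible attached to the $q$-cyclotomic coset $c_{n/q}(\eta^{(y,x)})$. These factors are pairwise distinct, since the cosets $c_{n/q}(\eta^{(y,x)})$ for $(y,x)\in\Sigma(n;q)$ are precisely the distinct $q$-cyclotomic cosets modulo $n$ enumerated in Theorem \ref{thm 3}. Combining this with the identity above gives
$$X^m-1 = \prod_{(y,x) \in \Sigma(n;q)} M_{\eta^{(y,x)},q}(X)^{p^v},$$
which is exactly the irreducible factorization of $X^m-1$ over $\mathbb{F}_q$: every distinct monic irreducible factor occurs with multiplicity exactly $p^v$.

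Next I would appeal to unique factorization in the principal ideal domain $\mathbb{F}_q[X]$. A monic polynomial divides $X^m-1$ if and only if it has the form $\prod_{(y,x)} M_{\eta^{(y,x)},q}(X)^{\Omega(y,x)}$ for some assignment of integers $0 \le \Omega(y,x) \le p^v$, one for each $(y,x)\in\Sigma(n;q)$, and distinct exponent tuples $\Omega$ yield distinct divisors. The correspondence of Subsection \ref{sec 2} then identifies the cyclic codes of length $m$ over $\mathbb{F}_q$ with the monic divisors of $X^m-1$, so the codes $C_\Omega$ in the statement range over all cyclic codes of length $m$ as $\Omega$ ranges over the admissible tuples, and substituting the explicit expression for $M_{\eta^{(y,x)},q}(X)$ from Theorem \ref{thm 2} gives the displayed generator polynomials.

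I do not expect a serious obstacle here: the argument is a direct assembly of Theorem \ref{thm 2}, the collapse $X^{p^v n}-1=(X^n-1)^{p^v}$, and unique factorization in $\mathbb{F}_q[X]$. The only point deserving explicit care is the distinctness of the $M_{\eta^{(y,x)},q}(X)$, which guarantees both that the exponent bound $p^v$ is uniform across all factors and that the parametrization by $\Omega$ is a genuine enumeration without repetition; this distinctness is already secured by the distinctness of the cosets in Theorem \ref{thm 3}.
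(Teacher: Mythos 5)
Your proposal is correct and follows essentially the same route as the paper, which simply states that the result follows directly from Theorem \ref{thm 2} together with the correspondence between cyclic codes of length $m$ and monic divisors of $X^{m}-1$ recalled in Subsection \ref{sec 2}. You have merely made explicit the two ingredients the paper leaves implicit, namely the characteristic-$p$ identity $X^{p^{v}n}-1=(X^{n}-1)^{p^{v}}$ and the distinctness of the irreducible factors $M_{\eta^{(y,x)},q}(X)$ guaranteed by Theorem \ref{thm 3}.
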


\begin{theorem}
	Assume that $n = 2^{e_{0}}p_{1}^{e_{1}}\cdots p_{s}^{e_{s}}$, where $p_{1},\cdots,p_{s}$ are distinct odd primes different than $p$ and $e_{0},e_{1},\cdots,e_{s}$ are positive integers. With the notations adopted from Theorem \ref{thm 1}, all the cyclic codes of length $m = p^{v}n$ are given by
	$$C_{\Omega} = \left(\prod_{(y,x) \in \Sigma^{+}(n^{\prime};q)}\prod_{d=0}^{e_{0}} \prod_{t \in \{0,1\}^{s_{y,d}^{+}}}M_{\mu(y,x)_{d,t}^{+},q}(X)^{\Omega(y,x)_{d,t}^{+}}\times\prod_{(y,x) \in \Sigma^{-}(n^{\prime};q)}\prod_{d=0}^{e_{0}} \prod_{t \in \{0,1\}^{s_{y,d}^{-}}}M_{\mu(y,x)_{d,t}^{-},q}(X)^{\Omega(y,x)_{d,t}^{-}}\right)$$
	for all $0 \leq \Omega(y,x)_{d,t}^{+}, \Omega(y,x)_{d,t}^{-} \leq p^{v}$, where 
	$$M_{\mu(y,x)_{d,t}^{+},q}(X) = \sum_{i=0}^{\omega_{y}}(-1)^{\omega_{y}-i}(\sum_{\substack{U\subseteq \{0,\cdots,\omega_{y}-1\}\\ |U|=\omega_{y}-i}}\prod_{u \in U}\zeta_{r(y,x)_{d,t}^{+}}^{\widetilde{\mu}(y,x)_{d,t}^{+}q^{u}})X^{t(y,x)_{d,t}^{+}\cdot i}$$
	and 
	$$M_{\mu(y,x)_{d,t}^{-},q}(X) = \sum_{i=0}^{\omega_{y}^{\prime}}(-1)^{\omega_{y}^{\prime}-i}(\sum_{\substack{U\subseteq \{0,\cdots,\omega_{y}^{\prime}-1\}\\ |U|=\omega_{y}^{\prime}-i}}\prod_{u \in U}\zeta_{r(y,x)_{d,t}^{-}}^{\widetilde{\mu}(y,x)_{d,t}^{-}q^{u}})X^{t(y,x)_{d,t}^{-}\cdot i}.$$
\end{theorem}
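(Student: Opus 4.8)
The plan is to deduce this classification from the irreducible factorization of $X^{n}-1$ supplied by Theorem \ref{thm 1}, together with the standard correspondence between cyclic codes of length $m$ and monic divisors of $X^{m}-1$ recalled in Subsection \ref{sec 2}. The only genuinely new ingredient beyond Theorem \ref{thm 1} is the passage from the simple-rooted length $n$ to the possibly repeated-rooted length $m=p^{v}n$.

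First I would reduce $X^{m}-1$ to $X^{n}-1$. Since $m=p^{v}n$ with $\gcd(n,p)=1$ and $\mathbb{F}_{q}$ has characteristic $p$, the Frobenius (``freshman's dream'') identity gives
$$X^{m}-1 = X^{p^{v}n}-1 = (X^{n}-1)^{p^{v}}.$$
Next I would invoke Theorem \ref{thm 1}, which expresses $X^{n}-1$ as the product of the monic polynomials $M_{\mu(y,x)_{d,t}^{+},q}(X)$ and $M_{\mu(y,x)_{d,t}^{-},q}(X)$, ranging over $(y,x)\in\Sigma^{+}(n^{\prime};q)\cup\Sigma^{-}(n^{\prime};q)$, $0\le d\le e_{0}$ and $t\in\{0,1\}^{s_{y,d}^{+}}$ or $\{0,1\}^{s_{y,d}^{-}}$. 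By construction each such $M_{\mu,q}(X)$ is the minimal polynomial attached to a $q$-cyclotomic coset modulo $n$, hence is irreducible over $\mathbb{F}_{q}$, and by Theorem \ref{thm 7} the underlying cosets form a full set of representatives, so these factors are pairwise distinct. Substituting into the displayed identity yields the complete factorization of $X^{m}-1$ into irreducibles, in which every $M_{\mu,q}(X)$ occurs with multiplicity exactly $p^{v}$.

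Finally I would enumerate the divisors. Because $\mathbb{F}_{q}[X]$ is a unique factorization domain, a monic polynomial divides $X^{m}-1$ if and only if it has the form $\prod M_{\mu,q}(X)^{\Omega_{\mu}}$, where the exponent of each distinct irreducible factor is chosen independently in $\{0,1,\dots,p^{v}\}$. Relabelling these exponents as $\Omega(y,x)_{d,t}^{+}$ and $\Omega(y,x)_{d,t}^{-}$ produces precisely the generators in the statement. The correspondence of Subsection \ref{sec 2}, under which a cyclic code of length $m$ is the ideal of $\mathbb{F}_{q}[X]/(X^{m}-1)$ generated by a monic divisor of $X^{m}-1$, then identifies the $C_{\Omega}$ with all cyclic codes of length $m$; distinct exponent tuples give distinct generator polynomials and hence distinct codes.

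I do not expect a genuine obstacle here: all the substance is already contained in Theorem \ref{thm 1}, and the present theorem is essentially its repeated-rooted corollary. The only points requiring care are confirming that the multiplicities are uniformly $p^{v}$ --- which rests on the separability of $X^{n}-1$, guaranteed by $\gcd(n,q)=1$, so that factors coming from distinct cosets share no common root --- and that the exponent range $0\le\Omega\le p^{v}$ is exactly right. Neither of these is delicate.
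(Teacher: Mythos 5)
Your proposal is correct and follows exactly the route the paper intends: the paper simply asserts that this theorem ``follows from Theorem \ref{thm 1} directly,'' and your argument --- writing $X^{m}-1=(X^{n}-1)^{p^{v}}$ via the Frobenius identity, citing the irreducible factorization of $X^{n}-1$ from Theorem \ref{thm 1} with its pairwise distinct factors, and then enumerating monic divisors by independent exponents in $\{0,\dots,p^{v}\}$ under the ideal--divisor correspondence of Subsection \ref{sec 2} --- is precisely the standard deduction being invoked.
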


As it is well known that self-dual cyclic codes of length $m$ over $\mathbb{F}_{q}$ exist if and only if $q$ is a power of $2$ and $m$ is even, in the following context we restrict ourselves to this case. Assume that $q=2^{e}$, and $m = 2^{v}n$ where $p_{1},\cdots,p_{s}$ are distinct odd primes and $v,e_{0},e_{1},\cdots,e_{s}$ are positive integers. By theorem \ref{thm 4} any cyclic code of length $m$ over $\mathbb{F}_{2^{e}}$ has the generator polynomial in the form
$$\prod_{\Sigma(n;2^{e})}M_{\eta^{(y,x)},2^{e}}(X)^{\Omega(y,x)},$$
where $0 \leq \Omega(y,x) \leq 2^{v}$.

\begin{lemma}\label{lem 5}
	Let $(y,x) = (y_{1},\cdots,y_{s},x_{1},\cdots,x_{s}) \in \Sigma(n;2^{e})$, and let $\{i_{1},\cdots,i_{r}\}$ be the subset of $\{1,\cdots,s\}$ consisting of the index $i$ such that $y_{i} < e_{i}$. The polynomial $M_{\eta^{(y,x)},2^{e}}(X)$ is self-dual if and only if 
	$$v_{2}(\mathrm{ord}_{p_{i_{1}}}(q)) = \cdots = v_{2}(\mathrm{ord}_{p_{i_{r}}}(q)) > 0.$$
\end{lemma}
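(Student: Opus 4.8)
The plan is to convert self-duality of $M_{\eta^{(y,x)},2^{e}}(X)$ into an arithmetic statement about the coset, and then decide that statement one prime at a time via the Chinese Remainder Theorem. First I would record the root-set description: $M_{\gamma,q}(X)$ has roots $\{\zeta_{n}^{\gamma q^{j}}\}$, so its reciprocal $M_{\gamma,q}^{\ast}(X)$ has roots $\{\zeta_{n}^{-\gamma q^{j}}\}$, whence $M_{\gamma,q}^{\ast} = M_{-\gamma,q}$. Since both are monic, $M_{\gamma,q}$ is self-dual if and only if $c_{n/q}(\gamma) = c_{n/q}(-\gamma)$, i.e. $-\gamma \equiv \gamma q^{\ell} \pmod{n}$ for some $\ell \geq 0$. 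Writing $n_{\gamma} = n/\gcd(\gamma,n)$, this is exactly the requirement that $q^{\ell} \equiv -1 \pmod{n_{\gamma}}$ for some $\ell$, i.e. that $-1$ lies in the cyclic subgroup $\langle q\rangle \leq (\mathbb{Z}/n_{\gamma}\mathbb{Z})^{\ast}$.

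Next I would compute $n_{\gamma}$ for $\gamma = \eta^{(y,x)}$. Because $\eta_{1},\dots,\eta_{s}$ are units modulo $n$ and $0 \leq y_{i} \leq e_{i}$, one has $\gcd(\eta^{(y,x)},n) = p_{1}^{y_{1}}\cdots p_{s}^{y_{s}}$, so $n_{\gamma} = p_{1}^{e_{1}-y_{1}}\cdots p_{s}^{e_{s}-y_{s}} = \prod_{j=1}^{r} p_{i_{j}}^{e_{i_{j}}-y_{i_{j}}}$, the product running precisely over the indices with $y_{i} < e_{i}$. By the Chinese Remainder Theorem, $-1 \in \langle q\rangle$ inside $(\mathbb{Z}/n_{\gamma}\mathbb{Z})^{\ast}$ if and only if there is a \emph{single} exponent $\ell$ with $q^{\ell} \equiv -1 \pmod{p_{i_{j}}^{e_{i_{j}}-y_{i_{j}}}}$ simultaneously for all $j = 1,\dots,r$.

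The heart of the argument is analyzing when such a common exponent exists. Fix $j$ and set $\tau_{j} = \mathrm{ord}_{p_{i_{j}}^{e_{i_{j}}-y_{i_{j}}}}(q)$. The group $(\mathbb{Z}/p_{i_{j}}^{e_{i_{j}}-y_{i_{j}}}\mathbb{Z})^{\ast}$ is cyclic, so $-1$ is its unique element of order $2$; consequently $q^{\ell} \equiv -1$ is solvable only when $\tau_{j}$ is even, in which case the solutions are exactly $\ell \equiv \tau_{j}/2 \pmod{\tau_{j}}$. Splitting this congruence along the coprime factorization $\tau_{j} = 2^{a_{j}}m_{j}$ (with $m_{j}$ odd and $a_{j} = v_{2}(\tau_{j})$) shows it is equivalent to the pair of conditions $v_{2}(\ell) = a_{j}-1$ and $m_{j} \mid \ell$. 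A common $\ell$ across all $j$ therefore exists if and only if all the $a_{j}$ coincide and are positive: the odd requirements $m_{j}\mid\ell$ can always be met jointly once the $2$-adic valuations agree, whereas $v_{2}(\ell)$ can assume only one value. I expect this consistency analysis of the exponent to be the main obstacle, since it is where the ``equal valuations'' shape of the conclusion is forced.

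Finally I would remove the dependence on the higher prime powers using Lemma \ref{lem 3}. As $p_{i_{j}}$ is odd and $q$ is coprime to it, one gets $\mathrm{ord}_{p_{i_{j}}^{k}}(q) = \mathrm{ord}_{p_{i_{j}}}(q)\cdot p_{i_{j}}^{\max\{0,\,k - v_{p_{i_{j}}}(q^{\mathrm{ord}_{p_{i_{j}}}(q)}-1)\}}$, so the ratio $\tau_{j}/\mathrm{ord}_{p_{i_{j}}}(q)$ is a power of the odd prime $p_{i_{j}}$ and hence $v_{2}(\tau_{j}) = v_{2}(\mathrm{ord}_{p_{i_{j}}}(q))$. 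Substituting this into the criterion just obtained, the condition ``all $v_{2}(\tau_{j})$ equal and positive'' becomes precisely $v_{2}(\mathrm{ord}_{p_{i_{1}}}(q)) = \cdots = v_{2}(\mathrm{ord}_{p_{i_{r}}}(q)) > 0$, which is the assertion of the lemma.
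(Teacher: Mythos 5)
Your proposal is correct and follows essentially the same route as the paper's proof: reduce self-reciprocity to $c_{n/q}(\eta^{(y,x)}) = c_{n/q}(-\eta^{(y,x)})$, split the congruence $q^{\ell}\equiv -1$ via the Chinese Remainder Theorem into conditions modulo each $p_{i_{j}}^{e_{i_{j}}-y_{i_{j}}}$, use cyclicity of those unit groups to characterize solvability, and invoke the lift-the-exponent lemma to replace $v_{2}(\mathrm{ord}_{p_{i_{j}}^{e_{i_{j}}-y_{i_{j}}}}(q))$ by $v_{2}(\mathrm{ord}_{p_{i_{j}}}(q))$. Your explicit consistency analysis of $v_{2}(\ell)$ in the final step is slightly more detailed than the paper's, which simply asserts the solvability criterion, but the argument is the same.
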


\begin{proof}
	The polynomial $M_{\eta^{(y,x)},2^{e}}(X)$ is self-reciprocal if and only if $c_{n/q}(\eta^{(y,x)}) = c_{n/q}(-\eta^{(y,x)})$, that is, there is a positive integer $j$ such that 
	$$-\eta^{(y,x)} \equiv \eta_{1}^{x_{1}}\cdots \eta^{(y,x)}q^{j} \pmod{p_{1}^{e_{1}}\cdots p_{s}^{e_{s}}},$$
	which by the Chinese remainder theorem amounts to 
	\begin{equation}\label{eq 6}
		\left\{
		\begin{array}{lcl}
			q^{j} \equiv -1 \pmod{p_{1}^{e_{1}-y_{1}}}\\
			\cdots \cdots\\
			q^{j} \equiv -1 \pmod{p_{s}^{e_{s}-y_{s}}}
		\end{array} \right.
	\end{equation}
	Since the groups $\mathbb{Z}/p_{i_{k}}^{e_{i_{k}}-y_{i_{k}}}\mathbb{Z}$, $k = 1,\cdots,r$, are cyclic, the congruence equations \eqref{eq 6} are equivalent to that the orders $\mathrm{ord}_{p_{i_{k}}}(q)$ are all even, and 
	\begin{equation}\label{eq 7}
		j \equiv \dfrac{1}{2}\mathrm{ord}_{p_{i_{k}}^{e_{i_{k}}-y_{i_{k}}}}(q) \pmod{\mathrm{ord}_{p_{i_{k}}^{e_{i_{k}}-y_{i_{k}}}}(q)}, \ k = 1,\cdots,r.
	\end{equation}
	The lift-the-exponent lemma indicates
	$$\mathrm{ord}_{p_{i_{k}}^{e_{i_{k}}-y_{i_{k}}}}(q) = \mathrm{ord}_{p_{i_{k}}}(q)\cdot p_{i_{k}}^{\mathrm{max}\{0,e_{i}-y_{i}-v_{p_{i}}(q^{\mathrm{ord}_{p_{i_{k}}}(q)}-1))\}},$$
	therefore the congruence equations \eqref{eq 7} is solvable if and only if 
	$$v_{2}(\mathrm{ord}_{p_{i_{1}}}(q)) = \cdots = v_{2}(\mathrm{ord}_{p_{i_{r}}}(q)) > 0.$$
\end{proof}

Without lossing generality, we assume that the odd primes $p_{1},\cdots,p_{s}$ are ranked in the way so that 
\begin{small}
	$$0 = v_{2}(\mathrm{ord}_{p_{1}}(q)) = \cdots v_{2}(\mathrm{ord}_{p_{s_{1}}}(q)) < v_{2}(\mathrm{ord}_{p_{s_{1}+1}}(q)) = \cdots = v_{2}(\mathrm{ord}_{p_{s_{2}}}(q)) < \cdots < v_{2}(\mathrm{ord}_{p_{s_{t-1}+1}}(q)) = \cdots = v_{2}(\mathrm{ord}_{p_{s_{t}}}(q)),$$
\end{small}
where $0 \leq s_{1}< \cdots <s_{t} = s$. For every $1 \leq k \leq t$, define a subset $\Sigma_{k}$ of $\Sigma(n;q)$ as
$$\Sigma_{k} = \{(y,x) \in \Sigma(n;q) \ | \ y_{i} = e_{i}, \ \forall 1 \leq i \leq s_{k-1} \ \mathrm{and} \ s_{k}+1 \leq i \leq s\}.$$
Further, we denote by $\Sigma^{\prime}(n;q) = \Sigma_{2}\sqcup \cdots \sqcup \Sigma_{t}$ and $\Sigma^{\prime\prime}(n;q) = \Sigma(n;q) \setminus \Sigma^{\prime}(n;q)$. Then Lemma \ref{lem 5} can be rephrased as that the index in $\Sigma^{\prime}(n;q)$ give rise to self-reciprocal irreducible factors of $X^{m}-1$, while the index in $\Sigma^{\prime\prime}(n;q)$ give rise to non-self-reciprocal irreducible factors.

\begin{theorem}\label{thm 8}
	Let the notations be defined as above. A cyclic code 
	$$C_{\Omega} = \left(\prod_{(y,x) \in \Sigma(n;q)}M_{\eta^{(y,x)},q}(X)^{\Omega(y,x)}\right)$$
	of length $m$ over $\mathbb{F}_{q}$ is self-dual if and only if $\Omega(y,x) = 2^{v-1}$ for all $(y,x) \in \Sigma^{\prime}(n;q)$, and $\Omega(y,x) = 2^{v} - \Omega(y^{\prime},x^{\prime})$ for all pairs $(y,x),(y^{\prime},x^{\prime}) \in \Sigma^{\prime\prime}(n;q)$ such that $-\eta^{(y,x)} \equiv \eta^{(y,x)}q^{j} \pmod{n}$ for some $j \in \mathbb{N}$. Consequently, all the self-dual cyclic codes of length $m$ over $\mathbb{F}_{q}$ are given by
	$$C_{\Omega} = \left(\prod_{(y,x) \in \Sigma^{\prime}(n;q)}M_{\eta^{(y,x)},q}(X)^{2^{v-1}} \times \prod_{(y,x) \in \Sigma^{\prime\prime}(n;q)/\sim}M_{\eta^{(y,x)},q}(X)^{\Omega(y,x)}M_{-\eta^{(y,x)},q}(X)^{2^{v}-\Omega(y,x)}\right),$$
	where the equivalence relation $\sim$ on $\Sigma^{\prime\prime}(n;q)$ is defined by $(y,x) \sim (y^{\prime},x^{\prime})$ if and only if $\pm\eta^{(y,x)} \equiv \eta^{(y,x)}q^{j} \pmod{n}$ for some $j \in \mathbb{N}$, and the polynomial $M_{-\eta^{(y,x)},q}(X)$ can be expressed as
	$$M_{-\eta^{(y,x)},q}(X) = \sum_{i=0}^{\omega_{y}}(-1)^{\omega_{y}-i}(\sum_{\substack{U\subseteq \{0,\cdots,\omega_{y}-1\}\\ |U|=\omega_{y}-i}}\prod_{u \in U}\zeta_{p_{1}^{e_{1}-y_{1}-M_{y_{1}}}\cdots p_{s}^{e_{s}-y_{s}-M_{y_{s}}}}^{-\eta_{1}^{x_{1}}\cdots\eta_{s}^{x_{s}}q^{u}})X^{p_{1}^{M_{y_{1}}}\cdots p_{1}^{M_{y_{s}}}\cdot i}.$$
\end{theorem}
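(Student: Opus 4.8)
The plan is to reduce self-duality to a multiplicity-matching condition under the reciprocal involution, using the generator-polynomial description of the dual recalled in Subsection~\ref{sec 2}. Since $q = 2^{e}$ has characteristic $2$ and $m = 2^{v}n$ with $n$ odd, one has $X^{m}-1 = (X^{n}-1)^{2^{v}}$, so by Theorem~\ref{thm 4} the irreducible factorization is $X^{m}-1 = \prod_{(y,x)\in\Sigma(n;q)} M_{\eta^{(y,x)},q}(X)^{2^{v}}$. For $C_{\Omega}=(f)$ with $f=\prod M_{\eta^{(y,x)},q}^{\Omega(y,x)}$, the complementary factor is $g=(X^{m}-1)/f = \prod M_{\eta^{(y,x)},q}^{2^{v}-\Omega(y,x)}$, and $C_{\Omega}^{\bot}=(g^{\ast})$.

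First I would record the elementary fact that the reciprocal of $M_{\gamma,q}(X)$ is $M_{-\gamma,q}(X)$: the roots of $M_{\gamma,q}$ are the powers $\zeta_{n}^{\gamma q^{j}}$, whose inverses $\zeta_{n}^{-\gamma q^{j}}$ are exactly the roots of $M_{-\gamma,q}$, and both polynomials are monic. Since the reciprocal operation is multiplicative on polynomials with nonzero constant term, this yields $g^{\ast} = \prod_{(y,x)} M_{-\eta^{(y,x)},q}^{2^{v}-\Omega(y,x)}$.

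The central step is to organize $\Sigma(n;q)$ under the involution $\iota$ sending $(y,x)$ to the unique index $(y^{\prime},x^{\prime})$ with $-\eta^{(y,x)}\equiv \eta^{(y^{\prime},x^{\prime})}q^{j}\pmod n$ for some $j$, i.e. $M_{-\eta^{(y,x)},q}=M_{\eta^{\iota(y,x)},q}$; such an index exists and is unique by Theorem~\ref{thm 3}, and $\iota^{2}=\mathrm{id}$. By Lemma~\ref{lem 5} the fixed points of $\iota$ are precisely the tuples giving self-reciprocal factors, namely $\Sigma^{\prime}(n;q)$, while $\Sigma^{\prime\prime}(n;q)$ is partitioned into two-element orbits, which is exactly the content of $\sim$. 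Reindexing the product for $g^{\ast}$ through $\iota$ expresses the exponent of $M_{\eta^{(y^{\prime},x^{\prime})},q}$ in $g^{\ast}$ as $2^{v}-\Omega(\iota(y^{\prime},x^{\prime}))$. As $f$ and $g^{\ast}$ are monic divisors of $X^{m}-1$, the equality of ideals $C_{\Omega}=C_{\Omega}^{\bot}$ amounts to $f=g^{\ast}$, which by unique factorization is equivalent to $\Omega(y,x)=2^{v}-\Omega(\iota(y,x))$ for every $(y,x)$. For a fixed point $(y,x)\in\Sigma^{\prime}(n;q)$ this forces $2\Omega(y,x)=2^{v}$, i.e. $\Omega(y,x)=2^{v-1}$ (using $v\geq 1$, consistent with the existence criterion for self-dual codes); for a two-element orbit $\{(y,x),(y^{\prime},x^{\prime})\}\subseteq\Sigma^{\prime\prime}(n;q)$ it collapses to the single relation $\Omega(y,x)+\Omega(y^{\prime},x^{\prime})=2^{v}$. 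Collecting these over all orbits yields the displayed parametrization, with $\Omega(y,x)$ free on one representative of each class in $\Sigma^{\prime\prime}(n;q)/\sim$ and determined on its partner.

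I expect the main obstacle to be the bookkeeping in the reindexing step: one must verify that $\iota$ is well defined on tuples (independent of the chosen coset representative), that it preserves the $y$-coordinates so reciprocal partners have equal degree, and that Lemma~\ref{lem 5} pins down its fixed-point set as $\Sigma^{\prime}(n;q)$ precisely as claimed. A useful consistency check to include is that the resulting $f$ has degree $m/2$: each reciprocal orbit in $\Sigma^{\prime\prime}(n;q)$ contributes $2^{v}\deg M_{\eta^{(y,x)},q}$ and each self-reciprocal factor contributes $2^{v-1}\deg M_{\eta^{(y,x)},q}$, summing to half of $\deg(X^{m}-1)=m$.
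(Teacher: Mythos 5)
Your proposal is correct and follows essentially the same route as the paper, whose proof is simply the one-line remark that the theorem follows by combining Theorem~\ref{thm 4} with Lemma~\ref{lem 5}; your write-up supplies exactly the details that combination requires (the identity $M_{\gamma,q}^{\ast}=M_{-\gamma,q}$, the reduction of $C_{\Omega}=C_{\Omega}^{\bot}$ to $f=g^{\ast}$, and the orbit analysis of the involution $\iota$ whose fixed-point set is $\Sigma^{\prime}(n;q)$ by Lemma~\ref{lem 5}).
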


\begin{proof}
	It is obtained by combining Theorem \ref{thm 4} and Lemma \ref{lem 5}.
\end{proof}

\begin{corollary}
	\begin{itemize}
		\item[(1)] $|\Sigma(n;q)| = \sum\limits_{0 \leq y_{1} \leq e_{1}}\cdots \sum\limits_{0 \leq y_{s} \leq e_{s}}\dfrac{\phi(p_{1}^{e_{1}-y_{1}})\cdots \phi(p_{s}^{e_{s}-y_{s}})}{\mathrm{lcm}(f_{1,y_{1}},\cdots,f_{s,y_{s}})}$.
		\item[(2)] For any $1\leq k\leq t$, 
		$$|\Sigma_{k}| = \sum_{0 \leq y_{s_{k-1}+1} \leq e_{s_{k-1}+1}}\cdots \sum_{0 \leq y_{s_{k}} \leq e_{s_{k}}}\dfrac{\phi(p_{s_{k-1}+1}^{e_{s_{k-1}+1}-y_{s_{k-1}+1}})\cdots \phi(p_{s_{k}}^{e_{s_{k}}-y_{s_{k}}})}{\mathrm{lcm}(f_{s_{k-1}+1,y_{s_{k-1}+1}},\cdots,f_{s_{k},y_{s_{k}}})}.$$
		\item[(3)] $|\Sigma^{\prime}(n;q)| = \sum\limits_{k=1}^{t}|\Sigma_{k}|$, and $|\Sigma^{\prime\prime}(n;q)|= |\Sigma(n;q)| - |\Sigma^{\prime}(n;q)|$.
		\item[(4)] There are in total $(2^{v}+1)^{\frac{|\Sigma^{\prime\prime}(n;q)|}{2}}$ self-dual cyclic codes of length $n$ over the finite field $\mathbb{F}_{q}$.
	\end{itemize}
\end{corollary}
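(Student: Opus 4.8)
The plan is to establish the four assertions in order, with parts (1) and (2) resting on a single telescoping identity, part (3) being pure bookkeeping, and part (4) extracting the count from the self-dual criterion of Theorem \ref{thm 8}.

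For (1), I would fix $y = (y_{1},\ldots,y_{s})$ and count the admissible $x$. By condition (2) in the definition of $\Sigma(n;q)$, the number of choices of $x$ for this $y$ is $\prod_{i=1}^{s} g_{i,y_{i}}\cdot\gcd(\mathrm{lcm}(f_{1,y_{1}},\ldots,f_{i-1,y_{i-1}}),f_{i,y_{i}})$. Writing $L_{i} = \mathrm{lcm}(f_{1,y_{1}},\ldots,f_{i,y_{i}})$ with $L_{0}=1$ and substituting $g_{i,y_{i}} = \phi(p_{i}^{e_{i}-y_{i}})/f_{i,y_{i}}$, the identity $\gcd(L_{i-1},f_{i,y_{i}})\cdot\mathrm{lcm}(L_{i-1},f_{i,y_{i}}) = L_{i-1}\,f_{i,y_{i}}$ together with $\mathrm{lcm}(L_{i-1},f_{i,y_{i}}) = L_{i}$ gives $\gcd(L_{i-1},f_{i,y_{i}})/f_{i,y_{i}} = L_{i-1}/L_{i}$. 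Hence the product telescopes to $\left(\prod_{i}\phi(p_{i}^{e_{i}-y_{i}})\right)/L_{s}$, and summing over all $y$ yields the stated formula, since $L_{s} = \mathrm{lcm}(f_{1,y_{1}},\ldots,f_{s,y_{s}})$.

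For (2), I would observe that for $(y,x) \in \Sigma_{k}$ every forced index $i \notin [s_{k-1}+1,s_{k}]$ has $y_{i}=e_{i}$, whence $f_{i,y_{i}} = \mathrm{ord}_{1}(q) = 1$, $\phi(p_{i}^{0})=1$, and the only admissible value is $x_{i}=0$; these indices therefore contribute a trivial factor. Moreover, for a block index $i$ the $\mathrm{lcm}$ appearing in condition (2) reduces to the $\mathrm{lcm}$ over the earlier block indices alone (the forced ones are $1$), so the counting collapses verbatim to the argument of (1) applied to the sub-product over the block $[s_{k-1}+1,s_{k}]$, and the same telescoping produces the claimed formula. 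Part (3) is then immediate: $\Sigma'(n;q)$ is by definition a disjoint union of the $\Sigma_{k}$, so its cardinality is the sum of the $|\Sigma_{k}|$, and $|\Sigma''(n;q)| = |\Sigma(n;q)| - |\Sigma'(n;q)|$ by complementary counting.

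The substantive step is (4), which I would deduce from Theorem \ref{thm 8}. That theorem exhibits every self-dual code as $\prod_{\Sigma'}M_{\eta^{(y,x)}}^{2^{v-1}}\cdot\prod_{\Sigma''/\sim}M_{\eta^{(y,x)}}^{\Omega}M_{-\eta^{(y,x)}}^{2^{v}-\Omega}$, so the factors indexed by $\Sigma'(n;q)$ are completely determined, while each class in $\Sigma''(n;q)/\!\sim$ admits the free choice $\Omega \in \{0,1,\ldots,2^{v}\}$, i.e.\ $2^{v}+1$ possibilities. It then remains to show $|\Sigma''(n;q)/\!\sim| = |\Sigma''(n;q)|/2$: the reciprocal involution $\gamma\mapsto-\gamma$ sends $M_{\gamma}$ to $M_{\gamma}^{\ast}$ and, by Lemma \ref{lem 5}, has no fixed point on $\Sigma''(n;q)$ precisely because the cosets indexed there are the non-self-reciprocal ones; hence $\sim$ is a fixed-point-free involution, $|\Sigma''(n;q)|$ is even, and there are exactly $|\Sigma''(n;q)|/2$ classes, each contributing a factor $2^{v}+1$, giving the total $(2^{v}+1)^{|\Sigma''(n;q)|/2}$. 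The main obstacle I anticipate is bookkeeping rather than conceptual: one must handle the degenerate coset $\{0\}$ (the all-$y_{i}=e_{i}$ tuple), which is self-reciprocal and formally lies in every $\Sigma_{k}$, so that the decomposition underlying (3) is genuinely disjoint and the forced-exponent $2^{v-1}$ in (4) is applied to it consistently.
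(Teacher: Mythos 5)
Your argument is correct and is essentially the paper's own proof: part (1) is exactly the paper's count of cosets for fixed $y$, with the identity $\prod_{i}\frac{\gcd(L_{i-1},f_{i,y_i})}{f_{i,y_i}}=\frac{1}{\mathrm{lcm}(f_{1,y_1},\dots,f_{s,y_s})}$ (which the paper simply asserts) made explicit via your telescoping, while (2) and (3) are the same reduction to the block $[s_{k-1}+1,s_k]$ and (4) is the same extraction from Theorem \ref{thm 8} together with the observation that $\gamma\mapsto-\gamma$ is a fixed-point-free involution on the cosets indexed by $\Sigma''(n;q)$. Your closing remark about the degenerate all-$y_i=e_i$ tuple is well taken --- it lies in every $\Sigma_k$, so the union defining $\Sigma'(n;q)$ is not literally disjoint and $\sum_k|\Sigma_k|$ overcounts it --- a bookkeeping point the paper's one-line proof glosses over entirely.
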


\begin{proof}
	For $(1)$, there are in total 
	\begin{align*}
		&\sum_{0 \leq y_{1} \leq e_{1}}\cdots \sum_{0 \leq y_{s} \leq e_{s}} \dfrac{\phi(p_{1}^{e_{1}-y_{1}})}{f_{1,y_{1}}} \dfrac{\phi(p_{2}^{e_{2}-y_{2}})}{f_{2,y_{2}}} \mathrm{gcd}(f_{1,y_{1}},f_{2,y_{2}}) \cdots  \dfrac{\phi(p_{s}^{e_{s}-y_{s}})}{f_{s,y_{s}}} \mathrm{gcd}(\mathrm{lcm}(f_{1,y_{1}},\cdots,f_{s-1,y_{s-1}}),f_{s,y_{s}}) \\
		&= \sum_{0 \leq y_{1} \leq e_{1}}\cdots \sum_{0 \leq y_{s} \leq e_{s}}\dfrac{\phi(p_{1}^{e_{1}-y_{1}})\cdots \phi(p_{s}^{e_{s}-y_{s}})}{\mathrm{lcm}(f_{1,y_{1}},\cdots,f_{s,y_{s}})}
	\end{align*}
	$q$-cyclotomic cosets modulo $n$. That is the order of $\Sigma(n;q)$.
	
	The proof of Conclusion $(2)$ is similar, and $(3)$ is clear. Finally, $(4)$ follows from Theorem \ref{thm 8}.
\end{proof}

\section*{Acknowledgment}
This work was supported by Natural Science Foundation of Beijing Municipal(M23017). 

\section*{Data availability}
Data sharing not applicable to this article as no datasets were generated or analysed during the current study.

\section*{Declaration of competing interest}
The authors declare that we have no known competing financial interests or personal relationships that 
could have appeared to influence the work reported in this paper.


\begin{thebibliography}{99}  
\bibitem{Chen} B. Chen, H. Dinh, H. Liu: Repeated-root constacyclic codes of length $\ell p^{s}$ and their duals. Discrete Appl. Math., 177 (2014): 60-70. 
	
\bibitem{Chen 2} B. Chen, H. Dinh, H. Liu: Repeated-root constacyclic codes of length $2\ell^{m}p^{s}$. Finite Fields and Their Applications, 33(2015): 137-159. 
	
\bibitem{Chen 3} B. Chen, S. Ling, G. Zhang: Enumeration formulas for self-dual cyclic codes. Finite Field Appl., 42(2016): 1-22.
	
\bibitem{Graner} A. Graner: Closed formulas for the factorization of $X^{n}-1$, the $n$-th cyclotomic polynomials, $X^{n}-a$ and $f(X)$ over a finite field for arbitrary positive integer $n$. arXiv preprint arXiv:2306.11183v2.
	
\bibitem{Geraci} J. Geraci, F.V. Bussel: A Note on Cyclotomic Cosets, an Algorithm for finding Coset Representatives and Size, and 
a Theorem on the Quantum evaluation of Weight Enumerators for a Certain Class of Cyclic Codes.  arXiv preprint arXiv:0703129.
	
\bibitem{Hardy} G. Hardy, E. Wright: An introduction to the theory of numbers, 5 Eds.. Oxford: Clarendon Press, 1984.
	
\bibitem{Liu} L. Liu, L. Li, X. Kai, S. Zhu: Reapeated-root constacylic codes of length $3\ell p^{s}$ and their dual codes. 
Finite Fields and Their Applications, 42(2016): 269-295.
	
\bibitem{Liu 2} L. Liu, L. Li, L. Wang, S. Zhu: Reapeated-root Constacylic Codes of Length $n\ell p^{s}$. Discrete Mathematics, 
340(2017): 2250-2261.

\bibitem{Nezami} S. Nezami: Leme Do Khat (in English: Lifting The Exponent Lemma), pulished on Oct., 2009. 
	
\bibitem{Sharma} A. Sharma: Repeated-root constacyclic codes of length $\ell^{t}p^{s}$ and their dual codes. Cryptogr.	
Commun., 7(2015): 229-255.
	
\bibitem{Sharma 2} A. Sharma, S. Rani: Repeated-root constacyclic codes of length $4\ell^{m}p^{n}$. 
Finite Fields and Their Applications, 40(2016): 163-200.
	
\bibitem{Wan} Z. Wan: Lectures on Finite Fields and Galois Rings. World Scientific, Singapore (2003).
	
\bibitem{Wang} D. Wang, X. Cao, J. Mi: Some arithmetical properties of cyclotomic cosets and their applications. Discrete Mathematics, 
343(2020): 111971.
	
\bibitem{Wu} H. Wu, L. Zhu, Rongquan Feng, Siman Yang: Explicit factorizations of cyclotomic polynomials over finite fields. 
Des. Codes Cryptogr., 83, 197-217 (2017).

\bibitem{Wu 2} H. Wu, L. Zhu: Repeated-root constacyclic codes of length $p_{1}p_{2}^{t}p^{s}$ and their dual codes. 
AIMS Mathematics, 8(6): 12793-12818.

\bibitem{Yue} D. Yue, Z. Hu: A new method for the calculation of the leader elements of cyclotomic cosets. Information Security 
and Communications Privacy, 1(1995): 28-32.    

\bibitem{Zhu} L. Zhu, J. Zhou, J. Liu, H. Wu: The Multiple Equal-Difference Structure of Cyclotomic Cosets. arXiv: 2501.03516.
\end{thebibliography}
\end{document}